\documentclass[11pt]{article}
\usepackage{amsfonts,amsmath,amsthm,amssymb,hyperref,cleveref}

\usepackage{comment, enumerate}
\usepackage[margin=1in]{geometry}
\usepackage{soul,color}
\usepackage{mathrsfs}
\usepackage{mathdots}
\usepackage[linesnumbered,boxed,ruled,vlined]{algorithm2e}

\usepackage{bbm}

\usepackage{url}

\usepackage{tabu}

\usepackage{xfrac}

\usepackage[normalem]{ulem}

\newcommand{\poly}{\mathop{\mathrm{poly}}}

\newcommand{\R}{\mathbb{R}}
\newcommand{\Z}{\mathbb{Z}}

\newcommand{\eps}{\varepsilon}

\theoremstyle{plain}\newtheorem{theorem}{Theorem}[section]
\newtheorem{lemma}[theorem]{Lemma}
\newtheorem{proposition}[theorem]{Proposition}
\newtheorem{corollary}[theorem]{Corollary}

\newtheorem{fact}[theorem]{Fact}
\theoremstyle{definition}
\newtheorem{definition}[theorem]{Definition}
\newtheorem{remark}[theorem]{Remark}
\newtheorem{problem}[theorem]{Problem}

\usepackage[utf8]{inputenc}

\title{Optimal-Degree Polynomial Approximations for Exponentials and Gaussian Kernel Density Estimation} 

\author{Amol Aggarwal\footnote{Department of Mathematics at Columbia University and Institute for Advanced Study. \url{amolaggarwal@math.columbia.edu}. Partially supported by NSF grants DGE-1144152 and DMS-1664619, a Harvard Merit/Graduate Society Term-time Research Fellowship, and a Clay Research Fellowship.
} \and Josh Alman\footnote{Department of Computer Science at Columbia University. \url{josh@cs.columbia.edu}. Partially supported by a Harvard Michael O. Rabin postdoctoral fellowship.}}

\begin{document}

\maketitle

\begin{abstract}
For any real numbers $B \ge 1$ and $\delta \in (0, 1)$ and function $f: [0, B] \rightarrow \mathbb{R}$, let $d_{B; \delta} (f) \in \mathbb{Z}_{> 0}$ denote the minimum degree of a polynomial $p(x)$ satisfying $\big| p(x) - f(x) \big| < \delta$ for each $x \in [0, B]$. In this paper, we provide precise asymptotics for $d_{B; \delta} (e^{-x})$ and $d_{B; \delta} (e^{x})$ in terms of both $B$ and $\delta$, improving both the previously known upper bounds and lower bounds. In particular, we show that

$$d_{B; \delta} (e^{-x}) = \Theta\left( \max \left\{ \sqrt{B \log(\delta^{-1})}, \frac{\log(\delta^{-1}) }{ \log(B^{-1} \log(\delta^{-1}))} \right\}\right), \text{ and}$$

$$d_{B; \delta} (e^{x}) = \Theta\left( \max \left\{ B, \frac{\log(\delta^{-1}) }{ \log(B^{-1} \log(\delta^{-1}))} \right\}\right),$$

\noindent and we explicitly determine the leading coefficients in most parameter regimes.

Polynomial approximations for $e^{-x}$ and $e^x$ have applications to the design of algorithms for many problems, including in scientific computing, graph algorithms, machine learning, and statistics. Our degree bounds show both the power and limitations of these algorithms.

We focus in particular on the Batch Gaussian Kernel Density Estimation problem for $n$ sample points in $\Theta(\log n)$ dimensions with error $\delta = n^{-\Theta(1)}$. We show that the running time one can achieve depends on the square of the diameter of the point set, $B$, with a transition at $B = \Theta(\log n)$ mirroring the corresponding transition in $d_{B; \delta} (e^{-x})$:
\begin{itemize}
    \item When $B=o(\log n)$, we give the first algorithm running in time $n^{1 + o(1)}$. 
    \item When $B = \kappa \log n$ for a small constant $\kappa>0$, we give an algorithm running in time $n^{1 + O(\log \log \kappa^{-1} /\log \kappa^{-1})}$. The $\log \log \kappa^{-1} /\log \kappa^{-1}$ term in the exponent comes from analyzing the behavior of the leading constant in our computation of $d_{B; \delta} (e^{-x})$.
    \item When $B = \omega(\log n)$, we show that time $n^{2 - o(1)}$ is necessary assuming SETH.
\end{itemize}

\end{abstract}
\thispagestyle{empty}
\newpage
\setcounter{page}{1}

\section{Introduction}

Polynomial approximations of important functions play a key role in many areas of computer science and mathematics. We measure the extent to which a function can be approximated by a degree $d$ polynomial as follows.

\begin{definition}

\label{dbdeltaf}

For any real numbers $B \ge 1$ and $\delta \in (0, 1)$, and function $f: [0, B] \rightarrow \mathbb{R}$, let $d_{B; \delta} (f) \in \mathbb{Z}_{> 0}$ denote the minimum degree of a non-constant polynomial $p(x)$ satisfying
\begin{flalign*}
\displaystyle\sup_{x \in [0, B]} \big| p(x) - f(x) \big| < \delta.
\end{flalign*}

\end{definition}

Past work in polynomial approximation theory has typically focused on the case when $B = O(1)$; see, for example, \cite[Chapter 7]{TAFRV}. However, recent computer science applications have motivated studying the setting where both $B$ and $\delta^{-1}$ are growing simultaneously. Indeed, in algorithmic applications, both the magnitude of the input to the function $f$ and the tolerance for error can scale with the size of the input to the problem.

In this paper, we focus specifically\footnote{We mention, however, that the method used in this paper is quite general and is expected to more broadly apply for functions $f$ whose Taylor series coefficients decay sufficiently quickly.} on the functions $e^x$ and $e^{-x}$. As we will discuss more shortly, polynomial approximations for these functions appear naturally in computational problems throughout scientific computing, graph algorithms, machine learning, statistics, and many other areas.
Precisely determining $d_{B; \delta} (e^{-x})$ and $d_{B; \delta} (e^{x})$ is particularly important since in a number of algorithmic applications, such as the batch Gaussian Kernel Density Estimation that we discuss in Section~\ref{section:gKDE} below, these quantities appear in the exponent of the input size in the running time. In these settings, logarithmic or even constant factors can be the difference between a fast or a trivially slow running time (see especially Sections~\ref{subsec:1} and \ref{subsec:3} below). The standard framework of approximation theory (e.g.,~\cite{Pow,TAFRV,AT}) can be used to deduce bounds on $d_{B; \delta}$ that are typically suboptimal, often losing (at least) such logarithmic factors, especially in the regime when $B$ is large.

Our main results are tight asymptotics, including the exact leading constant in most parameter regimes (see \Cref{rem6} below), for both $d_{B; \delta} (e^{-x})$ and $d_{B; \delta} (e^{x})$. 

In what follows, we define the function 
\begin{flalign}
\label{gx} 
G (x) = \sqrt{x^2 + 1} + x \log \big( \sqrt{x^2 + 1} - x \big),
\end{flalign} 

\noindent for each $x \in \mathbb{R}_{\ge 0}$. 

\begin{theorem}[Approximate degree of $e^{-x}$] 

\label{estimatedegree2}

Let $B \ge 1$ and $\delta \in (0, 1)$. Then, $$d_{B; \delta} (e^{-x}) = \Theta\left( \max \left\{ \sqrt{B \log(\delta^{-1})}, \frac{\log(\delta^{-1}) }{ \log(B^{-1} \log(\delta^{-1}))} \right\}\right).$$ More precisely, we have the following asymptotics as $B + \delta^{-1}$ tends to $\infty$.

\begin{enumerate}

{\item If $B = o \big( \log ( \delta^{-1} ) \big)$, then $d_{B; \delta} (e^{-x})  =  \bigg( \displaystyle\frac{\log \big( \delta^{-1} \big)}{\log \big( B^{-1} \log (\delta^{-1}) \big)} \bigg) \big( 1 + o (1) \big)$.}

{\item If $B = 2 r \log ( \delta^{-1})$ for fixed $r > 0$, then $d_{B; \delta} (e^{-x}) = \big( \nu r + o(1) \big) \log (\delta^{-1})$, where $\nu = \nu (r) > 0$ is the unique positive solution\footnote{The uniqueness of this solution (and the ones to be mentioned below) follows from the facts that $G (0) = 1$, $\lim_{z \rightarrow \infty} G(z) = -\infty$, and $G' (z) < 0$ for $z > 0$.} to the equation $G (\nu) = 1 - r^{-1}$.}

{\item If $B = \omega \big( \log ( \delta^{-1} ) \big)$ and $B \le \delta^{-o(1)}$, then $d_{B; \delta} (e^{-x}) = \big( 1 + o(1) \big) \sqrt{B \log (\delta^{-1})}$.}

{\item If $B \ge \delta^{- \Omega (1)}$, then $d_{B; \delta} (e^{-x}) = \Theta \left( \sqrt{B \log \big( \delta^{-1} \big)} \right)$.}

\end{enumerate}

\end{theorem}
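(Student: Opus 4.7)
The plan is to analyze the Chebyshev expansion of $e^{-x}$ on $[0, B]$. After the affine substitution $u = 2x/B - 1$ mapping $[0, B]$ onto $[-1, 1]$, the problem reduces to bounding the best degree-$d$ uniform polynomial approximation error $E_d(\tilde{f})$ on $[-1,1]$ for $\tilde{f}(u) := e^{-B(u+1)/2}$. The decisive observation is that the Chebyshev coefficients of $\tilde{f}$ are available in closed form: applying the Jacobi--Anger generating function $e^{z\cos\theta} = I_0(z) + 2 \sum_{k \ge 1} I_k(z) \cos(k\theta)$ with $z = -B/2$ (together with $I_k(-z) = (-1)^k I_k(z)$) yields
\[
c_0 = e^{-B/2} I_0(B/2), \qquad c_k = 2(-1)^k e^{-B/2} I_k(B/2) \quad (k \ge 1),
\]
where $I_k$ is the modified Bessel function of the first kind. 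Both the upper and lower bounds on $d_{B;\delta}(e^{-x})$ therefore reduce to sharp estimates on Bessel tails relative to $e^{B/2}$.

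For the upper bound, I would take $p$ to be the truncated Chebyshev series $\sum_{k=0}^d c_k T_k(u)$, so the error is at most $\sum_{k > d} |c_k| = 2 e^{-B/2} \sum_{k > d} I_k(B/2)$. The magnitude of $e^{-B/2} I_k(B/2)$ depends sharply on the ratio $\nu := 2k/B$, and the four cases of the theorem correspond to the three asymptotic regimes of $I_k(B/2)$ plus a boundary case. When $\nu \to \infty$, the power series for $I_k$ combined with Stirling's formula gives $I_k(B/2) \sim (eB/(4k))^k / \sqrt{2\pi k}$, yielding Case 1. When $\nu$ is bounded and bounded away from $0$, Debye's uniform asymptotic expansion, obtained via a saddle-point analysis of the contour integral $I_k(z) = (2\pi i)^{-1} \oint e^{(z/2)(t+1/t)} t^{-k-1} \, dt$, gives
\[
e^{-B/2} I_k(B/2) \;=\; \frac{(1 + o(1))}{\sqrt{\pi B \nu}\, (1 + 1/\nu^2)^{1/4}} \cdot e^{(B/2)(G(\nu) - 1)},
\]
where $G$ is precisely the function from \eqref{gx}, arising as the leading exponent at the saddle point. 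Setting the right-hand side equal to $\delta$ and solving gives $G(\nu) = 1 - 1/r$, hence $d \sim \nu r \log(\delta^{-1})$, which is Case 2. Case 3 is the $\nu \to 0$ limit, where the Taylor expansion $G(\nu) = 1 - \nu^2/2 + O(\nu^4)$ collapses the formula to $d \sim \sqrt{B\log(\delta^{-1})}$, and Case 4 follows from the same analysis at a cruder $\Theta(\cdot)$-level accuracy.

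For the matching lower bound, I would use the dual inequality $E_d(\tilde{f}) \ge |c_{d+1}|/2$, which follows because $|c_k(h)| \le 2\|h\|_\infty$ for the Chebyshev coefficients of any $h$ (since $|T_k| \le 1$ and $\int_{-1}^1 (1-u^2)^{-1/2} \, du = \pi$), applied to $h = \tilde{f} - p$ with the orthogonality of the Chebyshev polynomials canceling the contribution of $p$. Because the Bessel tail is dominated by its first term $I_{d+1}(B/2)$ once $d$ passes the transition threshold, applying the same asymptotics to $|c_{d+1}|$ matches the upper bound to within $(1+o(1))$; the factor of $1/2$ is absorbed into $o(1)$ because $\log 2$ is lower order than the main term (which tends to infinity). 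The main obstacle will be establishing the Debye-type uniform expansion with sufficient uniformity in $\nu$ so that the leading constant survives the transitions between cases---especially as $\nu \to 0$ (between Cases 2 and 3) and $\nu \to \infty$ (between Cases 1 and 2). This requires careful bookkeeping of both the algebraic prefactor and the saddle-point error term to ensure the $o(1)$ estimates hold uniformly over the relevant parameter ranges, not merely pointwise in $\nu$.
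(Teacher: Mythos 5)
Your identification of the Chebyshev coefficients as modified Bessel functions via Jacobi--Anger (equivalently, $A_{v,\lambda}=2(-1)^v e^{-\lambda}I_v(\lambda)$, matching \eqref{avlambdadefinition}) is correct and clean: the Debye uniform asymptotics for $I_v$ then produce exactly the rate function $G$ in \eqref{gx}, so one can cite classical Bessel asymptotics instead of redoing the discrete saddle-point analysis of \Cref{estimateevlambda}. This is essentially the same strategy as the paper's (Chebyshev expansion, coefficient asymptotics, upper bound by truncation, lower bound via the size of a single coefficient), just with the coefficient estimates outsourced to the Bessel literature, and for Cases 1--3 it works and matches the paper.

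However, there is a genuine gap in Case 4, which you dismiss as ``the same analysis at a cruder $\Theta(\cdot)$-level.'' The coefficient-based lower bound $E_d(\tilde f)\ge |c_{d+1}|/2$ fails here. When $B\ge\delta^{-\Omega(1)}$, the Debye asymptotic carries a polynomial prefactor $\sim\lambda^{-1/2}$ in front of $e^{\lambda(G(\nu)-1)}$, so with $d$ around $\sqrt{B\log\delta^{-1}}$ (hence $\nu=d/\lambda=o(1)$) one gets $|c_{d}|\approx\lambda^{-1/2}\,e^{-d^2/(2\lambda)}\approx\lambda^{-1/2}\delta^{\Theta(1)}$. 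For $\lambda=B/2\ge\delta^{-\Omega(1)}$ the factor $\lambda^{-1/2}$ overwhelms the $\delta^{\Theta(1)}$ term, and if $B$ is large enough (e.g.\ $B\ge\delta^{-2}$), \emph{every} Chebyshev coefficient is already $\lesssim\delta$, so the coefficient lower bound is vacuous. This is precisely why the paper switches to a different argument in this regime (see the remarks just before \Cref{sec:prelims} and \Cref{degreeestimateb}): any $p$ approximating $e^{-x}$ within $\delta$ on $[0,B]$ must have $|p|\le 2\delta$ on $[\log\delta^{-1},B]$ yet $p(0)\ge 1-\delta$, and by the Chebyshev extremal property (\Cref{extremalcheby} combined with \Cref{chebyasympt}) such growth forces $\deg p=\Omega\big(\sqrt{B\log\delta^{-1}}\big)$. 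You would need to add an argument of this kind to close Case 4.
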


\begin{theorem}[Approximate degree of $e^{x}$] 

\label{estimatedegree1}

Let $B \ge 1$ and $\delta \in (0, 1)$. Then,
$$d_{B; \delta} (e^{x}) = \Theta\left( \max \left\{ B, \frac{\log(\delta^{-1}) }{ \log(B^{-1} \log(\delta^{-1}))} \right\}\right).$$ More precisely, we have the following asymptotics as $B + \delta^{-1}$ tends to $\infty$.

\begin{enumerate}

{\item If $B = o \big( \log ( \delta^{-1}) \big)$, then $d_{B; \delta} (e^x)  =  \bigg( \displaystyle\frac{\log \big( \delta^{-1} \big)}{\log \big( B^{-1} \log (\delta^{-1}) \big)} \bigg) \big( 1 + o (1) \big)$.}

{\item If $B = 2 r \log ( \delta^{-1})$ for fixed $r > 0$, then $d_{B; \delta} (e^x) = \big( \mu r + o(1) \big) \log (\delta^{-1})$, where $\mu = \mu (r) > 0$ is the unique positive solution to the equation $G (\mu) = -1 - r^{-1}$.}

{\item If $B = \omega \big( \log \big( \delta^{-1}) \big)$, then $d_{B; \delta} (e^x)  = \displaystyle\frac{z_*B}{2} \big( 1 + o(1) \big)$, where $z_* \approx 2.2334$ denotes the unique positive solution to the equation $G (z_*) = -1$.}

\end{enumerate}

\end{theorem}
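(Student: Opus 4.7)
The plan is to mirror the proof of \Cref{estimatedegree2} for $d_{B;\delta}(e^{-x})$, adapted to the fact that $e^x$ grows as large as $e^B$ on $[0,B]$ while $e^{-x}$ stays bounded by $1$. Intuitively, a degree-$d$ polynomial $p$ with $|p(x)-e^x|<\delta$ on $[0,B]$ must approximate $e^x$ to relative error $\delta e^{-B}$ near $x=B$; this extra factor of $e^{-B}$ in the ``effective tolerance'' is exactly what shifts the defining equation from $G(\nu)=1-r^{-1}$ (for $e^{-x}$) to $G(\mu)=-1-r^{-1}$ (for $e^x$).

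For the upper bound, I would first rescale $[0,B]$ to $[-1,1]$ by $x=B(1+t)/2$, writing $e^x=e^{B/2}e^{(B/2)t}$, and then invoke the Jacobi--Anger / modified Bessel expansion $e^{\alpha t}=I_0(\alpha)+2\sum_{k\ge 1}I_k(\alpha)T_k(t)$. Truncating at degree $d$ yields a polynomial whose uniform error on $[0,B]$ is at most $2e^{B/2}\sum_{k>d}I_k(B/2)$. The crux is then the Debye saddle-point asymptotic $\log I_k(\alpha)\sim\alpha\,G(k/\alpha)$, where the same function $G$ as in~\eqref{gx} appears because the identity $\log(\sqrt{z^2+1}-z)=-\sinh^{-1}(z)$ lets one rewrite $G(z)=\sqrt{z^2+1}-z\sinh^{-1}(z)$, which is precisely the Debye exponent. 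Under the substitution $B=2r\log\delta^{-1}$, the condition ``error $<\delta$'' reduces, after dominating the tail by its leading term, to $G(d/(r\log\delta^{-1}))<-1-r^{-1}+o(1)$, whose smallest solution is $d=(\mu r+o(1))\log\delta^{-1}$ for $\mu$ satisfying $G(\mu)=-1-r^{-1}$; this is regime~2. Regimes~1 and~3 then emerge by taking $r\to 0$ and $r\to\infty$ in this same formula: $r\to\infty$ collapses $G(\mu)=-1-r^{-1}$ to $G(z_*)=-1$ and gives $d\sim(z_*/2)B$, while $r\to 0$ forces $\mu\to\infty$ with $G(\mu)\sim-\mu\log(2\mu)$, yielding $d\sim\log(\delta^{-1})/\log(B^{-1}\log\delta^{-1})$.

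For the lower bound I would again mirror the $e^{-x}$ argument via duality. By the equioscillation property of Chebyshev polynomials, there is an explicit signed measure $\sigma$ on $[0,B]$ of total variation $O(1)$ that annihilates every polynomial of degree $\le d$; if $p$ has degree $d$ and $|p(x)-e^x|<\delta$ on $[0,B]$, this forces $\bigl|\int e^x\,d\sigma\bigr|=O(\delta)$. Expanding $e^x$ in the Chebyshev basis and isolating its $(d+1)$st coefficient, which is proportional to $e^{B/2}I_{d+1}(B/2)$, reduces the problem to the same Debye exponent estimate used for the upper bound and therefore produces a matching lower bound with the same $G$.

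The hard part will be uniformly handling the Bessel-function asymptotics across the transition $B\asymp\log\delta^{-1}$, and in regime~3 rigorously isolating the leading constant $z_*$: because $z_*$ is defined implicitly by $G(z_*)=-1$, confirming the exact coefficient $z_*/2$ in $d\sim(z_*/2)B$ requires careful matched asymptotics to rule out lower-order savings in the Chebyshev truncation on the upper-bound side, together with a matching tightness check for the dual measure on the lower-bound side. A secondary challenge is verifying that the unified regime-2 formula $G(\mu)=-1-r^{-1}$ really specializes, with the stated $o(1)$ error terms, to the three asymptotic forms listed in the theorem.
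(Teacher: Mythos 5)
Your proposal is essentially the same as the paper's proof, repackaged through classical special-function facts: the paper's Chebyshev coefficients $B_{v,\lambda}$ from \eqref{avlambdadefinition} are exactly $2e^{\lambda}I_v(\lambda)$ for the modified Bessel function $I_v$, so the Jacobi--Anger expansion you invoke is precisely \Cref{amlambda}, and the Debye asymptotics $\log I_v(\lambda)\sim \lambda G(v/\lambda)$ you cite (using $G(z)=\sqrt{z^2+1}-z\sinh^{-1}(z)$) are exactly the content of the paper's self-contained saddle-point analysis in \Cref{estimateevlambda} and \Cref{avlambdalimit2}. Your lower-bound route via a dual measure of bounded total variation annihilating $\mathcal{P}_d$ (e.g. $d\sigma=\tfrac{2}{\pi}T_{d+1}(t)(1-t^2)^{-1/2}\,dt$) is a marginally sharper packaging of the $L^2$-orthogonality argument in \Cref{ajcj} (it avoids the $\sqrt{d}$ loss there, which is harmless anyway against the $(\lambda+d)^{O(1)}$ slack), and your reading of regimes 1 and 3 as the $r\to 0$ and $r\to\infty$ limits of the regime-2 equation $G(\mu)=-1-r^{-1}$ is exactly how the paper's case analysis via \Cref{psivlambdakappa} proceeds. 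The issues you flag (uniformity across $B\asymp\log\delta^{-1}$; tightness of the tail truncation) are real but are handled in the paper by tracking explicit $(\lambda+d)^{O(1)}$ multiplicative errors throughout \Cref{pexponential2}.
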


\begin{remark}
Polynomials achieving the degree upper bounds stated in Theorems~\ref{estimatedegree2} and \ref{estimatedegree1} can be constructed in $\poly(d)$ time, with coefficients which are rational numbers with $\poly(d)$-bit integer numerators and denominators, where $d$ is the degree.
\end{remark}

\begin{remark} \label{rem6}

In the fourth case of \Cref{estimatedegree2}, we do not determine the leading constant $A = A(B; \delta)$ for which $d_{B; \delta} (e^{-x}) = \big( A + o(1) \big) \sqrt{B \log \big( \delta^{-1} \big)}$; as we will see below, when $\delta = o(1)$, we only bound it between $\frac{1}{2} \leq A \leq 1$. It is unclear to us whether or not this constant would admit a concise description in this parameter regime, especially in the case when $\delta$ is fixed as $B$ tends to $\infty$. 
In all other parameter regimes of \Cref{estimatedegree2}, and in every case of \Cref{estimatedegree1}, we determine the exact leading constant.\footnote{It is quickly verified that the constants $\nu$, $\mu$, and $z_*$ from \Cref{estimatedegree2} and \Cref{estimatedegree1} satisfy $2r^{-1/2} \le \nu \le \max \big\{ r^{-1}, e \big\}$ and $z_* \le \mu \le \max \big\{ r^{-1}, e \big\}$ for all $r>0$.}

\end{remark} 

 \textbf{Previous bounds.} The question of providing tight bounds on $d_{B; \delta} (e^{-x})$ was posed in works of Orecchia, Sachdeva, and Vishnoi \cite[Sections 4 and 7]{orecchia2012approximating}, and Sachdeva and Vishnoi \cite[Section 5]{sachdeva2014faster}. They were motivated by algorithmic applications, as \cite{orecchia2012approximating} showed how upper bounds on $d_{B; \delta} (e^{-x})$ can be used to design faster algorithms for the Balanced Separator problem from spectral graph theory. They gave an upper bound of $d_{B; \delta} (e^{-x}) \leq O(\sqrt{\max \{ \log (\delta^{-1}), B \}} \cdot \log^{3/2} (\delta^{-1}))$, and a lower bound of $d_{B; \delta} (e^{-x}) \geq \frac12 \sqrt{B}$. Later, \cite{sachdeva2014faster} improved the upper bound to $d_{B; \delta} (e^{-x}) \leq O(\sqrt{\max \{ \log (\delta^{-1}), B \}} \cdot \log^{1/2} (\delta^{-1}))$ (noting that such a bound was also implicit in \cite{hochbruck1997krylov}). Theorem~\ref{estimatedegree2} provides precise asymptotics for $d_{B; \delta} (e^{-x})$, thereby answering the above question.
 
 In particular, Theorem~\ref{estimatedegree2} shows that the prior upper bound could be improved by a logarithmic factor in some parameter regimes, but was otherwise asymptotically tight. For the Balanced Separator problem studied by~\cite{orecchia2012approximating}, where the running time depends polynomially on~$d_{B; \delta} (e^{-x})$, this rules out a big improvement without a new approach. For other applications where the running time has an exponential dependence on $d_{B; \delta} (e^{-x})$, our improvements have more significant implications. For instance, as we discuss below in Section~\ref{subsec:1}, in some parameter regimes of the batch Gaussian Kernel Density Estimation problem, our Theorem~\ref{estimatedegree2} yields a near linear time algorithm, whereas applying instead the prior bound of~\cite{sachdeva2014faster} would only yield a trivial quadratic running time.
 
 We are unaware of prior work which specifically bounded $d_{B; \delta} (e^{x})$, although one could apply standard results on Chebyshev interpolation (such as~\cite[{Theorem 8.2}]{AT}) with some work to yield a bound $d_{B; \delta} (e^{x}) \geq \Omega(\max\{B, \log(\delta^{-1}) \})$.\\

 \textbf{Phase transitions.} In fact, Theorem \ref{estimatedegree2} and Theorem \ref{estimatedegree1} indicate that the dependence of the optimal degrees $d_{B; \delta} (e^{-x})$ and $d_{B; \delta} (e^{x})$ on the parameters $B$ and $\delta$ is quite intricate. First, their orders of magnitudes both exhibit transitions depending on the relative sizes of $B$ and $\log (\delta^{-1})$. For example, when $B = \omega \big( \log (\delta^{-1}) \big)$, \Cref{estimatedegree2} shows that $d_{B; \delta} (e^{-x})$ exhibits square root dependence on both $\log (\delta^{-1})$ and $B$, but when $B = o \big( \log (\delta^{-1}) \big)$ it exhibits nearly linear dependence on $\log (\delta^{-1})$ and only logarithmic dependence on $B$. Second, in the ``critical regime'' $B = 2r \log (\delta^{-1})$, \Cref{estimatedegree2} shows that $d_{B; \delta} (e^{-x}) = \Theta \big( \log (\delta^{-1}) \big)$, whose implicit constant is obtained by solving the transcendental equation $G(z) = 1 - r^{-1}$. A similar transition (with a transcendental leading constant in the critical regime) is shown for the approximating degrees of $e^x$ in \Cref{estimatedegree1}, but with the qualitative difference that $d_{B; \delta} (e^x)$ is linear in $B$ (to leading order) and independent of $\delta$, for $B = \omega \big( \log (\delta^{-1}) \big)$. 

To our knowledge, this is the first appearance of a transition arising when one simultaneously scales $B$ and $\delta$ in the context of polynomial approximation theory. Indeed, as mentioned previously, prior works in this direction typically analyzed the case $B = O (1)$, where transitions like these are not visible. As we will explain in \Cref{Algorithm} below, these behaviors for $d_{B; \delta} (e^{-x})$ and  $d_{B; \delta} (e^{x})$ will have algorithmic interpretations. For example, we will see that the estimates on  $d_{B; \delta} (e^{-x})$ provided in \Cref{estimatedegree2} imply a fine-grained computational phase transition for Gaussian Kernel Density Estimation in certain parameter regimes. \\

\textbf{Previous methods.} In the theoretical computer science literature, proofs of upper and lower bounds on the approximate degree $d_{B; \delta} (f)$ of a function $f$ had been typically based on two distinct arguments \cite{nisan1994degree,shi2002approximating, aaronson2004quantum, ambainis2005polynomial, orecchia2012approximating, sachdeva2014faster, bun2015dual}.  Upper bounds were often shown by providing an explicit polynomial approximation for $f$, usually given by (a truncation of) the expansion of $f$ in the basis of Chebyshev polynomials. Lower bounds were typically shown by making use of an estimate, such as Markov Brothers' inequality, that constrains the maximum derivative of a bounded polynomial in terms of its degree. Both ideas are archetypes of classical approximation theory; see \cite[{Chapters 2 and 4}]{TAFRV}. \\

 \textbf{Our methods.} As above, to upper bound $d_{B; \delta} (f)$ we will explicitly provide an approximating polynomial for $f$, obtained from the Chebyshev expansion of its rescale $f_B (x) = f \big( \frac{B}{2} (1 - x) \big)$ (whose domain is now $[-1, 1]$). However, derivative-degree estimates such as Markov's inequality that prior works used to lower bound $d_{B; \delta} (f)$ usually become insensitive to the tolerance parameter $\delta$ once it passes below a (typically non-optimal) threshold. Thus, they will not suffice for our purposes of pinpointing the precise asymptotic behavior of $d_{B; \delta} (f)$.

We therefore proceed differently, by instead again making use of the Chebyshev expansion of $f_B (x)$. In particular, we use the orthogonality of the Chebsyhev polynomials to lower bound the minimal distance from $f_B$ to a polynomial of degree $d$ in terms of the series coefficients of $f_B$ when expanded in the Chebyshev basis; see \Cref{ajcj} below. Thus, bounds on these series coefficients can be used to bound $d_{B; \delta} (f)$. This idea was also ubiquitous in the traditional theory and practice of approximating polynomials; for instance, it was very fruitful in proving the classical sharp estimates \cite[{Chapter 7.8 (22)}]{TAFRV} on $d_{B; \delta} (e^{-x})$ and $d_{B; \delta} (e^{-x})$ when $B = O(1)$. 

However, to our understanding, this idea has not been implemented before in our context where $B$ and $\delta$ scale jointly (either in the computer science or approximation theory literature). In this setting, we must study the limiting behaviors for the high-degree coefficients in the Chebyshev expansion $f_B$, simultaneously as the degree and as $B$ tend to $\infty$; see \Cref{avlambdalimit} below. This analysis becomes more involved than in the case $B = O(1)$, as it should in order to give rise to the intricate asymptotic phenomena described in \Cref{estimatedegree2} and \Cref{estimatedegree1}. In particular, the phase transitions observed in those results can be traced to corresponding phase transitions for these series coefficients, given in \Cref{psivlambdakappa} below.

\subsection{Algorithmic Applications}

\label{Algorithm}

Polynomial approximations with low error have numerous applications throughout algorithm design and complexity theory; see, for instance, the introduction of the survey by Sachdeva and Vishnoi~\cite{sachdeva2014faster} for an overview. 
The quantities $d_{B; \delta} (e^{-x})$ and $d_{B; \delta} (e^{x})$, in particular, play a central role in many algorithms due to the prevalence of exponential functions. Some examples include:
\begin{itemize}
    \item \textbf{Approximating matrix exponentials.} Given a matrix $A$ and a vector $v$, approximate $e^A \cdot v$. One of the most common algorithms in theory and in practice for this problem is the Lanczos method~\cite{lanczos1950iteration}, whose running time is bounded by $O(d \cdot m_A + d^2)$~\cite{musco2018stability}, where $m_A$ is the amount of time required to do a matrix-vector multiplication by $A$, and $d = d_{B; \delta} (e^{x})$ is the approximate degree which we compute in~\Cref{estimatedegree1} with $B = ||A||$ and $\delta$ is the desired approximation error parameter.
    \item \textbf{Finding balanced separators in graphs.} The aforementioned work by Orecchia, Sachdeva and Vishnoi \cite{orecchia2012approximating} uses polynomial approximations of $e^{-x}$ in a way similar to the Lanczos method to give fast, practical algorithms for the Balanced Separator problem.\footnote{They also give a faster algorithm in some special cases using \emph{rational approximations} of $e^{-x}$.}
    \item \textbf{Estimating softmax.} Many ``multinomial classification'' problems in natural language processing and other areas make use of the \emph{softmax} function to convert vectors representing the different classes into estimated probabilities. Given $n$ vectors $w_1, \ldots, w_n \in \R_{\geq 0}^m$, an index $i \in [n]$ and a sample vector $h \in \R_{\geq 0}^m$, softmax is defined as $$\text{softmax}(h,i,w_1,\ldots,w_n) := \frac{e^{\langle w_i, h \rangle}}{\sum_{j=1}^n e^{\langle w_j, h \rangle}}.$$ Training models in these applications frequently requires many softmax computations, and so approximations of softmax which are faster to compute are often used~\cite{chen2015strategies}. Replacing the exponentials in softmax by the optimal polynomial approximations we give in~\Cref{estimatedegree1} can be used to more quickly compute such approximations~\cite{joulin2017efficient,nilsson2014hardware}.
    \item \textbf{Kernel methods.} Polynomial approximations for $e^{-x}$ have been used to design faster sketching and estimation techniques for Gaussian kernels, including in a number of recent algorithms; see e.g.~\cite{yang2003improved,lee2019finding,alman2020algorithms,ahle2020oblivious}. In~\Cref{section:gKDE} below, we show a new application along these lines to batch Gaussian Kernel Density Estimation.
\end{itemize}

\subsection{Gaussian Kernel Density Estimation} \label{section:gKDE} Kernel Density Estimation (KDE) is one of the most common methods for non-parametric estimation of the density of an unknown distribution $\cal{D}$. Given a set $P \subset \R^m$ of samples from $\cal{D}$, along with a weight $w_y \in \R$ for each $y \in P$, the kernel density function (KDF) of $P$ at a point $x \in \R^m$ is given by $$KDF_P(x) := \sum_{y \in P} w_y \cdot k(x,y),$$ where $k : \R^m \times \R^m \to \R$ is a carefully chosen \emph{kernel function}. In most applications, one would like to compute $KDF_P$ at many points $x$. Perhaps the most commonly studied kernel function is the \emph{Gaussian kernel} $k(x,y) = e^{-\|x-y\|_2^2}$. This motivates the question:
    
\begin{problem}[Batch Gaussian KDE]
Given as input $2n$ points $x^{(1)}, \ldots, x^{(n)}, y^{(1)}, \ldots, y^{(n)} \in \R^m$ which implicitly define the matrix $K \in \R^{n \times n}$ by $K[i,j] = e^{-\|x^{(i)} - y^{(j)}\|_2^2}$, as well as a vector $w \in \R^n$, and an error parameter $\delta>0$, compute an approximation to $K \cdot w$, meaning, output a vector $v \in \R^n$ such that $\|K\cdot w - v\|_\infty \leq \delta \cdot \|w\|_1$.
\end{problem}

\textbf{Polynomial method algorithm.} This problem can be solved by using a polynomial approximation to $e^{-x}$ in order to construct a low-rank approximation to the matrix $K$, as follows. Let $B \ge 1$ and $\delta \in (0, 1)$ denote real numbers, and suppose $p(z)$ is a univariate polynomial of degree $d \geq d_{B; \delta} (e^{-x})$ such that 
\begin{flalign*}
\displaystyle\sup_{z \in [0, B]} \big| p(z) - e^{-z} \big| \le \delta.
\end{flalign*}
Thus, for $x,y \in \R^m$ with $\|x-y\|_2^2 \leq B$, the polynomial $p(\sum_{\ell=1}^m (x_\ell - y_\ell)^2)$ outputs a value within an additive $\delta$ of $e^{-\|x-y\|_2^2}$. Hence, to solve Batch Gaussian KDE, it suffices to output the vector $\tilde{K} \cdot w$, where $\tilde{K} \in \R^{n \times n}$ is the matrix given by $\tilde{K}[i,j] = p(\sum_{\ell=1}^m (x_\ell^{(i)} - y_\ell^{(j)})^2)$. By a standard argument, the rank of $\tilde{K}$ is at most the number of monomials in the expansion of $p(\sum_{\ell=1}^m (x_\ell - y_\ell)^2)$, which is bounded above by $M \leq \binom{2d+2m}{2d}$, and the corresponding low rank expression for $\tilde{K}$ can be found in time $O(n \cdot M \cdot m)$.

In other words, whenever $M < n^{o(1)}$, we can solve Batch Gaussian KDE in deterministic $n^{1+o(1)}$ time in this way (see also~\cite[Section~5.3]{alman2020algorithms} where this approach was previously laid out). Theorem~\ref{estimatedegree2} characterizes exactly when this is possible in terms of $m$ (the dimension of the points), $B$ (the square of the diameter of the point set), and $\delta$ (the error parameter):

\begin{corollary} \label{kdecorr}
For any positive integer $m  < n^{o(1)}$, and real numbers $B \geq 1$ and $\delta \in (0,1)$, define $d = d_{B; \delta} (e^{-x})$ as in Theorem~\ref{estimatedegree2}. 
Then, batch Gaussian KDE can be solved in deterministic time $n^{1+o(1)}$ whenever $\binom{2d+2m}{2d} < n^{o(1)}$.
Similarly, if $\binom{2d+2m}{2d} < n^{c}$ for some constant $0 < c < 1$, then batch Gaussian KDE can be solved in truly subquadratic deterministic time $n^{1 + c + o(1)}$.
\end{corollary}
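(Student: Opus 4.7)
The plan is to follow the ``polynomial method'' blueprint laid out immediately before the corollary and verify that the resulting running time satisfies the claimed bound. First, I would invoke Theorem~\ref{estimatedegree2} (together with the constructive remark following it) to produce, in $\poly(d)$ time, an explicit univariate polynomial $p(z)$ of degree $d = d_{B;\delta}(e^{-x})$ with rational coefficients satisfying $|p(z) - e^{-z}| \le \delta$ for every $z \in [0,B]$. I would then define $q(x,y) := p\!\left(\sum_{\ell=1}^{m}(x_\ell - y_\ell)^2\right)$, a polynomial in the $2m$ variables $x_1,\ldots,x_m,y_1,\ldots,y_m$ of total degree at most $2d$, and set $\tilde{K}[i,j] := q(x^{(i)}, y^{(j)})$. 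Since each squared distance $\|x^{(i)} - y^{(j)}\|_2^2$ lies in $[0,B]$, we immediately get $|\tilde{K}[i,j] - K[i,j]| \le \delta$ entrywise, and hence $\|\tilde{K}w - Kw\|_\infty \le \delta \cdot \|w\|_1$, which is the error guarantee required by the problem.

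The next step is to convert this into a low-rank factorization. Every monomial of $q$ has the form $\left(\prod_\ell x_\ell^{a_\ell}\right)\!\left(\prod_\ell y_\ell^{b_\ell}\right)$ with $\sum_\ell (a_\ell + b_\ell) \le 2d$, and so factors cleanly into an $x$-part times a $y$-part. Enumerating the (at most $M := \binom{2d+2m}{2d}$) such monomials as $\alpha$ with coefficients $c_\alpha$, $x$-factor $\phi_\alpha(x)$, and $y$-factor $\psi_\alpha(y)$, I would form the matrices $\Phi \in \R^{M \times n}$ given by $\Phi[\alpha,i] = c_\alpha \phi_\alpha(x^{(i)})$ and $\Psi \in \R^{M \times n}$ given by $\Psi[\alpha,j] = \psi_\alpha(y^{(j)})$, so that $\tilde{K} = \Phi^{T} \Psi$.

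To finish, I would compute $\tilde{K}w = \Phi^{T}(\Psi w)$ by first evaluating $\Psi w \in \R^{M}$ in time $O(nM)$ and then $\Phi^{T}(\Psi w) \in \R^{n}$ in time $O(nM)$. Constructing $\Phi$ and $\Psi$ requires evaluating $2nM$ monomials of degree at most $2d$, each in $O(m)$ arithmetic operations, for a preprocessing cost of $O(nMm)$. The total running time is therefore $O(nMm)$; under the hypotheses $m, M < n^{o(1)}$ this becomes $n^{1+o(1)}$, and under the hypothesis $M < n^{c}$ (combined with $m < n^{o(1)}$) it becomes $n^{1+c+o(1)}$, giving the two stated claims.

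The proof is essentially bookkeeping, so there is no single hard step; the points requiring care are (i) verifying that substituting $\sum_\ell (x_\ell - y_\ell)^{2}$ into a degree-$d$ univariate polynomial produces a $2m$-variable polynomial of joint degree $2d$ (not $d$), which is what forces the $\binom{2d+2m}{2d}$ bound rather than, e.g., $\binom{d+2m}{d}$, and (ii) checking that the per-monomial evaluation cost of $O(m)$ can be absorbed into a final exponent of $1 + o(1)$ or $1 + c + o(1)$, which is exactly the role of the hypothesis $m < n^{o(1)}$. All the substantive content of the corollary is then packaged into the degree bound furnished by Theorem~\ref{estimatedegree2}.
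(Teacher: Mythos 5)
Your proof is correct and follows essentially the same route as the paper: approximate $e^{-x}$ by a degree-$d$ polynomial, substitute the squared distance to get a $2m$-variate polynomial of total degree $\le 2d$ with $\binom{2d+2m}{2d}$ monomials, split each monomial into an $x$-factor and a $y$-factor to obtain a rank-$M$ factorization $\tilde K = \Phi^T\Psi$, and compute $\Phi^T(\Psi w)$ in $O(nMm)$ time. The only cosmetic differences (where to place the coefficients $c_\alpha$, and explicitly citing the remark on constructing $p$ in $\poly(d)$ time) do not alter the argument.
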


\subsubsection{Comparison with prior work.}\label{subsec:1} The previous best known algorithm for Batch Gaussian KDE is due to recent work of Charikar and Siminelakis~\cite{charikar2017hashing}, which showed how to solve this problem in randomized time $\delta^{-2} n^{1+o(1)} \cdot (\log n)^{O(B^{2/3})}$ for any dimension $m < n^{o(1)}$. Their algorithm achieves randomized running time $n^{1+o(1)}$ whenever $\delta^{-1} < n^{o(1)}$, $B < o((\log n / \log\log n)^{3/2})$, and $m < n^{o(1)}$. 

Focusing on the setting\footnote{Often, depending on the desired error guarantees, one can reduce to roughly this case using dimensionality reduction like the Johnson–Lindenstrauss lemma.} where $m = O(\log n)$, Corollary~\ref{kdecorr} achieves deterministic running time $n^{1+o(1)}$ in all the same parameter settings as the previous algorithm, and also new settings including:
\begin{itemize}
    \item When $B = o(\log^2 n)$ and $\delta^{-1} < n^{o(\log n / B)}$ (slightly improving the parameter $B$), or
    \item When $B = o(\log n)$ and $\delta^{-1} = n^{\Theta(1)}$ (considerably improving the parameter $\delta$).
\end{itemize}

In particular, the latter setting enables us to take $\delta$ to depend polynomially in $n$, while still retaining an $n^{1+o(1)}$ running time.

The above parameter regimes are also where our upper bound on $d_{B; \delta} (e^{-x})$ from Theorem~\ref{estimatedegree2} logarithmically improves on the one given in~\cite{sachdeva2014faster}. This improvement was in fact necessary for our application to KDE, as the estimate from \cite{sachdeva2014faster} was $d_{B; \delta} (e^{-x}) = \Omega (\log n)$ in these settings, which would give a running time of $n \cdot \binom{2d+2m}{2d} \geq n^{1 + \Omega(1)}$, as opposed to our near-linear one.

Interestingly, our algorithm and that of~\cite{charikar2017hashing} take approaches which rely on very different properties of the kernel function $k$. Charikar and Siminelakis's algorithm uses a clever Locality-Sensitive Hashing-based approach, and also works well for other kernels with efficient hash functions, whereas our approach instead requires $k$ to have a low-degree polynomial approximation. Other popular algorithmic techniques for KDE, such as the Fast Multipole Method~\cite{greengard1987fast}, or core-sets~\cite{agarwal2005geometric,phillips2013varepsilon}, lead to $n^{1 + \Omega(1)}$ running times in the high-dimensional $d = \Omega(\log n)$, low-error $\eps < n^{-\Theta(1)}$ setting; see~\cite[Section~1.3.2]{syminelakis2019fast} for an overview of these known approaches.  \\

 \subsubsection{SETH lower bound.}\label{subsec:2} To complement \Cref{kdecorr}, we also show a fine-grained lower bound, that assuming the Strong Exponential Time Hypothesis (SETH), when $m = \Theta(\log n)$ and $\delta^{-1} = n^{\Theta(1)}$, one cannot achieve running time $n^{1+o(1)}$ when $B = \Omega(\log n)$.

\begin{proposition} \label{sethlb}
Assuming SETH, for every $q>0$, there are constants $\alpha, \beta, \kappa > 0$ such that Batch Gaussian KDE in dimension $m = \alpha \log n$ and error $\delta = n^{-\beta}$ for input points whose diameter squared is at most $B = \kappa \log n$ requires time $\Omega(n^{2-q})$.
\end{proposition}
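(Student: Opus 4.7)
The plan is to reduce from a gap variant of the Bichromatic Hamming Close Pair problem (BHCP) that is SETH-hard: given sets $A, B \subseteq \{0,1\}^d$ with $|A| = |B| = n$, dimension $d = \alpha \log n$, and a threshold $t = \gamma' d$, one must distinguish the YES case (some $(a, b) \in A \times B$ has Hamming distance $H(a, b) \leq t$) from the NO case (every pair has $H(a, b) \geq (1 + \gamma) t$). For each $q > 0$, one can fix constants $\alpha, \gamma, \gamma' > 0$ depending on $q$ so that this BHCP variant requires time $n^{2 - q}$ under SETH; this follows by combining the SETH-hardness of OV with gap-amplification techniques for closest pair in Hamming space, in the style of Rubinstein's work on approximate nearest neighbor search.

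Given such a BHCP instance, I would encode it as a Batch Gaussian KDE instance by setting $x^{(i)} := \lambda a^{(i)}$, $y^{(j)} := \lambda b^{(j)} \in \R^d$, and $w_j := 1$ for all $j$, where $\lambda > 0$ is a scale to be chosen. Because the underlying vectors are binary, $\|x^{(i)} - y^{(j)}\|_2^2 = \lambda^2 H(a^{(i)}, b^{(j)})$, so the $i$-th entry of $K w$ equals $s_i := \sum_j e^{-\lambda^2 H(a^{(i)}, b^{(j)})}$. Choosing $\lambda^2 := 2 \log n / (\gamma t)$ forces $B \leq \lambda^2 d = (2/(\gamma \gamma')) \log n$, so $\kappa := 2/(\gamma \gamma')$ is a constant. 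In the YES case some $s_{i_0} \geq e^{-\lambda^2 t} = n^{-2/\gamma}$ already from the close-pair term alone, while in the NO case every $s_i \leq n \cdot e^{-\lambda^2 (1 + \gamma) t} = n^{-2/\gamma - 1}$; the gap is thus at least $\tfrac{1}{2} n^{-2/\gamma}$. Taking $\delta := n^{-\beta}$ with $\beta := 2/\gamma + 2$ makes the additive KDE error $\delta \|w\|_1 = \delta n$ smaller than half this gap, so any algorithm for KDE in these parameters solves BHCP in the same running time, yielding the desired $n^{2-q}$ lower bound.

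The main obstacle will be establishing the constant-multiplicative-gap SETH-hardness of BHCP in dimension $O(\log n)$, rather than merely the ungapped version. The naive OV-to-BHCP reduction (pad to uniform Hamming weight $k$ and use the identity $H(a, b) = 2k - 2\langle a, b \rangle$) yields only an additive Hamming gap of $2$ between orthogonal and non-orthogonal pairs, and under the $B = \kappa \log n$ constraint this would force $\lambda^2 = \Omega(\log n)$ and hence $B = \Omega(\log^2 n)$, breaking the reduction. Overcoming this requires a genuinely gap-amplified SETH-hardness statement --- either invoked from prior work or produced in-line via a tensoring/repetition argument --- together with careful bookkeeping to confirm that the resulting constants $\alpha, \beta, \kappa$ depend on $q$ in the manner the proposition asserts.
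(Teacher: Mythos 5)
Your plan is essentially the paper's plan: reduce from an approximate/gapped Hamming nearest-neighbor problem in $O(\log n)$ dimensions (the paper invokes Rubinstein's SETH-hardness of $(1+\eps)$-Approximate Hamming Nearest Neighbor directly), map binary vectors to scaled vectors $\lambda a_i, \lambda b_j$ so that squared Euclidean distance becomes $\lambda^2\cdot(\text{Hamming distance})$, put $w = \vec{1}$, choose $\lambda^2 = \Theta(\log n / t)$, and threshold the output of the KDE oracle. Your arithmetic for the YES/NO separation and the choice of $\delta$ are correct.

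There is, however, one genuine gap beyond the one you flag. You state the source problem with a \emph{fixed} threshold $t = \gamma' d$, so that $\lambda^2 d = \Theta(\log n)$ automatically. But the SETH-hardness statement one actually has available (Rubinstein's theorem, as quoted in the paper) gives hardness with $t$ as an arbitrary input parameter in $[0,m]$, not pinned to a fixed fraction of $d$. If the adversarial instance has $t$ much smaller than $\log n$ — say $t = o(\log n)$ — your choice $\lambda^2 = 2\log n/(\gamma t)$ blows up and the diameter bound $B = \lambda^2 d$ becomes $\omega(\log n)$, breaking the parameter regime of the proposition. The paper handles exactly this: it fixes a small constant $c>0$ and, when $t < c\log n$, solves the NN instance outright by brute force (store the $b_j$'s in a lookup table, enumerate the $O(\binom{m}{t}) = n^{o(1)}$ Hamming balls around each $a_i$), which takes time $o(n^{2-q})$; only when $t \geq c\log n$ does it perform the scaling-based reduction, and then $\lambda^2 = 2/(c'\eps)$ with $c' \geq c$ stays $O(1)$. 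You either need this case split, or a citation establishing that the hard BHCP instances can be taken with $t = \Theta(\log n)$; as written, your argument implicitly assumes the latter without justification.
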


The proof of \Cref{sethlb} is a slight modification of a similar lower bound of Backurs, Indyk, and Schmidt~\cite{backurs2017fine}, which relates Gaussian KDE to nearest neighbor search (for which SETH lower bounds are already known~\cite{rubinstein2018hardness}). 

To summarize, in the natural setting where $m = \Theta(\log n)$ and $\delta^{-1} = n^{\Theta(1)}$:
\begin{itemize}
    \item Our algorithm using the polynomial method achieves running time $n^{1+o(1)}$ when $B = o(\log n)$. 
    \item Assume SETH. It is not possible to improve our algorithm to achieve running time $n^{1 + o(1)}$ when $B = \Theta (\log n)$. Moreover, if $B = \omega (\log n)$, then no algorithm achieves running time faster than $n^{2 - o(1)}$.
    
\end{itemize}

 \subsubsection{Critical regime behavior from the leading constant in~\Cref{estimatedegree2}.}\label{subsec:3} Thus, assuming SETH (and under the setting $m = \Theta (\log n)$ and $\delta^{-1} = n^{\Theta (1)}$), the complexity of Batch Gaussian KDE exhibits a transition mirroring the one exhibited by $d_{B; \delta} (e^{-x})$ in \Cref{estimatedegree2}. More specifically, as $B$ goes from $o (\log n)$ to $\omega (\log n)$, this complexity transitions from $n^{1 + o(1)}$ to $n^{2 - o(1)}$. In the ``critical regime'' where $B = \kappa \log n$ for some fixed $\kappa > 0$, this suggests that its complexity should grow as $n^{1 + \varphi (\kappa)}$, for some non-decreasing function $\varphi: \mathbb{R}_{> 0} \rightarrow \mathbb{R}_{> 0}$ satisfying $\lim_{\kappa \rightarrow 0} \varphi (\kappa) = 0$ and $\lim_{\kappa \rightarrow \infty} \varphi(\kappa) = 1$. It would be fascinating to better understand more precise behavior of this function $\varphi$. Does it continuously transition from $0$ to $1$ as $\kappa$ increases, or does it admit a sudden ``jump'' at a specific threshold value for $\kappa$?

While these questions remain open, we can use our asymptotics for $d_{B; \delta} (e^{-x})$ to provide bounds on $\varphi (\kappa)$ for small $\kappa$. In particular, the below corollary implies that $\varphi (\kappa) = O  \big( \log \log \kappa^{-1} / \log \kappa^{-1} \big)$. Our derivation of the term $\log \log \kappa^{-1} / \log \kappa^{-1}$ appearing in the exponent makes use of the leading constant $\nu$ (defined in the second part of \Cref{estimatedegree2}) for the asymptotics of $d_{B; \delta} (e^{-x})$. Indeed, this is the case of Corollary~\ref{kdecorr} where $d = d_{B; \delta} (e^{-x}) = \Theta(\log n)$ and $m = O(\log n)$, and so $\binom{2d+2m}{2d} = n^{\Theta(1)}$, where the leading constant in $d_{B; \delta} (e^{-x})$ determines the value of the $\Theta(1)$. 

\begin{corollary}

\label{fkappa} 

Fix constants $\alpha, \beta > 0$, and suppose that $m = \alpha \log n$ and $\delta = n^{-\beta}$. If $B = \kappa \log n$ for some $\kappa < \frac{1}{2}$, then Batch Gaussian KDE can be solved in time $O(n^{1 + c \log \log \kappa^{-1} / \log \kappa^{-1}})$, where $c = c(\alpha, \beta) > 0$ only depends on $\alpha$ and $\beta$.

\end{corollary}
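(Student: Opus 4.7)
The plan is to apply Corollary~\ref{kdecorr} with $d := d_{B;\delta}(e^{-x})$ from case (2) of Theorem~\ref{estimatedegree2}, and to derive the claimed exponent by analyzing the leading constant of that case as a function of $\kappa$. Setting $r := \kappa/(2\beta)$, the hypotheses give $B = 2r \log(\delta^{-1})$, so case (2) yields
\begin{equation*}
d = (\nu r + o(1)) \log(\delta^{-1}) = (\nu r \beta + o(1)) \log n,
\end{equation*}
where $\nu = \nu(r)$ is the unique positive solution of $G(\nu) = 1 - r^{-1}$. The task therefore reduces to bounding the product $\nu r$ as $\kappa \to 0$.

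The core step is to establish $\nu r = O(1/\log \kappa^{-1})$; this is precisely where the $\log\log\kappa^{-1}/\log\kappa^{-1}$ in the target exponent originates. The elementary expansion of $G(x) = \sqrt{x^2+1} + x\log(\sqrt{x^2+1}-x)$ for large $x$ gives
\begin{equation*}
G(x) = -x \log(2x/e) + O(1/x).
\end{equation*}
As $r \to 0^+$, the right-hand side of $G(\nu) = 1 - r^{-1}$ tends to $-\infty$, forcing $\nu \to \infty$; plugging into the expansion yields $\nu \log(2\nu/e) = r^{-1} + O(1)$. A standard asymptotic inversion (guess $\nu \approx r^{-1}/\log r^{-1}$, substitute, and verify self-consistency) produces $\nu = (1+o(1)) \cdot r^{-1}/\log r^{-1}$, hence $\nu r = (1+o(1))/\log r^{-1}$. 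Since $r = \kappa/(2\beta)$, we conclude $\nu r = O(1/\log \kappa^{-1})$ with the implicit constant depending only on $\beta$. This asymptotic inversion for $G$ is the main technical step.

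Given this, $d = O(\log n / \log \kappa^{-1})$, and combined with $m = \alpha \log n$ we have $d \le m$ for all sufficiently small $\kappa$, so
\begin{equation*}
\binom{2d+2m}{2d} \le \left( \frac{e(d+m)}{d} \right)^{2d} = \exp\bigl( O(d \log(m/d)) \bigr) = \exp\left( O\!\left( \frac{\log n}{\log \kappa^{-1}} \cdot \log\log \kappa^{-1} \right)\right),
\end{equation*}
which is $n^{O(\log\log\kappa^{-1}/\log\kappa^{-1})}$. Corollary~\ref{kdecorr} then yields the overall running time $O(n^{1 + c \log\log\kappa^{-1}/\log\kappa^{-1}})$ for a suitable constant $c = c(\alpha, \beta) > 0$. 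For the residual range of $\kappa$ bounded away from $0$ but below $1/2$, one may enlarge $c$ so that the claimed bound subsumes the trivial $O(n^2)$ algorithm for Batch Gaussian KDE.
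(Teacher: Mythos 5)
Your proposal is correct and follows essentially the same route as the paper: both apply \Cref{kdecorr} together with case (2) of \Cref{estimatedegree2}, and both hinge on asymptotically inverting $G(\nu) = 1 - r^{-1}$ (with $r = \kappa/(2\beta)$) via the large-argument expansion of $G$ to obtain $\kappa\nu = \Theta(1/\log\kappa^{-1})$. The only cosmetic difference is in the final binomial-coefficient estimate, where you invoke $\binom{a+b}{a} \le (e(a+b)/a)^a$ whereas the paper Taylor-expands $\log\binom{2m+2d}{2d}$ directly; both give the same exponent.
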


Prior work has shown similar ``critical regime'' behavior for other problems with SETH-based lower bounds. The Orthogonal Vectors problem for $n$ vectors in dimension $\kappa \log n$ for large $\kappa$ can be solved in time $n^{2 - 1/O(\log \kappa)}$~\cite{abboud2014more,chan2016deterministic}, whereas the problem in dimension $\omega(\log n)$ requires time $n^{2 - o(1)}$ assuming SETH. The Batch Hamming Nearest Neighbors problem for $n$ vectors in dimension $\kappa \log n$ for large $\kappa$ can be solved in time $n^{2 - 1/\tilde{O}(\sqrt{\kappa})}$~\cite{alman2015probabilistic,alman2016polynomial}, whereas the problem in dimension $\omega(\log n)$ also requires time $n^{2 - o(1)}$ assuming SETH. Interestingly, these algorithms make use of variants on the polynomial method using \emph{probabilistic} polynomials, whereas we make use of approximate polynomials here.

\section{Proof Overview}

In this section we outline the proofs of \Cref{estimatedegree2} and \Cref{estimatedegree1}, which will be established in detail in Section 3 below. To that end, we will use the Chebyshev polynomials, which are defined as follows; see \Cref{sec:prelims} for a more thorough explanation of its properties.

	\begin{definition}
		
		\label{pd1} 
		
		Fix an integer $d \ge 0$. Let $\mathcal{P}_d \subset \mathbb{R} [x]$ denote the set of single-variable polynomials $p(x)$ with $\deg p \le d$, and define the degree $d$ \emph{monic Chebyshev polynomial} $Q_d (x) \in \mathcal{P}_d$ as follows. Set $Q_0 (x) = 1$ and, for each $d \ge 1$, define $Q_d (x)$ by imposing that
	\begin{flalign}
	\label{qdtheta}
	Q_d (\cos \theta) = 2^{1 - d} \cos (d \theta), \qquad \text{for each $\theta \in [0, 2 \pi]$.}
	\end{flalign}
		
	\end{definition}

    It is well understood in the literature that smooth functions $f$ are typically well-approximated by polynomials obtained by truncating the series expansion of $f$ in the basis of Chebyshev polynomials; see, for instance, \cite[(15.5), (15.8)]{AT} for more precise formulations of this statement. 
    
    In particular, the following proposition provides a version of this statement that will be more useful for our purposes. It provides upper and lower bounds on the optimal error of a polynomial approximation $p(x)$ of a function $f: [-1, 1] \rightarrow \mathbb{R}$ in terms of its Chebyshev expansion coefficients. Both bounds in this result are known; the lower bound follows from the $L^2$-orthogonality of the Chebyshev polynomials, and the upper bound follows from \eqref{qdtheta}. Still, we provide a short and self-contained proof of the below proposition in \Cref{ProofSumaj}.

	\begin{proposition}
		
		\label{ajcj}
		
		Let $a_0, a_1, \ldots \in \mathbb{R}$ satisfy $\sum_{j = 0}^{\infty} |a_j| < \infty$. Then, the absolutely convergent series $f: [-1, 1] \rightarrow \mathbb{R}$ defined by $f(x) = \sum_{j = 0}^{\infty} 2^{j - 1} a_j Q_j (x)$ satisfies 
		\begin{flalign}
		\label{pxfx}
		\left( \displaystyle\frac{1}{2} \displaystyle\sum_{k = D}^{\infty} \displaystyle\frac{a_k^2}{k} \right)^{1 / 2} \le \displaystyle\inf_{p \in \mathcal{P}_{D - 1}} \displaystyle\sup_{x \in [-1, 1]} \big| p(x) - f(x) \big| \le \displaystyle\sum_{j = D}^{\infty} |a_j|,
		\end{flalign}
		
		\noindent for any integer $D \ge 1$.

	\end{proposition}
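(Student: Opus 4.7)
The plan is to prove the two bounds in \eqref{pxfx} separately, in both cases reducing to classical Fourier analysis via the substitution $x = \cos\theta$. By \eqref{qdtheta}, we have $2^{j-1} a_j Q_j(\cos\theta) = a_j \cos(j\theta)$ for $j \ge 1$, while the $j = 0$ term contributes $a_0/2$, so $f(\cos\theta) = \frac{a_0}{2} + \sum_{j = 1}^\infty a_j \cos(j\theta)$, and this series converges absolutely and uniformly by hypothesis. Any $p \in \mathcal{P}_{D-1}$ similarly satisfies $p(\cos\theta) = \frac{b_0}{2} + \sum_{j=1}^{D-1} b_j \cos(j\theta)$ for some $b_0, \ldots, b_{D-1} \in \mathbb{R}$, since $p \circ \cos$ is an even $2\pi$-periodic trigonometric polynomial of degree at most $D - 1$.

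For the upper bound I would set $p(x) = \sum_{j=0}^{D-1} 2^{j-1} a_j Q_j(x) \in \mathcal{P}_{D-1}$. The identity \eqref{qdtheta} gives $|Q_j(x)| \le 2^{1-j}$ on $[-1, 1]$, so the triangle inequality yields
\begin{flalign*}
\sup_{x \in [-1,1]} \big| f(x) - p(x) \big| \le \sum_{j=D}^\infty 2^{j-1} |a_j| \cdot 2^{1-j} = \sum_{j=D}^\infty |a_j|,
\end{flalign*}
establishing the right inequality of \eqref{pxfx}.

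For the lower bound I would fix an arbitrary $p \in \mathcal{P}_{D-1}$ and estimate $\|f-p\|_\infty$ from below by the $L^2[0, \pi]$ norm of $E(\theta) := f(\cos\theta) - p(\cos\theta)$. By the first paragraph, the cosine coefficients of $E$ at frequencies $k \ge D$ are exactly $a_D, a_{D+1}, \ldots$, independent of $p$; so the orthogonality relation $\int_0^\pi \cos(j\theta) \cos(k\theta)\, d\theta = (\pi/2)\, \delta_{jk}$ (valid for $j, k \ge 1$), together with Bessel's inequality applied to $E$, gives $\int_0^\pi E(\theta)^2\, d\theta \ge (\pi/2) \sum_{k = D}^\infty a_k^2$. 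Combining with the trivial bound $\int_0^\pi E(\theta)^2\, d\theta \le \pi \|f-p\|_\infty^2$ yields $\|f-p\|_\infty^2 \ge \frac{1}{2}\sum_{k=D}^\infty a_k^2 \ge \frac{1}{2}\sum_{k=D}^\infty a_k^2 / k$, which is the left inequality (in fact with room to spare, since each $k \ge D \ge 1$). I do not anticipate any substantive obstacle: the proof amounts to tracking the normalization factors $2^{1-j}$ introduced by using \emph{monic} rather than standard Chebyshev polynomials, and observing that $p \circ \cos$ ranges over precisely the degree-$\le D-1$ cosine polynomials, so that orthogonal projection exactly annihilates the low-frequency part of $f \circ \cos$ and leaves the tail $\sum_{k \ge D} a_k \cos(k\theta)$ untouched.
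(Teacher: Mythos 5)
Your proof is correct and follows essentially the same route as the paper's: expand in the Chebyshev/cosine basis, bound the tail via the triangle inequality for the upper bound, and lower-bound the sup norm by the $L^2$ norm using orthogonality. One noteworthy difference: you obtain the stronger bound $\bigl(\tfrac12 \sum_{k \ge D} a_k^2\bigr)^{1/2}$ and then discard a factor to match the stated $\bigl(\tfrac12 \sum_{k \ge D} a_k^2/k\bigr)^{1/2}$; this is because the paper's derivation inserts a spurious $1/|k|$ via the incorrect identity $\int_0^{2\pi}\cos^2(k\theta)\,d\theta = \pi|k|^{-1}$ (the correct value is $\pi$ for every integer $k \ne 0$, as the substitution $u = k\theta$ rescales the interval to $[0,2|k|\pi]$ and not just the integrand). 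The $1/k$ in the statement of \Cref{ajcj} is therefore harmless but unnecessary, and your argument shows the cleaner bound holds.
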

	
	In particular, \eqref{pxfx} provides nearly matching upper and lower bounds (up to a factor of $(2D)^{1/2}$) if the coefficients $\{ a_j \}$ decay sufficiently quickly. Indeed, then the left and right sides of that inequality are asymptotically governed by their leading terms $a_D$. 
	
	As stated, \Cref{ajcj} only applies for approximating polynomials on the interval $[-1, 1]$. However, we would like to approximate $e^{-x}$ and $e^x$ on $[0, B]$, for some $B \ge 1$. Therefore, we rescale by first setting $\lambda = \frac{B}{2}$, and then by observing that to approximate $e^{-x}$ (or $e^x$) on $[0, B]$ it suffices to approximate $e^{\lambda x - \lambda}$ (or $e^{\lambda x + \lambda}$, respectively) on $[-1, 1]$. 
	
	We will show that the coefficients of these latter functions, when written in the Chebyshev basis, indeed decay quickly (with an explicit rate, dependent on $\lambda$), and then apply \Cref{ajcj}. We can in fact compute these coefficients exactly, by first expressing $e^{-x}$ and $e^x$ through their Taylor series, and then by changing basis from the monomials $x^n$ to the Chebyshev polynomials. This yields the following (known) lemma, whose short proof will be recalled in \Cref{Proofavlambdabvlambda} below.

	\begin{lemma} 
		
	\label{amlambda}
	
	For any real numbers $\lambda > 0$ and $x \in [-1, 1]$, we have that 
	\begin{flalign}
	\label{absum} 
	e^{-\lambda x - \lambda} = \displaystyle\sum_{v = 0}^{\infty} 2^{v - 1} A_{v, \lambda} Q_v (x), \qquad e^{\lambda x + \lambda} = \displaystyle\sum_{v = 0}^{\infty} 2^{v - 1} B_{v, \lambda} Q_v (x),
	\end{flalign}
	
	\noindent where for any integer $v \ge 0$ we have set 
	\begin{flalign}
	\label{avlambdadefinition}
	A_{v, \lambda} = 2 e^{-\lambda} (-1)^v \displaystyle\sum_{ n - v \in 2 \mathbb{Z}_{\ge 0}} \displaystyle\frac{\lambda^n}{2^n n!} \binom{n}{\frac{n - v}{2}}, \qquad B_{v, \lambda} = 2 e^{\lambda} \displaystyle\sum_{ n - v \in 2 \mathbb{Z}_{\ge 0}} \displaystyle\frac{\lambda^n}{2^n n!} \binom{n}{\frac{n - v}{2}}.
	\end{flalign}

	\end{lemma}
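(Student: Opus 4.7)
The plan is to derive both identities in \eqref{absum} in parallel by expanding $e^{\pm\lambda x}$ through its Taylor series and then rewriting each monomial $x^n$ in the monic Chebyshev basis. Setting $x = \cos\theta$, the binomial theorem gives
\[
\cos^n\theta = 2^{-n} \sum_{k=0}^n \binom{n}{k} e^{i(n-2k)\theta} = 2^{-n} \sum_{k=0}^n \binom{n}{k} \cos\bigl((n-2k)\theta\bigr).
\]
Pairing the indices $k$ and $n-k$ (which contribute the same cosine, using $\binom{n}{k} = \binom{n}{n-k}$) shows that the coefficient of $\cos(v\theta)$, for each integer $v$ with $0 \le v \le n$ and $v \equiv n \pmod{2}$, equals $2 \cdot 2^{-n} \binom{n}{(n-v)/2}$ when $v \ge 1$ and $2^{-n} \binom{n}{n/2}$ when $v = 0$ (and $n$ is even). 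Translating via $\cos(v\theta) = 2^{v-1} Q_v(\cos\theta)$ for $v \ge 1$ together with $Q_0 = 1$, both cases unify into
\[
x^n = \sum_{\substack{0 \le v \le n \\ n - v \in 2\mathbb{Z}_{\ge 0}}} \frac{2^v}{2^n} \binom{n}{(n-v)/2} \, Q_v(x).
\]

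Next I would substitute this identity into the Taylor expansions
\[
e^{\lambda x + \lambda} = e^{\lambda} \sum_{n=0}^\infty \frac{\lambda^n}{n!} x^n, \qquad e^{-\lambda x - \lambda} = e^{-\lambda} \sum_{n=0}^\infty \frac{(-1)^n \lambda^n}{n!} x^n,
\]
and swap the order of summation. This interchange is justified by the absolute convergence of the Taylor series together with the uniform bound $|Q_v(x)| \le 2^{1-v}$ on $[-1,1]$ inherited from \eqref{qdtheta}. Collecting the coefficient of $2^{v-1} Q_v(x)$ in the expansion of $e^{\lambda x + \lambda}$ yields exactly $2 e^{\lambda} \sum_{n-v \in 2\mathbb{Z}_{\ge 0}} \frac{\lambda^n}{2^n n!} \binom{n}{(n-v)/2} = B_{v,\lambda}$, matching \eqref{avlambdadefinition}.

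For the $e^{-\lambda x - \lambda}$ identity, the same manipulations apply, except that each inner term now carries an extra factor of $(-1)^n$. Since the inner sum is restricted to $n$ with $n \equiv v \pmod{2}$, we have $(-1)^n = (-1)^v$, so this sign factor can be pulled out of the sum and becomes the leading $(-1)^v$ appearing in $A_{v,\lambda}$. The only bookkeeping to be careful with is the special treatment of the $v = 0$ case in the monomial expansion and this collapse of $(-1)^n$ to $(-1)^v$; neither presents a genuine obstacle, consistent with the paper's description of the lemma as classical with a short proof.
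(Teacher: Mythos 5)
Your proof is correct and follows the same strategy as the paper: Taylor-expand, change from the monomial to the monic Chebyshev basis, and swap the order of summation, with the $(-1)^n = (-1)^v$ observation handling the sign in $A_{v,\lambda}$. The only cosmetic difference is that you re-derive the expansion of $x^n$ in Chebyshev polynomials from the binomial theorem, whereas the paper simply cites this as \Cref{xqsum} (a classical identity from the Chebyshev literature).
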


    We next apply a saddle point analysis to obtain precise asymptotics for $A_{v, \lambda}$ and $B_{v, \lambda}$, as $\lambda + v$ tends to $\infty$. In particular, the following proposition shows that these coefficients decay exponentially in $v$, with an explicit rate function given by $\Psi_{v, \lambda}$ in \eqref{psidefinition}. This exact form of this rate function will eventually serve as the source of the phase transitions for $d_{B; \delta} (e^{-x})$ and $d_{B; \delta} (e^x)$ explained in \Cref{estimatedegree2} and \Cref{estimatedegree1}, respectively. Indeed, one might already observe that $\Psi_{v, \lambda} = \lambda G \big( \frac{v}{\lambda} \big)$, where we recall the function $G(x)$ from those results. The below proposition will be stated a bit informally; we refer to \Cref{avlambdalimit2} below for the more precise formulation needed for our purposes.
	
	\begin{proposition}
		
	\label{avlambdalimit}
		
	Recall the quantities $A_{v, \lambda}$ and $B_{v, \lambda}$ from \eqref{avlambdadefinition} for any integer $v \ge 0$ and real number $\lambda \ge \frac{1}{2}$. Denote
	\begin{flalign}
	\label{psidefinition}
	\Psi_{v, \lambda} = \sqrt{v^2 + \lambda^2} + v \log \bigg( \displaystyle\frac{\sqrt{v^2 + \lambda^2} - v}{\lambda} \bigg).
	\end{flalign}
	
	\noindent As $v + \lambda$ tends to $\infty$, we have that 
		\begin{flalign*}
		(-1)^v A_{v, \lambda} = (\lambda + v)^{O(1)} \exp \big( \Psi_{v, \lambda} - \lambda \big); \qquad  B_{v, \lambda} = (\lambda + v)^{O(1)} \exp \big( \Psi_{v, \lambda} + \lambda \big).
		\end{flalign*}
		
	\end{proposition}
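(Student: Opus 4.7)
The plan is to analyze $A_{v,\lambda}$ and $B_{v,\lambda}$ simultaneously by exposing a common factor. Substituting $n = v + 2k$ with $k \geq 0$ in \eqref{avlambdadefinition} and using $\binom{v+2k}{k} = (v+2k)!/\bigl(k!\,(v+k)!\bigr)$, one immediately finds
\[
(-1)^v A_{v,\lambda} = 2 e^{-\lambda} S_{v,\lambda}, \qquad B_{v,\lambda} = 2 e^{\lambda} S_{v,\lambda}, \qquad \text{where } \; S_{v,\lambda} := \sum_{k=0}^{\infty} \frac{(\lambda/2)^{v+2k}}{k!\,(v+k)!}.
\]
Thus the proposition reduces to the single two-sided estimate $S_{v,\lambda} = (\lambda+v)^{O(1)} \exp(\Psi_{v,\lambda})$ as $v + \lambda \to \infty$, after which both conclusions follow by pulling $e^{\mp\lambda}$ into the exponential and absorbing the factor of $2$ into the polynomial prefactor.

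The estimate on $S_{v,\lambda}$ will come from a saddle-point analysis of the summand. Applying Stirling to write the $k$-th term as $e^{\phi(k) + O(\log(v+k+1))}$ with
\[
\phi(t) = (v+2t)\log(\lambda/2) - t\log t - (v+t)\log(v+t) + (v+2t),
\]
the equation $\phi'(t) = 0$ simplifies to $t(v+t) = (\lambda/2)^2$, whose non-negative root is $t_* := \tfrac{1}{2}\bigl(\sqrt{v^2+\lambda^2} - v\bigr)$. A direct computation using $v + 2t_* = \sqrt{v^2+\lambda^2}$ together with the rationalization $\lambda/(v+\sqrt{v^2+\lambda^2}) = (\sqrt{v^2+\lambda^2} - v)/\lambda$ shows that $\phi(t_*) = \Psi_{v,\lambda}$ exactly. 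The lower bound $S_{v,\lambda} \geq (\lambda+v)^{-O(1)} \exp(\Psi_{v,\lambda})$ then follows by retaining only the $k = \lfloor t_* \rfloor$ term in the sum, provided $t_* \geq 1$.

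For the matching upper bound, note that the ratio of consecutive summands, $(\lambda/2)^2/\bigl((k+1)(v+k+1)\bigr)$, is strictly decreasing in $k$, so the sequence of summands is log-concave (equivalently $\phi''(t) = -1/t - 1/(v+t) < 0$). Hence the terms decay at least geometrically once $|k - t_*|$ exceeds $O(\sqrt{v+\lambda})$, which is a crude but sufficient bound on the peak width $|\phi''(t_*)|^{-1/2} = \lambda/(2(v^2+\lambda^2)^{1/4})$. Combining this with the computation of $\phi(t_*)$ yields $S_{v,\lambda} \leq (\lambda+v)^{O(1)} \exp(\Psi_{v,\lambda})$.

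The main technical obstacle is ensuring uniformity across the qualitatively different regimes $v \ll \lambda$, $v \asymp \lambda$, and $v \gg \lambda$, since $t_*$ ranges from $\Theta(\lambda)$ down to $\Theta(\lambda^2/v)$, which can be less than $1$. In that last case the sequence of summands is already decreasing at $k = 0$, so one handles it separately by bounding the sum by its geometric tail and verifying, via the expansion $\sqrt{v^2+\lambda^2} = v + \lambda^2/(2v) - \lambda^4/(8v^3) + \cdots$, that the $k=0$ contribution (estimated by Stirling on $v!$) matches $\exp(\Psi_{v,\lambda})$ up to the allowed polynomial slack. The generous tolerance $(\lambda+v)^{O(1)}$ in the statement is precisely what permits a unified treatment of all these regimes, as it absorbs the Stirling error, the peak width, and the discrepancy between continuous and discrete optimization in one stroke.
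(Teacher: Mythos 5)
Your approach is essentially the same as the paper's: under the substitution $n = v + 2k$, your $\phi(k)$ is exactly the paper's $F(n)$, your $t_*$ corresponds to the paper's $n_0 = \sqrt{v^2+\lambda^2}$ (via $v+2t_* = n_0$), and the verification $\phi(t_*) = \Psi_{v,\lambda}$ is the same computation. Both proceed by Stirling plus a saddle-point analysis and bound the tail of the resulting sum.

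One intermediate claim in your upper-bound argument is incorrect as stated: it is not true that the summands ``decay at least geometrically once $|k - t_*|$ exceeds $O(\sqrt{v+\lambda})$.'' The ratio of consecutive terms is $(\lambda/2)^2 / \bigl((k+1)(v+k+1)\bigr)$, which equals $1$ at $k \approx t_*$ and only drops below a fixed constant $r<1$ when $(k+1)(v+k+1)$ exceeds $(\lambda/2)^2$ by a constant factor, i.e.\ at $|k - t_*| = \Theta(t_*)$, which can be as large as $\Theta(\lambda)$. (For $v=0$ and $k = t_* + 2\sqrt{\lambda}$ the ratio tends to $1$.) This matches the paper, which only establishes linear decay of $F$ at distance $\ge 2\lambda$ from $n_0$. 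Fortunately your conclusion survives: between the peak and the onset of geometric decay there are only $O(t_*) \le O(\lambda)$ terms, each bounded by $\operatorname{poly}(v+\lambda)\exp(\Psi_{v,\lambda})$ via log-concavity, so the total is still $(v+\lambda)^{O(1)}\exp(\Psi_{v,\lambda})$. If you tighten the argument, replace the peak-width threshold $O(\sqrt{v+\lambda})$ with the correct scale $\Theta(t_*)$ at which the ratio becomes bounded away from $1$, and then invoke the crude count of in-between terms.
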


	Combining \Cref{ajcj} and \Cref{avlambdalimit}, we obtain the following corollary, which provides nearly sharp bounds on the error one can achieve for a degree $d$ polynomial approximation of $e^{-x}$ and $e^x$ on $[0, B]$. Once again, the below proposition will be stated a bit informally, and we refer to \Cref{pexponential2} below for a more precise formulation. 
	
    \begin{corollary}
    
    \label{pexponential}
    
    Let $d \ge 1$ be an integer, and let $B \ge 1$ be a real number. Set $\lambda = \frac{B}{2}$ and recall $\Psi$ from \eqref{psidefinition}. As $\lambda + d$ tends to $\infty$, we have that
    \begin{flalign}
    \label{pxexponential1intro}
    & \displaystyle\inf_{p \in \mathcal{P}_d} \displaystyle\sup_{x \in [0, B]} \big| p(x) - e^{-x} \big| = (\lambda + d)^{O(1)} \exp (\Psi_{d, \lambda} - \lambda),  \\
     & \displaystyle\inf_{p \in \mathcal{P}_d} \displaystyle\sup_{x \in [0, B]} \big| p(x) - e^x \big| = (\lambda + d)^{O(1)} \exp (\Psi_{d, \lambda} + \lambda). \label{pxexponential2intro}
    \end{flalign}
    \end{corollary}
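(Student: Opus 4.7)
The plan is to reduce the problem from $[0, B]$ to $[-1, 1]$ via the substitution $x = \lambda + \lambda y$, which induces a degree-preserving bijection $q(y) = p(\lambda + \lambda y)$ on $\mathcal{P}_d$ satisfying $|p(x) - e^{-x}| = |q(y) - e^{-\lambda y - \lambda}|$. By \Cref{amlambda}, the function $e^{-\lambda y - \lambda}$ has monic-Chebyshev expansion $\sum_{v \ge 0} 2^{v - 1} A_{v, \lambda} Q_v(y)$, so applying \Cref{ajcj} with $D = d + 1$ and $a_v = A_{v, \lambda}$ yields
\begin{flalign*}
\left( \frac{1}{2} \sum_{k = d+1}^{\infty} \frac{A_{k, \lambda}^2}{k} \right)^{1/2} \le \inf_{p \in \mathcal{P}_d} \sup_{x \in [0, B]} \bigl| p(x) - e^{-x} \bigr| \le \sum_{j = d+1}^{\infty} |A_{j, \lambda}|.
\end{flalign*}

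I would then substitute the two-sided estimate $|A_{v, \lambda}| = (\lambda + v)^{O(1)} \exp(\Psi_{v, \lambda} - \lambda)$ from \Cref{avlambdalimit} into both ends. The lower bound is immediate: keeping only the $k = d+1$ term gives at least $|A_{d+1, \lambda}|/\sqrt{2(d+1)} \ge (\lambda + d)^{-O(1)} \exp(\Psi_{d+1, \lambda} - \lambda)$. For the upper bound, I want to argue that the entire tail is controlled, up to a $(\lambda + d)^{O(1)}$ factor, by its leading term. Direct differentiation gives $\partial_v \Psi_{v, \lambda} = \log\!\bigl((\sqrt{v^2 + \lambda^2} - v)/\lambda\bigr) \le 0$, so $\Psi_{v, \lambda}$ is nonincreasing in $v$, and for $v \ge \lambda$ this derivative is at most $\log(\sqrt{2} - 1) < 0$, giving geometric decay of the coefficient bounds in that range. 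I split the sum at $v = \max(d+1, 2\lambda)$: the low-$v$ block contains at most $O(\lambda)$ terms, each of size at most $(\lambda + d)^{O(1)} \exp(\Psi_{d+1, \lambda} - \lambda)$ by monotonicity of $\Psi_{v, \lambda}$, while the high-$v$ tail is a geometric series with ratio bounded away from $1$ and so is likewise dominated by its first term, of order $(\lambda + d)^{O(1)} \exp(\Psi_{d+1, \lambda} - \lambda)$.

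To pass from $\Psi_{d+1, \lambda}$ to $\Psi_{d, \lambda}$ in both bounds, I reuse the derivative formula: the quantity $(\sqrt{v^2 + \lambda^2} - v)/\lambda = \lambda/(\sqrt{v^2 + \lambda^2} + v)$ lies in $\bigl[\Omega\!\bigl((\lambda + d)^{-1}\bigr), 1\bigr]$ for $v \in [d, d+1]$, so $|\Psi_{d+1, \lambda} - \Psi_{d, \lambda}| = O(\log(\lambda + d))$, and the resulting factor is absorbed into the $(\lambda + d)^{O(1)}$ prefactor. This yields \eqref{pxexponential1intro}. The bound \eqref{pxexponential2intro} for $e^x$ follows from the identical argument applied to $e^{\lambda y + \lambda}$ on $[-1, 1]$, using the coefficients $B_{v, \lambda}$ in place of $A_{v, \lambda}$ and the asymptotic $|B_{v, \lambda}| = (\lambda + v)^{O(1)} \exp(\Psi_{v, \lambda} + \lambda)$. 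The main obstacle I anticipate is the upper-bound analysis in the regime $d \ll \lambda$: there the Chebyshev coefficients are only slowly decreasing until $v$ reaches $\lambda$, so one cannot simply run a geometric-series bound starting at $v = d+1$. The split at $v = 2\lambda$, combined with the polynomial slack built into the asymptotic of \Cref{avlambdalimit}, is the device that resolves this.
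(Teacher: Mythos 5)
Your proposal is correct and follows essentially the same route as the paper's proof of the precise version (\Cref{pexponential2}): rescale to $[-1,1]$, invoke \Cref{ajcj} with the Chebyshev coefficients from \Cref{amlambda}, then substitute the two-sided estimate from \Cref{avlambdalimit}, using the monotonicity of $\Psi_{v,\lambda}$ together with its uniformly negative derivative for $v \gtrsim \lambda$ to control the tail. The paper's split point is $v = d + C\lambda$ (obtained via \Cref{psivlambdakappa} giving $\Psi_{v,\lambda} + \lambda \le -v$ for $v > C\lambda$) rather than your $\max(d+1,2\lambda)$, but this is a cosmetic difference; the underlying mechanism --- an $O(\lambda)$-length block bounded by monotonicity, followed by a geometrically decaying tail absorbed into the $(\lambda+d)^{O(1)}$ prefactor --- is identical.
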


    Now \Cref{estimatedegree2} and \Cref{estimatedegree1} will follow from an explicit analysis of \eqref{pxexponential1intro} and \eqref{pxexponential2intro}, respectively. For the purposes of this outline, we will omit the remaining details of analyzing the asymptotics of these expressions (referring to \Cref{ProofEstimate} for a more detailed exposition). However, let us briefly explain how the transitions from $B = \omega \big( \log (\delta^{-1}) \big)$ to $B = o \big( \log (\delta^{-1}) \big)$ arise, for example in \Cref{estimatedegree2}. 
    
    They ultimately follow from the fact that the function $\Psi_{v, \lambda} - \lambda$ behaves differently depending on whether $v = O(\lambda)$ or $v = \Omega (\lambda)$, as different terms in the definition of $\Psi_{v, \lambda}$ are dominant in each of these settings; this is stated more precisely through the following lemma, which will be established in \Cref{EstimatesPsi} below.

	\begin{lemma}
	
	\label{psivlambdakappa} 
	
	Let $\lambda \ge \frac{1}{2}$ be a real number, $v \ge 0$ be an integer, denote $\kappa = \frac{v}{\lambda}$, and recall the function $\Psi_{v, \lambda}$ from \eqref{psidefinition}. 
	
	\begin{enumerate} 
	
	\item If $v \le 2 \lambda$ (that is, $\kappa \le 2$) then $\Psi_{v, \lambda} - \lambda = - \displaystyle\frac{v^2}{2 \lambda} \big( 1 + O(\kappa) \big)$.
	
	\item If $v \ge 2 \lambda$ (that is, $\kappa \ge 2$) then $\Psi_{v, \lambda} - \lambda= - v \log \Big( \displaystyle\frac{v}{\lambda} \Big) \Big( 1 + O \big( (\log \kappa)^{-1} \big) \Big)$.
	
	\end{enumerate}

	\end{lemma}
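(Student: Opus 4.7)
The plan is to rescale by setting $\kappa = v/\lambda$, which gives $\Psi_{v,\lambda} = \lambda\,G(\kappa)$ and hence $\Psi_{v,\lambda} - \lambda = \lambda\,\bigl(G(\kappa) - 1\bigr)$, where $G$ is the function defined in \eqref{gx}. Both parts of the lemma then reduce to asymptotic estimates for $G(\kappa) - 1$ on the intervals $\kappa \in [0,2]$ and $\kappa \in [2,\infty)$, respectively. As a preliminary simplification I would use the identity $\bigl(\sqrt{\kappa^2+1}-\kappa\bigr)\bigl(\sqrt{\kappa^2+1}+\kappa\bigr) = 1$ to rewrite
\[
G(\kappa) \;=\; \sqrt{\kappa^2+1} \;-\; \kappa\,\log\bigl(\sqrt{\kappa^2+1} + \kappa\bigr),
\]
which is more convenient to expand in both regimes (and avoids taking the log of a quantity near $0$ when $\kappa$ is large).

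For Part 1, I would Taylor expand at $\kappa = 0$: standard expansions give $\sqrt{\kappa^2+1} = 1 + \kappa^2/2 + O(\kappa^4)$ and $\log(\sqrt{\kappa^2+1}+\kappa) = \operatorname{arcsinh}(\kappa) = \kappa + O(\kappa^3)$, uniformly for $\kappa \in [0,2]$ by Taylor's theorem with Lagrange remainder on a compact interval. Combining these yields $G(\kappa) - 1 = -\kappa^2/2 + O(\kappa^4)$, and multiplying by $\lambda$ while using $\lambda \kappa^2 = v^2/\lambda$ gives
\[
\Psi_{v,\lambda} - \lambda \;=\; -\frac{v^2}{2\lambda}\bigl(1 + O(\kappa^2)\bigr),
\]
which is stronger than the claimed $-\frac{v^2}{2\lambda}\bigl(1 + O(\kappa)\bigr)$ since $\kappa$ is bounded by $2$ in this regime.

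For Part 2, I would expand at $\kappa = \infty$ via the substitution $\kappa \mapsto 1/\kappa$: one obtains $\sqrt{\kappa^2+1} = \kappa + O(\kappa^{-1})$ and $\log(\sqrt{\kappa^2+1}+\kappa) = \log(2\kappa) + O(\kappa^{-2}) = \log \kappa + O(1)$, both uniformly on $[2,\infty)$ by the same compact-interval argument applied on $(0,1/2]$. Therefore $G(\kappa) = -\kappa \log \kappa + O(\kappa)$, so
\[
\Psi_{v,\lambda} - \lambda \;=\; -\lambda\kappa\log \kappa + O(\lambda\kappa) \;=\; -v\log(v/\lambda) + O(v).
\]
Since $\log \kappa \ge \log 2$ throughout this regime, the $O(v)$ error may be absorbed as a multiplicative factor $1 + O\bigl((\log \kappa)^{-1}\bigr)$, yielding the claim.

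I do not anticipate any real obstacle: the proof is essentially just two Taylor expansions of the smooth function $G$. The only subtlety is verifying that the implicit constants in the $O(\cdot)$-terms hold uniformly over the full range of $\kappa$ (not merely asymptotically as $\kappa \to 0$ or $\kappa \to \infty$), which follows routinely from Taylor's theorem with Lagrange remainder on the compact domains $[0,2]$ and $(0,1/2]$ (after inversion).
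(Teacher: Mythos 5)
Your proof is correct and takes essentially the same approach as the paper: rewrite $\Psi_{v,\lambda}-\lambda = \lambda\bigl(G(\kappa)-1\bigr)$ and Taylor-expand $G$ separately on $\kappa\in[0,2]$ and $\kappa\in[2,\infty)$. The only cosmetic difference is your use of the identity $\bigl(\sqrt{\kappa^2+1}-\kappa\bigr)\bigl(\sqrt{\kappa^2+1}+\kappa\bigr)=1$ to replace $\log\bigl(\sqrt{\kappa^2+1}-\kappa\bigr)$ by $-\log\bigl(\sqrt{\kappa^2+1}+\kappa\bigr)$, which tidies the expansion but does not change the argument.
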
 

    \noindent In particular, the first part of \Cref{psivlambdakappa} gives rise to the first part of \Cref{estimatedegree2}, and the second part of \Cref{psivlambdakappa} gives rise to the third part of \Cref{estimatedegree2}.

    Finally, we need one last tool to prove the fourth part of \Cref{estimatedegree2}. In \Cref{estimatedegree1} as well as the first three parts of \Cref{estimatedegree2}, our degree lower bound ultimately followed by finding a large coefficient in the Chebyshev expansion of $e^{\lambda x + \lambda}$ or $e^{\lambda x - \lambda}$ and then applying \Cref{ajcj}. However, when $B$ (and hence $\lambda$) is very large compared to the desired error, the Chebyshev expansion of $e^{\lambda x - \lambda}$ actually has no sufficiently large coefficients.
    
    We instead take a different approach in this last case. We observe that any polynomial $p$ satisfying the bound $\sup_{x \in [0, B]} \big| p(x) - e^{-x} \big| < \delta$ must have,
    \begin{itemize}
        \item $|p(x)| \leq 2\delta$ in the entire interval $x \in [\log(\delta^{-1}), B]$, and
        \item $p(0) \geq 1-\delta$.
    \end{itemize}
    It is known (see \Cref{extremalcheby} below) that the polynomial of lowest degree achieving these two properties must in fact be a (rescaled) Chebyshev polynomial. We then show that a Chebyshev polynomial requires degree $\Omega(\sqrt{B \log (\delta^{-1})})$ to realize these properties, from which the desired result follows.

\section{Preliminaries} \label{sec:prelims}

\subsection{Notation}

For a nonnegative integer $d$, we write $\mathcal{P}_d$ to denote the set of polynomials $p : \R \to \R$ with real coefficients of degree at most $d$.
For a Boolean predicate $P$, we write $$\textbf{1}_{P} = \begin{cases} 1 &\mbox{if } P \mbox{ is true,} \\
0 &\mbox{if } P \mbox{ is false.} \end{cases}$$ 
All logarithms in this paper are assumed to have base $e$, and we similarly write $\exp(x) := e^x$.

\subsection{Chebyshev Polynomials}

In this paper we make heavy use of the Chebyshev polynomials. Chebyshev polynomials appear prominently throughout polynomial approximation theory, and have been used in numerous other areas of theoretical computer science, including in Boolean function analysis and quantum computing; see e.g.,~\cite{bun2021guest}. Here we define them and give some of their well-known properties which will be important in our proofs. We refer the reader to~\cite{P} for more details.

The degree $d$ \emph{monic Chebyshev polynomial} $Q_d (x) \in \mathcal{P}_d$ is defined in many equivalent ways:
\begin{enumerate}[{Definition} 1]
    \item Set $Q_0 (x) = 1$ and, for each $d \ge 1$, define $Q_d (x)$ by imposing that, for each $\theta \in [0, 2 \pi]$, we have $Q_d (\cos \theta) = 2^{1 - d} \cos (d \theta)$. 
    \item $Q_0(x) = 1$, $Q_1(x) = x$, and for $d \geq 2$ we have $Q_d(x) = x \cdot Q_{d-1}(x) - \frac12 Q_{d-2}(x)$.
    \item $Q_0 (x) = 1$ and for each $d \ge 1$ we have $Q_d(x) = 2^{1-d} \cdot \sum_{k=0}^{\lfloor d/2 \rfloor} \binom{d}{2k} (x^2-1)^k x^{d-2k}$.
\end{enumerate}

In particular, $Q_d(x)$ is an even function when $d$ is even, and an odd function when $d$ is odd. From Definition 1, one observes several simple properties of $Q_d(x)$ for all $d>0$ for $x \in [-1,1]$:
\begin{itemize}
    \item For all $x \in [-1,1]$, we have $2^{d-1} \cdot Q_d(x) \in [-1,1]$.
    \item All $d$ roots of $Q_d(x)$ lie in $[-1,1]$, and they lie at the points $x = \cos \big( \pi (2k+1)/2d  \big)$ for each integer $0 \leq k < d$.
    \item On the interval $[-1,1]$, the extrema of $Q_d$ are located at the points $x = \cos\left( \pi k / d \right)$ for each integer $0 \leq k \leq d$. $Q_d(x)$ alternates between the values $2^{1-d}$ and $-2^{1-d}$ at these extrema, starting at $Q_d(\cos(0)) = Q_d(1) = 2^{1-d}$.
\end{itemize}  

Outside of the interval $[-1,1]$, it is well-known that the Chebyshev polynomials exhibit useful extremal properties.

\begin{fact} \label{extremalcheby}
For every integer $d>0$, every polynomial $p(x) \in \mathcal{P}_d$ of degree $d$ such that $2^{d-1} \cdot p(x) \in [-1,1]$ for all $x \in [-1,1]$, and every real $x' \notin [-1,1]$, we have $|Q_d(x')| \geq |p(x')|$.
\end{fact}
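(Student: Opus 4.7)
The plan is to prove \Cref{extremalcheby} by contradiction, exploiting the sign-alternation property of $Q_d$ at its extremal points on $[-1,1]$. Suppose for some $p \in \mathcal{P}_d$ of degree $d$ with $|p(x)| \leq 2^{1-d}$ on $[-1,1]$ there exists a point $x' \notin [-1,1]$ at which $|p(x')| > |Q_d(x')|$. Since all $d$ roots of $Q_d$ lie in $[-1,1]$, we have $Q_d(x') \neq 0$, so the ratio $\lambda := p(x')/Q_d(x')$ is well-defined, and the hypothesis gives $|\lambda| > 1$. I would then introduce the auxiliary polynomial $q(x) := \lambda \, Q_d(x) - p(x) \in \mathcal{P}_d$, which by construction vanishes at $x'$.

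The key step is to show that $q$ strictly alternates in sign across the $d+1$ Chebyshev extremal points $x_k = \cos(\pi k / d)$ for $k = 0, 1, \ldots, d$. Since $Q_d(x_k) = (-1)^k \cdot 2^{1-d}$, the strict inequality $|\lambda| > 1$ combined with the bound on $p$ yields
\[
|\lambda \, Q_d(x_k)| = |\lambda| \cdot 2^{1-d} > 2^{1-d} \geq |p(x_k)|,
\]
and the reverse triangle inequality forces the sign of $q(x_k) = \lambda \, Q_d(x_k) - p(x_k)$ to agree with that of $\lambda \, Q_d(x_k)$, namely $\mathrm{sign}(\lambda) \cdot (-1)^k$, which alternates with $k$. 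The intermediate value theorem then produces at least $d$ distinct zeros of $q$ in the open intervals $(x_{k+1}, x_k)$, all lying strictly inside $[-1,1]$ and hence distinct from $x'$.

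Adjoining these $d$ interior zeros to the zero at $x'$ gives at least $d+1$ distinct zeros of $q$, but $\deg q \leq d$, which forces $q \equiv 0$. This means $p = \lambda \, Q_d$ identically, and thus $\sup_{x \in [-1,1]} |p(x)| = |\lambda| \cdot 2^{1-d} > 2^{1-d}$, contradicting the hypothesis on $p$. I do not anticipate any serious obstacle here; the only subtlety is maintaining the strictness of $|\lambda| > 1$ throughout the argument, which is precisely what guarantees both that the sign-alternation of $q$ across the $x_k$'s is strict (so the interior zeros are genuinely distinct from one another and from $x'$) and that the final norm comparison yields a strict contradiction.
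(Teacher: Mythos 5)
Your proof is correct and takes essentially the same approach as the paper: both exploit the sign alternation of $Q_d$ at the $d+1$ Chebyshev extrema together with a zero-counting argument on a degree-$\le d$ auxiliary polynomial. The paper rescales $p$ slightly and reduces by symmetry to fix signs, then considers $g = p - Q_d$ which alternates at the $d+2$ points $\{x', x_0, \ldots, x_d\}$; you instead absorb the scaling into a multiplier $\lambda$, which makes $q = \lambda Q_d - p$ vanish at $x'$ outright at the cost of the extra (correctly handled) step of ruling out $q \equiv 0$ — a cosmetic difference.
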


\begin{proof}
Assume to the contrary that there is an $x' \notin [-1,1]$ such that $|Q_d(x')| < |p(x')|$. By rescaling $p$ by a factor in the range $(|Q_d(x')|/|p(x')|,1)$, we can further assume that $2^{d-1} \cdot p(x) \in (-1,1)$ for all $x \in [-1,1]$. By the symmetry of $Q_d(x)$ (and by negating $p$ if necessary), we may also assume without loss of generality that $x'>1$ and that $p(x') > Q_d(x') > 0$ are positive.

Define the difference polynomial $g(x) = p(x) - Q_d(x)$, which has degree at most $d$. Consider the $d+2$ points $x_{-1} > x_0 > x_1 > x_2 > \cdots > x_{d} \in \R$ given by \begin{itemize}
    \item $x_{-1} = x'$, and
    \item $x_k = \cos\left( \pi k / d \right)$ for each integer $0 \leq k \leq d$.
\end{itemize}
We have $g(x_{-1}) = p(x') - Q_d(x') > 0$ by assumption. For even $k \geq 0$ we have $Q_d(x_k) = 2^{1-d}$ and $|p(x_k)| < 2^{1-d}$, and so $g(x_k) < 0$. Similarly, for odd $k>0$ we have $g(x_k) > 0$. Hence, $g(x)$ alternates signs at least $d+2$ times in the interval $[x_d, x_{-1}]$, meaning it has at least $d+1$ roots in that interval, a contradiction.
\end{proof}

In fact, $Q_d(1+\eps)$ for small $\eps>0$ is very closely approximated by an exponential in $\sqrt{\eps}$:

\begin{fact} \label{chebyasympt}
For any $\eps > 0$ we have $Q_d(1+\eps) = 2^{-d} \cdot e^{d \sqrt{2 \eps} (1+O(\sqrt{\eps}))}$. 
\end{fact}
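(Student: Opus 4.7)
My plan is to invoke the classical hyperbolic substitution for the Chebyshev polynomials. From Definition~1 (the relation $Q_d(\cos\theta) = 2^{1-d}\cos(d\theta)$), analytic continuation in $\theta$ --- formally setting $\theta = i\phi$, so that $\cos\theta = \cosh\phi$ and $\cos(d\theta) = \cosh(d\phi)$ --- yields the identity $Q_d(\cosh\phi) = 2^{1-d}\cosh(d\phi)$, valid for every $d \ge 1$ and every real $\phi$. Equivalently, setting $y = e^\phi$ so that $\cosh\phi = (y+y^{-1})/2$, this reads
$$2^d Q_d\bigl((y+y^{-1})/2\bigr) = y^d + y^{-d}.$$

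To specialize at $x = 1+\eps$, I would solve $(y+y^{-1})/2 = 1+\eps$ for $y>1$ to get the explicit root $y = 1+\eps+\sqrt{2\eps+\eps^2} = 1+\sqrt{2\eps}\bigl(1+O(\sqrt{\eps})\bigr)$ as $\eps \to 0$. Taking logarithms via $\log(1+u) = u+O(u^2)$ applied with $u = \sqrt{2\eps}(1+O(\sqrt{\eps}))$ then yields $\log y = \sqrt{2\eps}(1+O(\sqrt{\eps}))$, and hence
$$y^d = \exp\bigl(d\sqrt{2\eps}\,(1+O(\sqrt{\eps}))\bigr).$$
Writing $2^d Q_d(1+\eps) = y^d(1+y^{-2d})$, the claimed formula then follows once one checks that the correction $\log(1+y^{-2d})$ --- bounded by $y^{-2d}$ since $y>1$ --- can be absorbed into the multiplicative $1+O(\sqrt{\eps})$ error on the exponent.

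The main obstacle is precisely this last absorption step: when $d\sqrt{\eps}$ is small, the correction $\log(1+y^{-2d})$ can be comparable to the main term $d\log y$, and the stated multiplicative error form would fail (e.g.\ for $d=1$ and $\eps \to 0$, $Q_1(1+\eps) = 1+\eps$ rather than $\tfrac{1}{2}e^{o(1)}$). The asymptotic should therefore be understood in the regime $d\sqrt{\eps} \to \infty$, in which $y^{-2d}$ is exponentially smaller than $d\log y$ and the absorption goes through. This is exactly the regime in which the fact is subsequently used to argue that a rescaled Chebyshev polynomial grows large at a specific point outside $[-1,1]$ during the proof of the fourth case of Theorem~1.2.
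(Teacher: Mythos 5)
Your derivation takes the same route as the paper's: both start from the closed form $2^d Q_d(x) = y^d + y^{-d}$ with $y + y^{-1} = 2x$ (the paper writes it algebraically as $Q_d(x) = 2^{-d}\big((x-\sqrt{x^2-1})^d + (x+\sqrt{x^2-1})^d\big)$, you reach the same identity via the hyperbolic substitution $\cos\theta \mapsto \cosh\phi$), and both then expand $\log y = \sqrt{2\eps}(1+O(\sqrt\eps))$.

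The caveat you raise at the end is a real one, and the paper's one-line proof silently elides it. For $d=1$ and $\eps\to 0$, $Q_1(1+\eps)=1+\eps$ while the stated form gives $\tfrac12(1+o(1))$, so the equality with multiplicative error $(1+O(\sqrt\eps))$ on the exponent cannot hold uniformly in $d$ and $\eps$. The obstruction is exactly the one you name: the extra factor $1+y^{-2d}\in(1,2]$ contributes an additive $\log(1+y^{-2d})\le\log 2$ to $\log\big(2^dQ_d(1+\eps)\big)$, which must be absorbed into the $O(d\eps)$ error budget on $d\sqrt{2\eps}$. Your proposed regime $d\sqrt\eps\to\infty$ is not quite the right sufficient condition --- since $\log(1+y^{-2d})$ can remain larger than $Cd\eps$ even with $d\sqrt\eps\to\infty$ if $\eps\to 0$ fast enough --- but the simplest clean fix is that $d\eps=\Omega(1)$ always suffices (then $y^{-2d}<1\le O(d\eps)$).

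You also correctly observe that this imprecision is harmless where \Cref{chebyasympt} is applied. In \Cref{degreeestimateb}, only the upper-bound direction is used, and the inequality $Q_d(1+\eps)\le 2\cdot 2^{-d}e^{d\sqrt{2\eps}(1+O(\sqrt\eps))}$ holds unconditionally for all $d\ge 1$ and $\eps\in(0,1)$. The spurious factor of $2$ becomes an additive $\log 2$ after taking logarithms, which is swallowed by the $\log(\delta^{-1})\to\infty$ main term (or, equivalently, by the $\log\big((2\delta)^{-1}\big)$ already appearing in the conclusion of \Cref{degreeestimateb}), so the downstream argument is unaffected.
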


\begin{proof}
Extending Definition 1 of $Q_d(x)$ to $x \notin [-1,1]$, we find that
$$Q_d(x) = 2^{-d} \left( \left( x - \sqrt{x^2 - 1} \right)^d +  \left( x + \sqrt{x^2 - 1} \right)^d  \right), \qquad \text{for each $x$ with $|x|>1$.}$$

\noindent For $x = 1+\eps$ with $\eps>0$, we thus get
$$Q_d(1+\eps) = 2^{-d} \left( \left( 1 - \sqrt{2\eps} + O(\eps) \right)^d +  \left( 1+\sqrt{2\eps } + O(\eps) \right)^d  \right) = 2^{-d} e^{d \sqrt{2 \eps} (1+O(\sqrt{\eps}))},$$
as desired.
\end{proof}

	\subsection{Chebyshev Expansion Coefficients}
	
	\label{ProofSumaj}
	
	In this section we prove \Cref{ajcj}, which shows how the coefficients of a function $f$ written in the basis of Chebyshev polynomials can be used to bound how well $f$ can be approximated by low-degree polynomials.

	\begin{proof}[Proof of \Cref{ajcj}] 
	
	    Observe for any $d \in \mathbb{Z}_{\ge 0}$ and $x \in [-1, 1]$ that $\big| Q_d (x) \big| \le 2^{1 - d}$, which follows from \eqref{qdtheta} after setting $x = \cos \theta$. This implies the absolute convergence of $f(x)$ for $x \in [-1, 1]$, since $\sum_{j = 0}^{\infty} |a_j| < \infty$. So, it remains to establish \eqref{pxfx}. 
		
		To establish the upper bound there, define
		\begin{flalign}
		\label{hdx}
		H_D (x) = \displaystyle\sum_{j = 0}^{D - 1} 2^{j - 1} a_j Q_j (x).
		\end{flalign}
		
		\noindent Since $\big| Q_d (x) \big| \le 2^{1 - d}$ for each $x \in [-1, 1]$, we have that 
		\begin{flalign}
		\label{hdxfxestimate} 
		\displaystyle\sup_{x \in [-1, 1]} \big| H_D (x) - f (x) \big|  = \displaystyle\sup_{x \in [-1, 1]} \Bigg| \displaystyle\sum_{j = D}^{\infty} 2^{j - 1} a_j Q_j (x)  \Bigg| \le \displaystyle\sum_{j = D}^{\infty} |a_j|,
		\end{flalign}
		
		\noindent which proves the upper bound in \eqref{pxfx}.
		
		To establish the lower bound, fix $p \in \mathcal{P}_{D - 1}$, and let $c_0, c_1, \ldots \in \mathbb{R}$ satisfy 
		\begin{flalign*}
		p(x) = \displaystyle\sum_{j = 0}^{D - 1} 2^{j - 1} c_j Q_j (x), \quad \text{and $c_j = 0$ for $j \ge D$}. 
		\end{flalign*}
		
		\noindent Then, applying \eqref{qdtheta}, we obtain 
		\begin{flalign}
		\label{pxfx1} 
		\begin{aligned}
		p(\cos \theta) - f(\cos \theta) & = \displaystyle\sum_{j = 0}^{\infty} 2^{j - 1} (a_j - c_j) Q_j (\cos \theta)  = \displaystyle\frac{a_0 - c_0}{2} + \displaystyle\sum_{j = 1}^{\infty} (a_j - c_j) \cos (j \theta).
		\end{aligned}
		\end{flalign} 
		
		\noindent Define $b_0, b_1, \ldots \in \mathbb{R}$ by setting $b_0 = \frac{a_0 - c_0}{2}$ and  $b_j = a_j - c_j$ for $j > 0$, and observe that 
		\begin{flalign*} 
		\int_0^{2 \pi} \cos (j \theta) \cos (k \theta) d \theta = 0, \qquad \text{for $j \ne k$},
		\end{flalign*} 
		
		 \noindent and 
		 \begin{flalign*} 
		 \displaystyle\int_0^{2\pi} \cos^2 (k \theta) d \theta = \displaystyle\frac{1}{|k|} \displaystyle\int_0^{2 \pi} \cos^2 ( \theta) d \theta = \pi |k|^{-1} \textbf{1}_{k \ne 0} + (2 \pi) \textbf{1}_{k = 0},
		\end{flalign*}
		
		\noindent it follows that
		\begin{flalign*}
		\displaystyle\int_0^{2 \pi} \big| p (\cos \theta) - f (\cos \theta) \big|^2 & = \displaystyle\int_0^{2 \pi} \Bigg( \displaystyle\sum_{j = 0}^{\infty} b_j \cos (j \theta) \Bigg)^2 d \theta \\ 
		& = \displaystyle\sum_{j = 0}^{\infty} \displaystyle\sum_{k = 0}^{\infty} \displaystyle\int_0^{2 \pi} b_j b_k \cos (j \theta) \cos (k \theta) d \theta = 2 \pi b_0^2 + \pi \displaystyle\sum_{k = 1}^{\infty} \displaystyle\frac{b_k^2}{k}.
		\end{flalign*}
		
		\noindent Thus, since $b_j = a_j$ for $j \ge D$ (since $c_j = 0$ for $j \ge D$), we deduce
		\begin{flalign*}
		\displaystyle\sup_{x \in [-1, 1]} \big| p (x) - f(x) \big|^2 = 
		\displaystyle\sup_{\theta \in [0, 2 \pi]} \big| p (\cos \theta) - f(\cos \theta) \big|^2 & \ge \displaystyle\frac{1}{2 \pi} \displaystyle\int_0^{2 \pi} \big| p (\cos \theta) - f (\cos \theta) \big|^2 d \theta \\
		& \ge \displaystyle\frac{1}{2} \displaystyle\sum_{k = D}^{\infty} \displaystyle\frac{a_k^2}{k},
		\end{flalign*}
		
		\noindent which yields the proposition. 
	\end{proof}

    Finally, we recall a well-known \cite{P} identity expressing a monomial as an explicit linear combination of Chebyshev polynomials. 
    
    \begin{lemma}[{\cite[(2.14)]{P}}]
    
    \label{xqsum}
    
    For any integer $n \ge 0$, we have that 
    \begin{flalign*}
	x^n = \displaystyle\sum_{k = 0}^{\lfloor n / 2 \rfloor} 2^{- 2k} \binom{n}{k} Q_{n - 2k} (x).
	\end{flalign*}
    
    \end{lemma}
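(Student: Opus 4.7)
The plan is to prove the identity by substituting $x = \cos\theta$ and comparing two Fourier-type expansions. Since both sides of the claimed equation are polynomials in $x$, establishing equality on $[-1, 1]$ (the image of $\cos$) will suffice to give the identity on all of $\mathbb{R}$.

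First, using Definition 1 of the monic Chebyshev polynomials, I would rewrite each $Q_{n-2k}(\cos\theta) = 2^{1-(n-2k)}\cos((n-2k)\theta)$ for $n-2k \ge 1$, while $Q_0 = 1$. Plugging this into the right-hand side, the coefficients collapse uniformly as $2^{-2k}\cdot 2^{1-(n-2k)} = 2^{1-n}$ for each $k$ with $n - 2k \ge 1$; when $n$ is even and $k = n/2$, the term instead contributes a lone $2^{-n}\binom{n}{n/2}$. Thus the right-hand side becomes an explicit trigonometric polynomial in $\theta$, with $\cos((n-2k)\theta)$ appearing with coefficient $2^{1-n}\binom{n}{k}$ for $k < n/2$, and the constant $2^{-n}\binom{n}{n/2}$ added when $n$ is even.

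Second, I would expand the left-hand side via $\cos^n\theta = 2^{-n}(e^{i\theta} + e^{-i\theta})^n$ and apply the binomial theorem to get $2^{-n}\sum_{j=0}^{n}\binom{n}{j}e^{i(2j-n)\theta}$. Pairing each index $j$ with its partner $n - j$ combines the exponentials $e^{i(n-2j)\theta} + e^{-i(n-2j)\theta}$ into $2\cos((n-2j)\theta)$, producing the same trigonometric expansion derived for the right-hand side; when $n$ is even, the index $j = n/2$ is self-paired and contributes the unpaired constant term $2^{-n}\binom{n}{n/2}$, again matching.

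The only delicate point is bookkeeping in the even-$n$ case, where one must verify that the self-paired central term carries exactly the right scaling and is not double-counted. After this check, matching coefficients term by term (using linear independence of distinct cosines $\cos((n-2k)\theta)$ as functions of $\theta$) completes the proof. I do not foresee any genuine obstacle beyond this piece of parity bookkeeping.
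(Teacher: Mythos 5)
Your proof is correct. Note, though, that the paper does not actually prove this lemma — it simply cites it as identity (2.14) from Mason and Handscomb's book on Chebyshev polynomials — so there is no "paper proof" to compare against. Your argument is the standard one for such identities: substitute $x = \cos\theta$, expand $\cos^n\theta$ via the binomial theorem applied to $2^{-n}(e^{i\theta}+e^{-i\theta})^n$, and pair conjugate exponentials to obtain a cosine polynomial, then check it matches the right-hand side evaluated using $Q_m(\cos\theta) = 2^{1-m}\cos(m\theta)$. The parity bookkeeping you flag is exactly right: for $n-2k \ge 1$ the scaling collapses to a uniform $2^{1-n}\binom{n}{k}$, while the even-$n$ central term $k = n/2$ contributes $2^{-n}\binom{n}{n/2}$ via $Q_0 = 1$ on one side and via the self-paired index $j = n/2$ on the other, and these agree. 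Since both sides are polynomials agreeing on $[-1,1]$, they agree identically. No gaps.
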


\section{Degree Bounds for Polynomial Approximations}

\label{ProofEstimate}

	\subsection{Estimating $A_{v, \lambda}$ and $B_{v, \lambda}$}

	\label{Proofavlambdabvlambda}

	In this section we analyze $A_{v, \lambda}$ and $B_{v, \lambda}$ from \eqref{avlambdadefinition}. We begin with the proof of \Cref{amlambda}.
	
	\begin{proof}[Proof of \Cref{amlambda}] 
	
	We only establish the first statement in \eqref{absum}, as the proof of the second is entirely analogous. To that end, first using the series expansion for $e^{-z} = \sum_{n = 0}^{\infty} \frac{(-z)^n}{n!}$ and then applying \Cref{xqsum}, yields
	\begin{flalign*}
	e^{-\lambda x - \lambda} = e^{-\lambda} \displaystyle\sum_{n = 0}^{\infty} \displaystyle\frac{(-\lambda)^n}{n!} x^n =  e^{-\lambda} \displaystyle\sum_{n = 0}^{\infty} \displaystyle\frac{(-\lambda)^n}{n!} \displaystyle\sum_{k = 0}^{\lfloor n / 2 \rfloor} 2^{- 2k} \binom{n}{k} Q_{n - 2k} (x).
	\end{flalign*}
	
	\noindent Then, by setting $v = n - 2k$, we obtain
	\begin{flalign*}
	e^{-\lambda x - \lambda} = e^{-\lambda} \displaystyle\sum_{v = 0}^\infty  Q_v (x) \displaystyle\sum_{k = 0}^{\infty}  \displaystyle\frac{(-\lambda)^{v + 2k}}{(v + 2k)!} 2^{-2k} \binom{v + 2k}{k} = \displaystyle\sum_{v = 0}^\infty 2^{v - 1} A_{v, \lambda} Q_v (x),
	\end{flalign*} 
	
	\noindent from which we deduce the lemma. 
	\end{proof}

	We next have the following proposition that more precisely formulates \Cref{avlambdalimit}. 
	
		\begin{proposition}
		
	\label{avlambdalimit2}
	
	There exist constants $C, c > 0$ such that the following holds. For any integer $v \ge 0$ and real number $\lambda \ge \frac{1}{2}$, recall the quantities $A_{v, \lambda}$ and $B_{v, \lambda}$ from \eqref{avlambdadefinition} and $\Psi_{\lambda, v}$ from \eqref{psidefinition}. 
	
	\begin{enumerate} 
	
	\item For any $v$ and $\lambda$ as above, we have that 
		\begin{flalign*}
		& c (v + \lambda)^{-1} \exp \big( \Psi_{v, \lambda} - \lambda \big) \le (-1)^v A_{v, \lambda} \le C (v + \lambda) \exp \big( \Psi_{v, \lambda} - \lambda \big); \\
		& c (v + \lambda)^{-1} \exp \big( \Psi_{v, \lambda} + \lambda \big) \le B_{v, \lambda} \le C (v + \lambda) \exp \big( \Psi_{v, \lambda} + \lambda \big).
		\end{flalign*}
		
	\item If $v \le \lambda$, then 
	\begin{flalign*} 
	c (v + \lambda)^{-1} \exp \big( \Psi_{v, \lambda} - \lambda \big) \le (-1)^v A_{v, \lambda} \le C \lambda^{-1/2} \exp \big( \Psi_{v, \lambda} - \lambda \big).
	\end{flalign*}
	\end{enumerate} 
		
	\end{proposition}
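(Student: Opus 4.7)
My plan is to perform a saddle-point analysis on the explicit series from \Cref{amlambda}. Noting that $B_{v,\lambda} = (-1)^v e^{2\lambda} A_{v,\lambda}$ (evident from \eqref{avlambdadefinition}), it suffices to estimate $(-1)^v A_{v,\lambda}$. Reindexing by $n = v + 2k$, we have
\begin{flalign*}
(-1)^v A_{v,\lambda} = 2 e^{-\lambda} \sum_{k=0}^{\infty} T_k, \qquad T_k := \frac{\lambda^{v+2k}}{2^{v+2k}(v+k)!\,k!}.
\end{flalign*}
The ratio $T_{k+1}/T_k = \lambda^2/\bigl(4(v+k+1)(k+1)\bigr)$ crosses $1$ at $k = k^* := (\sqrt{v^2+\lambda^2}-v)/2$, the positive root of $k(v+k)=\lambda^2/4$. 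With the complementary quantity $\mu := v+k^* = (\sqrt{v^2+\lambda^2}+v)/2$, the identity $\mu k^* = \lambda^2/4$ holds, and $\mu + k^* = \sqrt{v^2+\lambda^2}$.

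The key computation is to evaluate $T_{k^*}$ via Stirling's formula. The ratios $\lambda/(2\mu)$ and $\lambda/(2k^*)$ are reciprocals (both equal $(\sqrt{v^2+\lambda^2} \mp v)/\lambda$ after rationalization), so the exponential factors $(\lambda/(2\mu))^\mu (\lambda/(2k^*))^{k^*}$ collapse to $\exp\bigl(v \log((\sqrt{v^2+\lambda^2}-v)/\lambda)\bigr)$. Combined with $e^{\mu+k^*} = e^{\sqrt{v^2+\lambda^2}}$ and the $\sqrt{\mu k^*} = \lambda/2$ denominator from Stirling, this yields
\begin{flalign*}
T_{k^*} = \Theta\bigl(\lambda^{-1}\exp(\Psi_{v,\lambda})\bigr),
\end{flalign*}
so $(-1)^v A_{v,\lambda}$ will be $\Theta(\lambda^{-1}\exp(\Psi_{v,\lambda}-\lambda))$ times whatever polynomial-in-$(v+\lambda)$ factor the summation introduces.

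The lower bound in both parts then follows from $\sum_k T_k \geq T_{\lfloor k^* \rfloor}$ when $k^* \geq 1$; in the complementary regime $v \gtrsim \lambda^2$ (where $k^*<1$), a direct Stirling estimate of $T_0 = (\lambda/2)^v/v!$ gives $T_0 = \Theta(v^{-1/2}\exp(\Psi_{v,\lambda})) \geq c(v+\lambda)^{-1}\exp(\Psi_{v,\lambda})$. For the upper bound in part 1, I would partition the summation at $k^*$ and use that $T_{k+1}/T_k \leq 1/4$ once $(v+k+1)(k+1) \geq \lambda^2$, while $T_k/T_{k+1} \leq 1/4$ once $(v+k+1)(k+1) \leq \lambda^2/16$: geometric tails control everything outside the intermediate range, whose size is $O(v+\lambda)$ and whose terms are each $\leq T_{k^*}$, producing the crude bound $\sum_k T_k = O((v+\lambda)T_{k^*})$. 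For the sharper bound in part 2, where $v \leq \lambda$ forces $k^* = \Theta(\lambda)$, I would replace the crude middle-range count by a Laplace-method estimate: the second derivative $(\log T_k)''\big|_{k=k^*} = -(1/\mu + 1/k^*) = -4\sqrt{v^2+\lambda^2}/\lambda^2 = \Theta(1/\lambda)$, so $T_k$ is approximately Gaussian around $k^*$ with standard deviation $\sigma = \Theta(\lambda^{1/2})$, giving $\sum_k T_k = O(\sigma \cdot T_{k^*}) = O(\lambda^{-1/2}\exp(\Psi_{v,\lambda}))$. This is made rigorous by a second-order Taylor expansion of $\log T_k$ around $k^*$, yielding $T_k \leq T_{k^*} \exp(-c(k-k^*)^2/\lambda)$ on a window around $k^*$, together with the geometric tail bounds outside.

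The main obstacle will be keeping Stirling's approximation uniform across the full range of $(v,\lambda,k)$, particularly in the degenerate regime $v \gtrsim \lambda^2$ where $k^*<1$ and the peak of $T_k$ sits at the boundary $k=0$. This forces either a bifurcation of the analysis (aligning the $T_0$-based and $T_{k^*}$-based estimates across their common boundary) or a unified treatment with carefully-controlled Stirling remainders that can be absorbed into the $(v+\lambda)^{\pm 1}$ prefactors. The sharpened $\lambda^{-1/2}$ bound in part 2, moreover, genuinely requires the Gaussian-width analysis rather than only a geometric-tail argument, so executing the Laplace step rigorously (including the second-order Taylor bound on $\log T_k$) is the most technical piece.
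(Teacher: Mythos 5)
Your proposal is correct and takes essentially the same saddle-point route as the paper: after reindexing $n=v+2k$, your $\log T_k$ (modulo Stirling prefactors) is exactly the paper's $F(n)$ from \eqref{f}, your ratio-test location $k^*$ corresponds to the paper's critical point $n_0=\sqrt{v^2+\lambda^2}$ via $n_0=v+2k^*$, your bifurcation at $k^*<1$ matches the paper's $v\gtrsim\lambda^2$ case in the lower bound, and your Laplace/Gaussian-window argument with $(\log T_k)''=\Theta(1/\lambda)$ is precisely the quadratic estimate $F(n_0+z)\le F(n_0)-c\lambda^{-1}z^2$ used for the sharper part-2 bound. The paper works with a continuous $F(n)$ and its derivatives after applying Stirling once, while you work with the discrete ratios $T_{k+1}/T_k$; these are interchangeable, and the technical obstacle you flag (uniform Stirling control and the $k^*<1$ boundary regime) is exactly what the paper's \Cref{fn0} handles by choosing $m\in\{\lfloor n_0\rfloor,\lceil n_0\rceil\}$ according to whether $v\le\lambda^2$ or $v\ge\lambda^2$.
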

	
	To to establish the above result, observe that the two quantities $A_{v, \lambda}$ and $B_{v, \lambda}$ are quite similar, in that they both involve certain sum given by
	\begin{flalign}
	\label{evlambdadefinition}
	E_{v, \lambda} = \sum_{n - v \in 2 \mathbb{Z}_{\ge 0}} \displaystyle\frac{\lambda^n}{2^n n!} \binom{n}{\frac{n - v}{2}}.a
	\end{flalign}
    
    \noindent We therefore require the following proposition that estimates $E_{v, \lambda}$. Observe its second statement slightly improves the upper bound in its first statement for $\lambda$ large (which will be useful in analyzing $d_{B; \delta}$ in the regime of large $B$ below).
    
    \begin{proposition}
    
    \label{estimateevlambda} 
    
    There exist constants $C, c > 0$ such that the following holds. For any integer $v \ge 0$ and real number $\lambda \ge \frac{1}{2}$, recall the quantities $E_{v, \lambda}$ and $\Psi_{v, \lambda}$ from \eqref{evlambdadefinition} and \eqref{psidefinition}, respectively. 
    
    \begin{enumerate}
    	\item If $v \ge \lambda$, then $c (v + \lambda)^{-1} \exp \big( \Psi_{v, \lambda} \big) \le E_{v, \lambda} \le C (v + \lambda) \exp \big( \Psi_{v, \lambda} \big)$.
		
	\item If $v \le \lambda$, then $c (v + \lambda)^{-1} \exp \big( \Psi_{v, \lambda} \big) \le E_{v, \lambda} \le C \lambda^{-1/2} \exp \big( \Psi_{v, \lambda} \big)$.
		
	\end{enumerate} 
	\end{proposition}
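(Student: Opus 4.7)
The plan is a direct saddle-point analysis of $E_{v,\lambda}$. First I would reindex by setting $n = v + 2k$, which puts the sum in the cleaner form
\begin{flalign*}
E_{v,\lambda} = \sum_{k = 0}^{\infty} a_k, \qquad a_k := \frac{(\lambda/2)^{v+2k}}{k! \, (v+k)!}.
\end{flalign*}
The ratio $a_{k+1}/a_k = \lambda^2 / \bigl(4(k+1)(v+k+1)\bigr)$ equals $1$ precisely at
\begin{flalign*}
k^* := \frac{\sqrt{v^2 + \lambda^2} - v}{2},
\end{flalign*}
so the sequence $\{a_k\}$ is unimodal with peak near $k^*$. Setting $A = k^*$ and $B = v + k^*$, we have the algebraic identities $A + B = \sqrt{v^2 + \lambda^2}$ and $AB = \lambda^2 / 4$; after applying Stirling's formula to both $A!$ and $B!$ and using $(\lambda/2)^{A+B} = (AB)^{(A+B)/2}$, a short calculation collapses to
\begin{flalign*}
\log a_{k^*} = \Psi_{v,\lambda} - \tfrac{1}{2} \log(\pi \lambda) + O\bigl((v+\lambda)^{-1}\bigr).
\end{flalign*}
In particular $a_{k^*} = (v+\lambda)^{-\Theta(1)} \exp(\Psi_{v,\lambda})$, and since $E_{v,\lambda} \ge a_{\lceil k^* \rceil}$, this already yields the lower bounds in both cases of the proposition (handling the corner case $k^* < 1$, which occurs only when $v \gg \lambda$, by estimating $a_0 = (\lambda/2)^v/v!$ directly with Stirling).

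For the upper bounds I would exploit that, away from $k^*$, the ratios $a_{k+1}/a_k$ are bounded away from $1$. Concretely, writing $\log a_k$ via Stirling and differentiating twice gives $\partial_k^2 \log a_k \approx -k^{-1} - (v+k)^{-1}$, which evaluates to $-4 \sqrt{v^2+\lambda^2}/\lambda^2$ at $k = k^*$ and so predicts Gaussian-like decay of width $\sigma := \lambda (v^2+\lambda^2)^{-1/4}/2$ around the maximum. I would then split the sum into a \emph{bulk} piece $|k - k^*| \le t$ bounded trivially by $(2t+1) \, a_{k^*}$, and a \emph{tail} bounded by a geometric series using the explicit ratio formula.

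For case (1), $v \ge \lambda$, even the crude choice $t = v + \lambda$ (together with the geometric tail) yields
\begin{flalign*}
E_{v,\lambda} \le O(v + \lambda) \cdot a_{k^*} \le C(v + \lambda) \exp(\Psi_{v,\lambda}),
\end{flalign*}
which is what is claimed. For case (2), $v \le \lambda$, we have $\sigma = \Theta(\sqrt{\lambda})$, and choosing $t = \Theta(\sqrt{\lambda} \log \lambda)$ makes the tail geometric and negligible relative to the bulk, giving the sharper bound $E_{v,\lambda} \le O(\sqrt{\lambda}) \cdot a_{k^*} \le C \lambda^{-1/2} \exp(\Psi_{v,\lambda})$.

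The step I expect to be most delicate is keeping the polynomial prefactors honest across the two regimes and the corner case $k^* < 1$. In particular, when $\lambda^2 \ll v$ one must argue directly from $a_0$ rather than from a Stirling expansion at $k^*$, and one must verify that the identity $\log a_0 = \Psi_{v,\lambda} + O(\log(v+\lambda))$ still holds using the expansion $\Psi_{v,\lambda} = v + v \log(\lambda/(2v)) + O(\lambda^2/v)$ in that limit. Once this uniformity is checked, both the lower and the two upper bounds of \Cref{estimateevlambda} follow from the bulk/tail decomposition above.
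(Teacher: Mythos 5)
Your strategy is the same as the paper's: reindex, apply Stirling, locate the saddle point at $k^*$ (equivalently, $n_0=\sqrt{v^2+\lambda^2}=v+2k^*$), show the peak term has value $\exp(\Psi_{v,\lambda})$ up to a polynomial prefactor, then split the sum into a bulk around the peak and a geometrically decaying tail. The lower bounds and the case-(1) upper bound go through essentially as you describe (modulo the minor arithmetic slip that the Stirling prefactor at $k^*$ is $(\pi\lambda)^{-1}$, not $(\pi\lambda)^{-1/2}$, since both $\sqrt{2\pi A}$ and $\sqrt{2\pi B}$ contribute and $\sqrt{AB}=\lambda/2$; this does not change the $\Theta$-level conclusions).

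There is, however, a genuine gap in your case-(2) upper bound. You bound the bulk $|k-k^*|\le t$ trivially by $(2t+1)\,a_{k^*}$ with $t=\Theta(\sqrt{\lambda}\log\lambda)$ and then assert this is $O(\sqrt{\lambda})\,a_{k^*}$; but $(2t+1)=\Theta(\sqrt{\lambda}\log\lambda)$, so the trivial bound gives $E_{v,\lambda}\le O(\sqrt{\lambda}\log\lambda)\,a_{k^*}=O(\lambda^{-1/2}\log\lambda)\exp(\Psi_{v,\lambda})$, which is weaker than the stated $C\lambda^{-1/2}\exp(\Psi_{v,\lambda})$ by a $\log\lambda$ factor. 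You cannot simply shrink $t$ to $\Theta(\sqrt{\lambda})$, since the ratio $a_{k+1}/a_k$ at $k=k^*+c\sqrt{\lambda}$ is only $1-\Theta(\lambda^{-1/2})$ away from $1$, so the tail beyond that point is not yet summable as a fast geometric series. The fix (which is exactly what the paper does via its fourth estimate on $F$) is to use, \emph{inside the bulk}, the quadratic upper bound $\log a_k\le\log a_{k^*}-c\lambda^{-1}(k-k^*)^2$ that follows from your second-derivative computation $\partial_k^2\log a_k\lesssim-\lambda^{-1}$ for $v\le\lambda$; then $\sum_{|k-k^*|\le t}a_k\le a_{k^*}\sum_{j\in\Z}e^{-cj^2/\lambda}=O(\sqrt{\lambda})\,a_{k^*}$, which combined with $a_{k^*}=\Theta(\lambda^{-1})\exp(\Psi_{v,\lambda})$ yields the sharp $C\lambda^{-1/2}\exp(\Psi_{v,\lambda})$. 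Once that step is corrected, your argument matches the paper's.
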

	
	Proposition \ref{avlambdalimit} now follows directly from \Cref{estimateevlambda}. 

	\begin{proof}[Proof of \Cref{avlambdalimit2} Assuming \Cref{estimateevlambda}]

    Given \Cref{estimateevlambda}, \Cref{avlambdalimit2} follows from the facts that $(-1)^v A_{v, \lambda} = 2 e^{-\lambda} E_{v, \lambda}$ and $B_{v, \lambda} = 2 e^{\lambda} E_{v, \lambda}$.
    \end{proof}
	
	We must thus establish \Cref{estimateevlambda}, to which end we begin with the following lemma. 
	
	\begin{lemma} 
	
	\label{estimateevlambda1} 
	
	For any integer $v \ge 1$ and real number $\lambda \ge \frac{1}{2}$, we have that
	\begin{flalign}
	\label{evlambdafestimate} 
	E_{v, \lambda} = \Theta \Bigg( \displaystyle\sum_{n - v \in 2 \mathbb{Z}_{\ge 0}} \big( n^2 - v^2 + n \big)^{-1 / 2} \exp \big( F (n) \big) \Bigg),
	\end{flalign}
	
	\noindent where for any real number $n \ge v$ we have defined $F (n) = F (n)$ by 
	\begin{flalign}
	\label{f}
	F (n) = n \log \lambda - n \log 2 + n - \Big( \frac{n - v}{2} \Big) \log \Big( \frac{n - v}{2} \Big) - \Big( \frac{n + v}{2} \Big) \log \Big( \frac{n + v}{2} \Big).
	\end{flalign}
	
	\end{lemma}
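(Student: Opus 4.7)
The plan is to apply Stirling's formula term by term to the defining sum and match the resulting prefactor with $(n^2 - v^2 + n)^{-1/2}$. Writing $a = (n-v)/2$ and $b = (n+v)/2$, so that $a + b = n$ and $4ab = n^2 - v^2$, each summand becomes
\[
\frac{\lambda^n}{2^n n!}\binom{n}{a} = \frac{\lambda^n}{2^n \, a! \, b!}.
\]

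For summands with $a \ge 1$, Stirling's formula gives $k! = \Theta(\sqrt{k}) \cdot (k/e)^k$ uniformly for $k \ge 1$, so
\[
\frac{\lambda^n}{2^n \, a! \, b!} = \Theta\bigl((ab)^{-1/2}\bigr) \exp\bigl(n\log\lambda - n\log 2 + (a+b) - a\log a - b\log b\bigr),
\]
and the exponent coincides with $F(n)$ from \eqref{f} once one uses $a + b = n$ and the convention $0\log 0 = 0$. Since $a \ge 1$ and $b \ge a$ imply $b \ge n/2$, one has $4ab \ge 2n$, and hence
\[
n^2 - v^2 \le n^2 - v^2 + n \le \tfrac{3}{2}(n^2 - v^2).
\]
This allows us to replace the prefactor $(ab)^{-1/2} = 2(n^2 - v^2)^{-1/2}$ by $(n^2 - v^2 + n)^{-1/2}$ at the cost of an absolute constant.

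The only term not covered is the boundary $n = v$, where $a = 0$ and Stirling does not apply to $a! = 1$. Here the summand equals $\lambda^v/(2^v v!)$, which by Stirling applied only to $b! = v!$ is $\Theta(v^{-1/2}) \exp(F(v))$, where $F(v) = v\log\lambda - v\log 2 + v - v\log v$. On the other hand, $(n^2 - v^2 + n)^{-1/2} = v^{-1/2}$ at $n = v$, so the boundary term also matches the claimed form. Summing both cases yields \eqref{evlambdafestimate}.

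The main point requiring care is the unified prefactor: verifying that the single expression $(n^2 - v^2 + n)^{-1/2}$ correctly interpolates between $\Theta\bigl((ab)^{-1/2}\bigr)$ in the interior $a \ge 1$ and $\Theta(v^{-1/2})$ at the boundary $a = 0$, so that one formula captures both regimes with a uniform implicit constant. The additive $O(1/a) + O(1/b)$ errors from Stirling's expansion of $\log(a! \, b!)$ are absorbed into that constant, so beyond this bookkeeping no further asymptotic analysis is needed.
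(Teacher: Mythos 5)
Your proof is correct and takes essentially the same approach as the paper: apply Stirling's formula term by term to $a! = ((n-v)/2)!$ and $b! = ((n+v)/2)!$ and then match the polynomial prefactor with $(n^2-v^2+n)^{-1/2}$. The only cosmetic difference is that the paper uses the form $k! = \Theta\big((k+1)^{k+1/2}e^{-k}\big)$, valid uniformly for $k \geq 0$, which gives the prefactor $\big((n+2)^2 - v^2\big)^{-1/2} = \Theta\big((n^2-v^2+n)^{-1/2}\big)$ in one stroke and avoids the separate treatment of the boundary term $n = v$ that you carry out by hand.
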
 
	
	\begin{proof} 
	
	The explicit form \eqref{evlambdadefinition} for $E_{v, \lambda}$ and the Stirling estimate $n! = \Theta \big( (n + 1)^{n + 1 /2} e^{- n} \big)$, which holds uniformly in $n \ge 0$, together imply 
	\begin{flalign*}
	E_{v, \lambda} = \Theta \Bigg( \displaystyle\sum_{n - v \in 2 \mathbb{Z}_{\ge 0}} \big( (n + 2)^2  - v^2 \big)^{-1 / 2} \exp \big( F (n) \big) \Bigg).
	\end{flalign*} 
	
	\noindent From this, we deduce the lemma since $(n + 2)^2 - v^2 = \Theta (n^2 - v^2 + n)$, uniformly in $n \ge v$. 
	\end{proof} 
	
	The right side of \eqref{evlambdafestimate} will be dominated by the terms near which $F$ is maximized, so we next perform a critical point analysis on $F$. 
	
	\begin{lemma} 
	
	\label{fn0}
	
	Fix an integer $v \ge 1$ and a real number $\lambda \ge \frac{1}{2}$, and set $n_0 = n_0 (v, \lambda) =  \sqrt{v^2 + \lambda^2}$. There exist constants $c, C > 0$ (independent of $v$ and $\lambda$) such that the following holds.
	
	\begin{enumerate}
	    \item The function $F (n)$ is maximized at $n = n_0$, and $F (n_0) = \Psi_{v, \lambda}$ (recall \eqref{psidefinition}). 
	   
	    \item For $z \ge 2 \lambda$, we have that $F (n_0 + z) \le F (n_0) - c z$.
	    
	    \item For at least one choice of $m \in \big\{ \lfloor n_0 \rfloor, \lceil n_0 \rceil \big\}$, we have that $F (m) \ge F (n_0) - C$.
	    
	    \item If $v \le \lambda$, then for any $z \in [v - n_0, 2 \lambda]$ we have that $F (n_0 + z) \le F (n_0) - c \lambda^{-1} z^2$.
	     
	\end{enumerate}
	\end{lemma}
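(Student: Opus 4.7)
For part (1), a direct computation gives $F'(n) = \log \lambda - \tfrac{1}{2} \log(n^2 - v^2)$ and $F''(n) = -n/(n^2 - v^2)$, so $F$ is strictly concave on $(v, \infty)$, with unique critical point $n_0 = \sqrt{v^2 + \lambda^2}$. To evaluate $F(n_0)$, I would set $a = (n_0 - v)/2$ and $b = (n_0 + v)/2$ (so $a + b = n_0$ and $ab = \lambda^2/4$), yielding $F(n_0) = n_0 + a \log(\lambda/2a) + b \log(\lambda/2b)$. Using $\lambda/(2a) = \sqrt{b/a}$ and $\lambda/(2b) = \sqrt{a/b}$, the logarithmic terms combine to $\tfrac{v}{2} \log\bigl((n_0 - v)/(n_0 + v)\bigr) = v \log\bigl((n_0 - v)/\lambda\bigr)$, where the final step applies $(n_0 - v)(n_0 + v) = \lambda^2$; this recovers $\Psi_{v,\lambda}$. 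For part (2), I would bound $F'(n_0 + t) = -\tfrac{1}{2}\log\bigl(1 + (2n_0 t + t^2)/\lambda^2\bigr) \leq -\tfrac{1}{2}\log(1 + t^2/\lambda^2) \leq -\tfrac{\log 2}{2}$ for $t \geq \lambda$; integrating from $\lambda$ to $z$ and using $F' \leq 0$ on $[0, \lambda]$ yields $F(n_0 + z) - F(n_0) \leq -\tfrac{1}{2}(z - \lambda)\log 2$, and the claim follows since $z - \lambda \geq z/2$ for $z \geq 2\lambda$.

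For part (3), I would split based on the size of $n_0 - v$. When $n_0 - v \leq 1$, we have $\lfloor n_0 \rfloor = v$; taking $m = v$ and using the closed forms $F(v) = v \log(\lambda/2v) + v$ and $F(n_0) = \Psi_{v,\lambda}$ together with $(n_0 - v) = \lambda^2/(n_0 + v)$, direct substitution gives
\[
F(n_0) - F(v) = (n_0 - v) + v \log \frac{2v}{n_0 + v} \leq n_0 - v \leq 1,
\]
since $n_0 + v \geq 2v$ makes the logarithm non-positive. When $n_0 - v > 1$, the inequality $\lambda^2 > 2v + 1$ forces $\lambda > 1$ and $v < \lambda^2/2$, hence $n_0 = O(\lambda^2)$; I would then take $m = \lceil n_0 \rceil$, noting $s^2 - v^2 \geq \lambda^2$ for $s \in [n_0, m]$, so $|F''(s)| \leq (n_0 + 1)/\lambda^2 = O(1)$, and Taylor's theorem (using $F'(n_0) = 0$) gives $|F(m) - F(n_0)| = O((m - n_0)^2) = O(1)$.

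For part (4), assuming $v \leq \lambda$ (so $\lambda \leq n_0 \leq \sqrt{2}\lambda$), I would establish $|F''(t)| \geq c/\lambda$ uniformly on $[v, n_0 + 2\lambda]$. Since $|F''(t)| = t/((t - v)(t + v))$ has negative derivative on $(v, \infty)$, its minimum on this interval is attained at the right endpoint $t = n_0 + 2\lambda$, where it equals $(n_0 + 2\lambda)/(5\lambda^2 + 4\lambda n_0) \geq 3/(11\lambda)$. Combining with the Taylor-remainder identity $F(n_0 + z) - F(n_0) = \int_0^z (z - r) F''(n_0 + r) \, dr$ (which uses $F'(n_0) = 0$) then yields $F(n_0 + z) \leq F(n_0) - cz^2/(2\lambda)$ for both signs of $z$ in the allowed range.

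The main obstacle is part (3) in the regime where $n_0$ is very close to $v$: Taylor's theorem fails there because $F''$ diverges at $t = v$, but the closed-form expression for $F(n_0) - F(v)$ turns out to be naturally bounded thanks to the identity $(n_0 - v)(n_0 + v) = \lambda^2$.
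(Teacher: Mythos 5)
Your proposal is correct, and parts (1) and (2) follow essentially the paper's own route (the second derivative calculation, the identity $(n_0-v)(n_0+v)=\lambda^2$, and integrating $F'$ from $n_0+\lambda$ onward). Parts (3) and (4), however, take a genuinely different path. For part (3), the paper splits on $v \le \lambda^2$ versus $v \ge \lambda^2$, using convexity of $F'$ (via $F''' \ge 0$) for the first case and an integral comparison near $n = v$ for the second; you instead split on whether $n_0 - v \le 1$, handling the hard region near $v$ by the exact closed-form identity
$F(n_0) - F(v) = (n_0 - v) + v\log\bigl(2v/(n_0+v)\bigr) \le n_0 - v$,
and otherwise controlling $|F''| = O(1)$ on $[n_0, \lceil n_0\rceil]$. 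This is arguably cleaner: the closed form handles precisely the regime where $F''$ blows up, while the $n_0 - v > 1$ case forces $\lambda > 1$ and keeps $F''$ bounded. For part (4), the paper treats $z \le 0$ and $z \ge 0$ separately (using $F''' \ge 0$ to bound by $F''(n_0)$ on one side, and a pointwise bound $F'' \le -1/(16\lambda)$ on the other); you instead observe that $|F''(t)| = t/(t^2-v^2)$ is monotone decreasing on $(v,\infty)$, so its minimum over $[v, n_0+2\lambda]$ is at the right endpoint, giving a single uniform bound $|F''| \ge 3/(11\lambda)$ that handles both signs of $z$ simultaneously via the integral form of the Taylor remainder. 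Your argument is correct and unifies the two subcases; the one cosmetic caveat is the claim ``$\lfloor n_0 \rfloor = v$ when $n_0 - v \le 1$,'' which fails only at $n_0 - v = 1$ exactly, a trivial case since then $n_0$ is an integer and one may take $m = n_0$.
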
 
	
	\begin{proof}
	
	Using the explicit form \eqref{f} for $F (n)$, we deduce for $n \ge v$ that 
	\begin{flalign}
	\label{fnderivatives}
	& F' (n) = \log \lambda - \displaystyle\frac{1}{2} \log (n^2 - v^2), \quad \text{and} \quad F'' (n) = \displaystyle\frac{n}{v^2 - n^2} \in \bigg[ - \displaystyle\frac{1}{n}, 0 \bigg],
	\end{flalign}
	
	\noindent which implies that $F' (n_0) = 0$ and that $F$ is maximized at $n_0$. Upon insertion into \eqref{f} (and recalling \eqref{psidefinition}), we also find that 
	\begin{flalign*}
	F (n_0) = \sqrt{v^2 + \lambda^2} + v \log \Big( \displaystyle\frac{\sqrt{v^2 + \lambda^2} - v}{\lambda} \Big) = \Psi_{v, \lambda},
	\end{flalign*} 
	
	\noindent which verifies the first statement of the lemma. 
	
	To establish the second, first observe since $F'' (n) \le 0$ and $n_0^2 - v^2 = \lambda^2$ that
	\begin{flalign}
	\label{fnlambda}
	F' (n_0 + \lambda) = \log \lambda - \displaystyle\frac{1}{2} \log \big( (n_0 + \lambda)^2 - v^2 \big) \le \log \lambda - \displaystyle\frac{1}{2} \log (n_0^2 + \lambda^2 - v^2) = - \displaystyle\frac{\log 2}{2}.
	\end{flalign}
	
    \noindent Thus, we deduce for $z \ge 2 \lambda$ that
    \begin{flalign*}
    F (n_0 + z) - F(n_0) \le F (n_0 + z) - F (n_0 + \lambda) \le (z - \lambda) F' (n_0 + \lambda) \le \displaystyle\frac{\log 2}{2} (\lambda - z) \le - \displaystyle\frac{z \log 2}{4},
    \end{flalign*}
    
    \noindent where in the first inequality we used the fact that $F$ is maximized at $n_0$; in the second we used the fact that $F'' (n) \le 0$; in the third we used \eqref{fnlambda}; and in the fourth we used the fact that $z - \lambda \ge \frac{z}{2}$. This verifies the second statement of the lemma. 
	
	To show the third part of the lemma, we separately consider the cases when $v \le \lambda^2$ and $v \ge \lambda^2$. In the former situation $v \le \lambda^2$, we select $m = \lceil n_0 \rceil$. Since $F''' (n) = (n^2 + v^2) (n^2 - v^2)^{-2} \ge 0$, we then have for for $n \in [n_0, m]$ that
	\begin{flalign*}
	F' (n) \ge (n - n_0)^2 F'' (n_0) \ge - \displaystyle\frac{(n-n_0)^2}{n_0} \ge -\displaystyle\frac{1}{n_0} \ge \displaystyle\frac{n_0}{\lambda^2},
	\end{flalign*}
	
	\noindent using the last second identity in \eqref{fnderivatives}. Since $n_0 \le \lambda + v \le 3 \lambda^2$ (due to the facts that $\lambda \ge \frac{1}{2}$ and $v \le \lambda^2$), it follows that $F' (n) \ge -\frac{1}{3}$ for $n \in [n_0, m]$, which implies that $F (m) \ge F(n_0) - \frac{1}{3}$ if $v \le \lambda^2$. 
	
	Now instead suppose that $v \ge \lambda^2$, in which case we take $m = \lfloor n_0 \rfloor = v$. Then, using the second identity in \eqref{fnderivatives}, we obtain
	\begin{flalign*}
	F (n_0) - F (v) = \displaystyle\int_0^{n_0 - v} F' (v + z) dz = - \displaystyle\frac{1}{2} \displaystyle\int_0^{n_0 - v} \log \bigg( \displaystyle\frac{2vz + z^2}{\lambda^2} \bigg) dz \le  - \displaystyle\frac{1}{2} \displaystyle\int_0^{n_0 - v} \log \bigg( \displaystyle\frac{vz}{\lambda^2} \bigg) dz.
	\end{flalign*}
	
	\noindent Now set $r = \frac{\lambda^2}{v}$, and observe that $n_0 - v \le r \le 1$. This yields 
	\begin{flalign*}
	F(m) \ge F(n_0) + \displaystyle\frac{1}{2} \displaystyle\int_0^r \log \bigg( \displaystyle\frac{z}{r} \bigg) dz = F (n_0) +  \displaystyle\frac{r}{2} \displaystyle\int_0^1 (\log y) dy = F (n_0) - \displaystyle\frac{r}{2} \ge F (n_0) - 1,
	\end{flalign*} 
	
	\noindent where in the first equality we changed variables $z = ry$. This verifies the third part of the lemma. 
	
	To establish the fourth part of the lemma, let us first consider the case when $z \in [v - n_0, 0]$. Then, since $F' (n_0) = 0$ and $F''' (n) = (n^2 + v^2) (n^2 - v^2)^{-2} \ge 0$ for each $n \ge v$, we have that 
	\begin{flalign*}
	F(n_0 + z) \le F (n_0) + \displaystyle\frac{z^2 F'' (n_0)}{2} = F (n_0) - \displaystyle\frac{z^2 (\lambda^2 + v^2)^{1/2}}{2 \lambda^2}, \quad \text{for $z \in [v - n_0, 0]$},
	\end{flalign*} 
	
	\noindent where in the last equality we used the second identity in \eqref{fnderivatives} for $F' (n)$ (and the fact that $n_0 = \sqrt{\lambda^2 + v^2}$). Thus, we deduce that
	\begin{flalign}
	\label{fn0zn01}
	F (n_0 + z) \le F (n_0) - \displaystyle\frac{z^2}{2 \lambda}, \quad \text{for $z \in [v - n_0, 0]$}.
	\end{flalign}
	
	Next we consider the case when $z \in [0, 2 \lambda]$. In this case, the second identity in \eqref{fnderivatives} implies for each $m \in [n_0, n_0 + 2 \lambda]$ that 
	\begin{flalign*}
	F'' (m) = \displaystyle\frac{m}{v^2 - m^2} \le -\displaystyle\frac{\lambda}{m^2} \le -\displaystyle\frac{1}{16 \lambda},
	\end{flalign*}
	
	\noindent where in the last inequality we used the fact that $m \le n_0 + 2 \lambda \le 4 \lambda$ (as $v \le \lambda$ and $n_0 = \sqrt{v^2 + \lambda^2}$). Thus, it follows from the fact that $F' (n_0) = 0$ that 
	\begin{flalign}
	\label{fn0zn02}
	F (n_0 + z) \le F (n_0) - \displaystyle\frac{z^2}{32 \lambda}, \quad \text{for each $z \in [0, 2 \lambda]$}.
	\end{flalign}
	
	\noindent Now the fourth statement of the lemma follows from \eqref{fn0zn01} and \eqref{fn0zn02}. 
	\end{proof}
	
	Now we can establish \Cref{estimateevlambda}. 
	
	\begin{proof}[Proof of \Cref{estimateevlambda}]

    We begin by establishing the lower bound on $E_{v, \lambda}$, simultaneously in both cases $v \ge \lambda$ and $v \le \lambda$. To that end, we first apply \Cref{estimateevlambda1} and then use the third part of \Cref{fn0} to bound the sum on the right side of \eqref{evlambdafestimate} its summand corresponding to a suitable choice of index $m \in \big\{ \lfloor n_0 \rfloor, \lceil n_0 \rceil \big\}$. This yields constants $c_1, c_2 > 0$ such that
	\begin{flalign*}
	E_{v, \lambda} \ge c_1 \big( \lambda^2 + v^2 \big)^{-1 / 2} \exp \big( - F (m) \big) \ge c_1 (\lambda + v)^{-1} \exp \big( F (n_0) - c_2 \big),
	\end{flalign*}
	
	\noindent which implies the lower bound on $E_{v, \lambda}$ (in either case $v \ge \lambda$ or $v \le \lambda$), since $F (n_0) = \Psi_{\lambda, v}$.
	
	To establish the upper bound in the case $v \ge \lambda$, observe that \Cref{estimateevlambda1}; the fact that $F(n) \le F(n_0) = \Psi_{v, \lambda}$ for each $n \in [v, n_0 + 2 \lambda]$ (by the first part of \Cref{fn0}); and the existence of a constant $c > 0$ such that $F(n_0 + z) \le F(n_0) - cz$ for $z \ge 2 \lambda$ (by the second part of \Cref{fn0}) together yield a constant $C_1 > 0$ such that 
	\begin{flalign*}
	E_{v, \lambda} \le  C_1 (n_0 + 2 \lambda) \exp (\Psi_{v, \lambda}) \displaystyle\sum_{z = 2 \lambda}^{\infty} e^{-cz} \le 2 c^{-1} C_1 (n_0 + 2 \lambda) \exp (\Psi_{v, \lambda}).
	\end{flalign*}
	
	\noindent This establishes the upper bound on $E_{v, \lambda}$ when $v \ge \lambda$. 

    In the latter case $v \le \lambda$, we proceed as above but additionally use the facts that $F(n_0 + z) \le F(n_0) - c \lambda^{-1} z^2$ for $z \in [v - n_0, 2 \lambda]$ (by the fourth part of \Cref{fn0}) to deduce for some constant $C_2 > 0$ that
    \begin{flalign*}
	E_{v, \lambda} & \le  C_1 \exp (\Psi_{v, \lambda}) \Bigg( \displaystyle\sum_{z = v - n_0}^{2 \lambda} \big( (n_0 + z)^2 - v^2 + n_0 + z \big)^{-1 / 2} e^{-cz^2 / \lambda} + \displaystyle\sum_{z = 2 \lambda}^{\infty} e^{-cz} \Bigg) \\
	& \le C_1 \exp (\Psi_{v, \lambda}) \Bigg( \displaystyle\sum_{|z| \le \lambda/4} \big( (n_0 + z)^2 - v^2 + n_0 + z \big)^{-1/2} e^{-cz^2 / \lambda} + \displaystyle\sum_{|z| \ge \lambda/4} e^{-cz^2 / \lambda} + \displaystyle\sum_{z = 2 \lambda}^{\infty} e^{-cz} \Bigg) \\ 
	& \le  C_1 \exp (\Psi_{v, \lambda}) \Bigg( 12 \lambda^{-1} \displaystyle\sum_{|z| \le \lambda/4} e^{-cz^2 / \lambda} + 35 c^{-1} e^{-2 \lambda} \Bigg) \le C_2 \lambda^{-1 / 2} \exp (\Psi_{v, \lambda}),
	\end{flalign*}
	
	\noindent where in the third inequality we used the fact that $(n_0 + z)^2 - v^2 \ge \frac{\lambda^2}{144}$ for $|z| \le \frac{\lambda}{4}$ (as $n_0 = \sqrt{\lambda^2 + v^2} \ge \frac{\lambda}{3} + v$ for $\lambda \ge v$). This establishes the upper bound on $E_{v, \lambda}$ when $v \le \lambda$.  
	\end{proof}

    \subsection{Estimates for the Minimum Polynomial Approximation Error}

    \label{EstimatesPsi}
    
    In this section we establish the following proposition, which is the variant of \Cref{pexponential} that will be useful for our purposes.
    	
	 \begin{proposition}
    
    \label{pexponential2}
    
    There exist constants $C, c > 0$ such that the following holds. Let $d \ge 1$ be an integer, and let $B \ge 1$ be a real number. Set $\lambda = \frac{B}{2}$ and recall $\Psi$ from \eqref{psidefinition}. 
    
    \begin{enumerate} 
    
    \item For any $B$ and $d$ as above, we have
    \begin{flalign}
    \label{exponential1p}
     & c (d + \lambda)^{-3/2} \exp (\Psi_{d, \lambda} - \lambda) \le \displaystyle\inf_{p \in \mathcal{P}_d} \displaystyle\sup_{x \in [0, B]} \big| p(x) - e^{-x} \big| \le C (d + \lambda)^2 \exp (\Psi_{d, \lambda} - \lambda); \\
     & c (d + \lambda)^{-3/2} \exp (\Psi_{d, \lambda} + \lambda) \le \displaystyle\inf_{p \in \mathcal{P}_d} \displaystyle\sup_{x \in [0, B]} \big| p(x) - e^x \big| \le C (d + \lambda)^2 \exp (\Psi_{d, \lambda} + \lambda). \label{exponential2p}  
    \end{flalign}
    
    \item If $B \ge 2d$, then we have that
    \begin{flalign}
    \label{exponential3p}
    & c (d + \lambda)^{-3/2} \exp (\Psi_{d, \lambda} - \lambda) \le \displaystyle\inf_{p \in \mathcal{P}_d} \displaystyle\sup_{x \in [0, B]} \big| p(x) - e^{-x} \big| \le C \exp (\Psi_{d, \lambda} - \lambda).
    \end{flalign}
    
    \end{enumerate}
    
    \end{proposition}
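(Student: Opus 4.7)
The plan is to relate the polynomial approximation error on $[0, B]$ to the tail of the Chebyshev expansion on $[-1, 1]$ via an affine change of variable, then invoke the coefficient estimates of Proposition \ref{avlambdalimit2}. Setting $\lambda = B/2$, the affine map $x = \lambda(1-y)$ bijects $[0, B]$ with $[-1, 1]$, transforming $e^{-x}$ into $e^{\lambda y - \lambda}$ and $e^x$ into $e^{-\lambda y + \lambda}$. Substituting $-y$ for $x$ in the expansions of Lemma \ref{amlambda} and using the parity identity $Q_v(-y) = (-1)^v Q_v(y)$ shows that the Chebyshev coefficients of these rescaled functions have absolute values $|A_{v, \lambda}|$ and $|B_{v, \lambda}|$, respectively. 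Applying Proposition \ref{ajcj} with $D = d+1$ then yields
\begin{flalign*}
\frac{|A_{d+1, \lambda}|}{\sqrt{2(d+1)}} \le \inf_{p \in \mathcal{P}_d} \sup_{x \in [0, B]} \bigl|p(x) - e^{-x}\bigr| \le \sum_{j = d+1}^{\infty} |A_{j, \lambda}|,
\end{flalign*}
and analogously for $e^x$ with $B_{v, \lambda}$.

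For the lower bounds \eqref{exponential1p} and \eqref{exponential2p}, Proposition \ref{avlambdalimit2} gives $|A_{d+1, \lambda}| \ge c(d+1+\lambda)^{-1} \exp(\Psi_{d+1, \lambda} - \lambda)$, and combining this with the $\sqrt{2(d+1)}$ denominator produces the $(d+\lambda)^{-3/2}$ prefactor. To express the bound in terms of $\Psi_{d, \lambda}$ rather than $\Psi_{d+1, \lambda}$, I would compute the derivative $\frac{d}{dv}\Psi_{v, \lambda} = \log\bigl((\sqrt{v^2+\lambda^2}-v)/\lambda\bigr) = -\log\bigl((\sqrt{v^2+\lambda^2}+v)/\lambda\bigr)$ and bound its magnitude uniformly on $[d, d+1]$ by $O(\log(d+\lambda))$, giving $\exp(\Psi_{d+1, \lambda}) \ge (d+\lambda)^{-O(1)} \exp(\Psi_{d, \lambda})$; this polynomial factor is absorbed into the prefactor. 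The bound for $e^x$ is handled identically.

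For the upper bounds in part 1, apply $|A_{j, \lambda}| \le C(j+\lambda) \exp(\Psi_{j, \lambda} - \lambda)$ from Proposition \ref{avlambdalimit2} term-by-term and split the sum at $j = \max(d, 2\lambda)$. For $j \ge 2\lambda$ the derivative formula above yields $\frac{d}{dv}\Psi_{v, \lambda} \le -\log 2$, so the tail decays geometrically and is dominated by its leading term; for $d < j \le 2\lambda$ there are at most $O(\lambda)$ terms, and since $\Psi_{v, \lambda}$ is non-increasing in $v$ each is at most $C(d+\lambda) \exp(\Psi_{d, \lambda} - \lambda)$, producing the $(d+\lambda)^2$ factor. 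The argument for $e^x$ is identical with $B_{v, \lambda}$ replacing $A_{v, \lambda}$.

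The main technical difficulty will be the sharper upper bound \eqref{exponential3p} of part 2, where the assumption $\lambda \ge d$ is used to eliminate the polynomial prefactor. For the range $j \in [d+1, \lambda]$ I would apply the tighter estimate $|A_{j, \lambda}| \le C\lambda^{-1/2} \exp(\Psi_{j, \lambda} - \lambda)$ from Proposition \ref{avlambdalimit2}, combined with the quadratic decay $\Psi_{j, \lambda} \le \Psi_{d, \lambda} - c(j-d)^2/\lambda$ that follows from the second-derivative computation $\frac{d^2}{dv^2}\Psi_{v, \lambda} = -1/\sqrt{v^2+\lambda^2} \le -1/(\sqrt{2}\lambda)$ on $[0, \lambda]$ together with the monotonicity of $\Psi_{v, \lambda}$ in $v$. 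The Gaussian sum $\sum_{k \ge 1} \exp(-ck^2/\lambda) = O(\sqrt{\lambda})$ then cancels exactly the $\lambda^{-1/2}$ prefactor, yielding $O(\exp(\Psi_{d, \lambda} - \lambda))$ from this portion; the tail $j > \lambda$ is controlled by the geometric decay already used in part 1, but the principal obstacle is carefully absorbing the residual $(j+\lambda)$ factor there using that $\Psi_{d, \lambda} - \Psi_{\lambda, \lambda} = \Omega((\lambda - d)^2/\lambda)$ in the regime $d \le \lambda$, and handling the interface region $v \approx \lambda$ where the two coefficient estimates join.
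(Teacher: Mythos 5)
Your overall strategy is identical to the paper's: rescale $[0,B]$ onto $[-1,1]$, convert the approximation error into a statement about Chebyshev tail coefficients via Proposition~\ref{ajcj}, and invoke the coefficient asymptotics from Proposition~\ref{avlambdalimit2}. Your treatment of the $d$ versus $d+1$ index shift (computing $\partial_v\Psi_{v,\lambda}=\log\bigl((\sqrt{v^2+\lambda^2}-v)/\lambda\bigr)$ and absorbing the resulting polynomial factor) is in fact more careful than the paper, which writes $(2d)^{-1/2}|A_{v,\lambda}|$ without discussion, and your Part 1 bounds are handled correctly.

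There is, however, a genuine gap in the Part 2 upper bound, and you flag its location but do not close it. Your proposed remedy for the tail $j>\lambda$ --- absorbing the residual $(j+\lambda)$ prefactor via $\Psi_{d,\lambda}-\Psi_{\lambda,\lambda}=\Omega((\lambda-d)^2/\lambda)$ --- fails precisely when $d$ is close to $\lambda$: taking $d=\lfloor\lambda\rfloor$ gives $\Psi_{d,\lambda}-\Psi_{\lambda,\lambda}=O(1/\lambda)$, which is nowhere near the $\log\lambda$ needed to swallow the $\Theta(\lambda)$ factor that the first few tail terms near $j\approx\lambda$ contribute when only the coarse bound $C(j+\lambda)\exp(\Psi_{j,\lambda}-\lambda)$ is available. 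This is not merely a polynomial-factor annoyance: Part 2 specifically claims no polynomial prefactor in the upper bound, and the paper's proof of the fourth case of Theorem~\ref{estimatedegree2} relies on that sharpening. The cleaner repair is to observe that the derivation of the tighter $\lambda^{-1/2}$ coefficient bound (the fourth part of Lemma~\ref{fn0} and the corresponding part of Proposition~\ref{estimateevlambda}) uses only that $n_0=\sqrt{v^2+\lambda^2}=O(\lambda)$, so it extends with only constant changes to the range $v\le 2\lambda$. With the threshold moved to $2\lambda$, the residual tail $j>2\lambda$ contributes $O(\lambda\exp(\Psi_{2\lambda,\lambda}-\lambda))$, and the genuine separation $\Psi_{d,\lambda}-\Psi_{2\lambda,\lambda}\ge\Psi_{\lambda,\lambda}-\Psi_{2\lambda,\lambda}=\lambda\bigl(G(1)-G(2)\bigr)=\Omega(\lambda)$ comfortably dominates the $\log\lambda$ needed. (For completeness: the paper's own write-up of \eqref{pdestimate1} silently applies the $v\le\lambda$ bound to all $v\ge d$, so this step is glossed over there too; your explicit flagging of ``the interface region $v\approx\lambda$'' correctly identifies the sore spot, but the $(\lambda-d)^2/\lambda$ inequality is the wrong tool.)
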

    
    We will establish \Cref{pexponential2} as a consequence of \Cref{ajcj} and \Cref{avlambdalimit2}. However, before doing so, it will be useful to obtain some properties for $\Psi_{v, \lambda}$. Therefore, we first prove \Cref{psivlambdakappa}.

	\begin{proof}[Proof of \Cref{psivlambdakappa}]
	
	First observe that
	\begin{flalign*}
	\Psi_{v, \lambda} - \lambda = \lambda \Big( \sqrt{\kappa^2 + 1} - 1 + \kappa \log \big( \sqrt{\kappa^2 + 1} - \kappa \big) \Big).
	\end{flalign*}
	
	\noindent In particular, if $\kappa \le 2$ then using the series expansions 
	\begin{flalign*}
	\sqrt{z^2 + 1} = 1 + \displaystyle\frac{z^2}{2} + O(z^3), \quad \text{and} \quad \log (1 + z) = z + O (z^2), \quad \text{valid for $z \in [0, 2]$},
	\end{flalign*}
	
	\noindent  we obtain that
	\begin{flalign*}
	\Psi_{v, \lambda}  = \lambda \bigg( \displaystyle\frac{\kappa^2}{2} + O(\kappa^3) + \kappa \log \Big( 1 - \kappa + O(\kappa^2) \Big) \bigg) = - \displaystyle\frac{\kappa^2 \lambda}{2} \big( 1 + O(\kappa) \big).
	\end{flalign*}
	
	\noindent If instead $\kappa \ge 2$ then using the series expansion
	 \begin{flalign*}
	 \sqrt{z^2 + 1} = z + \displaystyle\frac{1}{2 z} + O (z^{-2}), \quad \text{and} \quad \log (z^{-1} + z^{-2}) = - \log z - O(z^{-1}), \quad \text{valid for $|z| \ge 2$}, 
	 \end{flalign*}
	 
	 \noindent we deduce that
	\begin{flalign*}
	\Psi_{v, \lambda} = \lambda \Bigg( \kappa + O \Big( \displaystyle\frac{1}{\kappa} \Big) + \kappa \log \bigg( \displaystyle\frac{1}{2 \kappa} + O \Big( \displaystyle\frac{1}{\kappa^2} \Big)  \bigg) \Bigg) = - \lambda \kappa \log \kappa \Big( 1 + O \big( (\log \kappa)^{-1} \big) \Big).
	\end{flalign*}
	
	\noindent This establishes the lemma.
	\end{proof}

    Now we can establish \Cref{pexponential2}. 
    
    \begin{proof}[Proof of \Cref{pexponential2}]
    
    First observe that by rescaling (namely, replacing $x$ with $\lambda(x+1)$ or $-\lambda(x+1)$), we have that 
    \begin{flalign}
    \label{estimatepblambda} 
    \begin{aligned}
    & \displaystyle\inf_{p \in \mathcal{P}_d} \displaystyle\sup_{x \in [0, B]} \big| p(x) - e^x \big| = \displaystyle\inf_{p \in \mathcal{P}_d} \displaystyle\sup_{x \in [-1, 1]} \big| p(x) - e^{\lambda x + \lambda} \big|; \\
    & \displaystyle\inf_{p \in \mathcal{P}_d} \displaystyle\sup_{x \in [0, B]} \big| p(x) - e^{-x} \big| = \displaystyle\inf_{p \in \mathcal{P}_d} \displaystyle\sup_{x \in [-1, 1]} \big| p(x) - e^{-\lambda - \lambda x} \big|.
    \end{aligned}
    \end{flalign}
    
    \noindent Therefore, \Cref{ajcj} and the definitions of $A_{v, \lambda}$ and $B_{v, \lambda}$ from \eqref{avlambdadefinition}, together yield 
    \begin{flalign}
    \label{pestimateab}
    \begin{aligned}
    & \displaystyle\inf_{p \in \mathcal{P}_d} \displaystyle\sup_{x \in [0, B]} \big| p(x) - e^{-x} \big| \ge (2d)^{-1/2} \big| A_{v, \lambda} \big|; \qquad \displaystyle\inf_{p \in \mathcal{P}_d} \displaystyle\sup_{x \in [0, B]} \big| p(x) - e^x \big| \ge (2d)^{-1 / 2} \big| B_{v, \lambda} \big|. 
    \end{aligned}
    \end{flalign}
    
    \noindent Thus, the lower bounds on the minimal error for $e^{-x}$ and $e^x$ in both of the cases listed in the proposition follow from \eqref{pestimateab} and the lower bounds on $|A_{v, \lambda}|$ and $|B_{v, \lambda}|$ from the first part of \Cref{avlambdalimit2}.
    
    Now let us establish the upper bounds in this proposition; in what follows, $C > 0$ will denote a constant (uniform in $d$ and $\lambda)$ that might change between appearances. We first show \eqref{exponential2p}, to which end, observe that \eqref{estimatepblambda}, \Cref{ajcj}, and the upper bound for $B_{v, \lambda}$ from the first part of \Cref{avlambdalimit2} together yield 
    \begin{flalign}
    \label{pd}
     \displaystyle\inf_{p \in \mathcal{P}_d} \displaystyle\sup_{x \in [0, B]} \big| p(x) - e^{-x} \big| \le \displaystyle\sum_{v = d}^{\infty} B_{v, \lambda} \le C \displaystyle\sum_{v = d}^{\infty} (v + \lambda) \exp (\Psi_{v, \lambda} + \lambda).
    \end{flalign}
    
    \noindent Next, observe from the second part of \Cref{psivlambdakappa} that 
    \begin{flalign}
    \label{psivclambda}
    \Psi_{v, \lambda} + \lambda \le -v, \qquad \text{for $v > C \lambda$}.
    \end{flalign} 
    
    \noindent Further observe that 
    \begin{flalign}
    \label{psidecreasing}
    \text{$\Psi_{v, \lambda}$ is decreasing in $v \ge 0$ for fixed $\lambda$},
    \end{flalign} 
    
    \noindent since
    \begin{flalign}
    \label{psiderivative}
    \displaystyle\frac{\partial}{\partial v} \Psi_{v; \lambda} = \log \big( \sqrt{\kappa^2 + 1} - \kappa \big)  \le 0, \qquad \text{for $\kappa = \displaystyle\frac{v}{\lambda} \ge 0$}.
    \end{flalign} 
    
    From \eqref{pd}, \eqref{psivclambda}, and \eqref{psidecreasing}, it follows that 
    \begin{flalign*}
     \displaystyle\inf_{p \in \mathcal{P}_d} \displaystyle\sup_{x \in [0, B]} \big| p(x) - e^x \big| & \le C (d + \lambda) \exp \big( \Psi_{d, \lambda} + \lambda \big) \Bigg( d + C \lambda + \displaystyle\sum_{v \ge d + C \lambda} (v + \lambda) e^{-v} \Bigg) \\
     & \le C (d + \lambda)^2 \exp \big( \Psi_{d, \lambda} + \lambda \big).
    \end{flalign*}
    
    \noindent This establishes \eqref{exponential2p}; the proof of \eqref{exponential1p} is omitted as it is entirely analogous.
    
    Now let us establish the improved upper bound on the minimum error in the case when $B \ge 2d$. To that end, we as before apply \eqref{estimatepblambda}, \Cref{ajcj}, and the upper bound for $A_{v, \lambda}$ from the second part of \Cref{avlambdalimit2} to obtain
    \begin{flalign}
    \label{pdestimate1}
     \displaystyle\inf_{p \in \mathcal{P}_d} \displaystyle\sup_{x \in [0, B]} \big| p(x) - e^{-x} \big| \le C \displaystyle\sum_{v = d}^{\infty} \lambda^{-1/2} \exp (\Psi_{v, \lambda} - \lambda).
    \end{flalign}
    
    \noindent By \eqref{psiderivative}, we have that $\Psi_{v; \lambda} < 0$ for $v > 0$ and moreover that
    \begin{flalign*}
    \displaystyle\frac{\partial}{\partial v} \Psi_{v; \lambda} = \log \bigg( 1 - \displaystyle\frac{v}{\lambda} + O \Big( \displaystyle\frac{v^2}{\lambda^2} \Big) \bigg) = O \bigg( \displaystyle\frac{v^2}{\lambda^2} \bigg) - \displaystyle\frac{v}{\lambda}, \qquad \text{for $v \le 10 \lambda$}; \\
    \displaystyle\frac{\partial}{\partial v} \Psi_{v; \lambda} = \log \bigg( \displaystyle\frac{\lambda}{2v} + O \Big( \displaystyle\frac{\lambda^2}{v^2} \Big) \bigg) = \bigg( 1 + O \Big( \displaystyle\frac{\lambda}{v} \Big) \bigg) \log \bigg( \displaystyle\frac{\lambda}{2v} \bigg), \qquad \text{for $v \ge 10 \lambda$}.
    \end{flalign*} 
    
    \noindent In particular, there exist constants $c_1, c_2 > 0$ such that for $v, \lambda \ge \frac{1}{2}$ we have $\frac{\partial}{\partial v} \Psi_{v; \lambda} \le -c_1$ if $v \ge c_2 \lambda$ and $\frac{\partial}{\partial v} \Psi_{v; \lambda} \le - \frac{v}{2 \lambda}$ for $v \le c_2 \lambda$. Thus,
    \begin{flalign*}
    \Psi_{v; \lambda} - \Psi_{d; \lambda} \le -\displaystyle\frac{c_2}{4 \lambda} (v-d)^2, \quad \text{for $v \le c_2 \lambda$}; \qquad \Psi_{v; \lambda} - \Psi_{d; \lambda} \le C^{-1} (d - v), \quad \text{for $v \ge c_2 \lambda$},
    \end{flalign*}
    
    \noindent and so
    \begin{flalign}
    \label{pdestimate2}
    \displaystyle\sum_{v = d}^{\infty} \exp (\Psi_{v, \lambda} - \lambda) \le C \lambda^{1/2} \exp (\Psi_{d, \lambda} -\lambda).
    \end{flalign}
    
    \noindent The upper bound in \eqref{exponential3p} now follows from \eqref{pdestimate1} and \eqref{pdestimate2}.
    \end{proof}

	\subsection{Proofs of \Cref{estimatedegree2} and \Cref{estimatedegree1}}
	
	\label{ProofDegree}
	
    In this section we establish \Cref{estimatedegree2} and \Cref{estimatedegree1}. Recalling the function $G(x)$ and the quantity $\Psi_{v, \lambda}$ from \eqref{psidefinition} from these statements, both of these proofs will use the fact that 
    \begin{flalign}
    \label{psivlambdag}
    \Psi_{v, \lambda} = \lambda G \Big( \displaystyle\frac{v}{\lambda} \Big).
    \end{flalign} 

    \noindent We begin with the proof of \Cref{estimatedegree1}.

	\begin{proof}[Proof of \Cref{estimatedegree1}]
	
	Set $\lambda = \frac{B}{2}$. Observe that there exists a constant $c_1 > 0$ such that $G' (z) < -c_1$ whenever $z \in [z_* - c_1, z_* + c_1]$. Thus, $G(x) + 1 > c_1 (z-x_*)$, and so \eqref{psivlambdag} yields for some constant $c_2 > 0$ that
	\begin{flalign*} 
	(d + \lambda)^{-1} \exp (\Psi_{d, \lambda} + \lambda) > (d + \lambda)^{-1} \exp (c_2 \lambda^{1/2}) > 10,
	\end{flalign*} 
	
	\noindent if $d < z_* \lambda - \lambda^{1/2}$ and $\lambda$ is sufficiently large. Thus, by the lower bound in \Cref{pexponential2}, for any $\lambda$ sufficiently large and $\delta \le \frac{1}{2}$ we have 
	\begin{flalign}
	\label{db2}
	d_{B; \delta} \ge \big( z_* + o(1) \big) \lambda = \big( z_* + o(1) \big) \frac{B}{2}.
	\end{flalign}
	
	Now, assume first that $B = \omega \big( \log (\delta^{-1}) \big)$. Then, the upper bound in \Cref{pexponential2} implies that $d (B; \delta) \le d$ if $d$ satisfies $(d + \lambda)^2 \exp \big( \Psi_{d, \lambda} + \lambda \big) < \delta$. Since $G(z_*) = 0$ and $G' (z_*) < 0$, there exists a constant $C > 0$ such that 
	\begin{flalign*} 
	(d + \lambda)^2 \exp \big( \Psi_{d, \lambda} + \lambda \big) \le (d + \lambda)^2 e^{- C (d - z_* \lambda)}.
	\end{flalign*} 
	
	\noindent Since $\lambda = \frac{B}{2} = \omega \big( \log (\delta^{-1}) \big)$ implies that $(d + \lambda)^2 \exp \big( \Psi_{d, \lambda} + \lambda \big) < \delta$ for $d = \big( z_* + o(1) \big) \lambda = \big( z_* + o(1) \big) \frac{B}{2}$. Hence, in this case $d_{B; \delta} (e^x) \le \big( z_* + o(1) \big) \frac{B}{2}$, which by \eqref{db2} implies that $d_{B; \delta} (e^x) = \big( z_* + o(1) \big) \frac{B}{2}$.
	
	Now assume that $B = \big( 2r + o(1) \big) \log (\delta^{-1})$ for some fixed $r > 0$, so that $\lambda = \big( r + o(1) \big) \log (\delta^{-1})$. Suppose that $d < \mu' \lambda$ for some $\mu' < \mu(r)$. Then, $G (\mu') + 1 > -r^{-1}$ and so we have again using \eqref{psivlambdag} (and the fact that $G$ is decreasing) that there would exist a constant $c_3 > 0$ such that
	\begin{flalign*}
	(d + \lambda)^{-3/2} \exp \big( \Psi_{d, \lambda} + \lambda \big) & = (d + \lambda)^{-3/2} \exp \bigg( \lambda G \Big( \displaystyle\frac{d}{\lambda} + \lambda \Big) \bigg) \\
	& \ge (d + \lambda)^{-3/2} \exp \Big( \lambda \big( G (\mu') + 1 \big) \Big) \\
	& \ge (d + \lambda)^{-3/2} \exp \big( (c_3 - r^{-1} \lambda \big) = \delta (d + \lambda)^{-1} e^{(c_3 - o(1)) \lambda} > \delta.
	\end{flalign*}
	
	\noindent Thus, the lower bound in \Cref{pexponential2} implies that $d_{B; \delta} (e^x) \ge \big( \mu + o(1) \big) \lambda = \big( \mu r + o (1) \big) \log (\delta^{-1})$. 
	
	Similarly, if $d > \mu'' \lambda$ for some $\mu'' > \mu (r)$, then there exists some constant $c_4 > 0$ such that 
	\begin{flalign*}
	(d + \lambda)^2 \exp \big( \Psi_{d, \lambda} + \lambda \big) & \ge (d + \lambda)^2 \exp \Big( \lambda \big( G (\mu'') + 1 \big) \Big) \\
	& \ge (d + \lambda)^{-1} \exp \big( - \lambda (r^{-1} + c_4) \big) = \delta (d + \lambda)^2 e^{(o(1) - c_4) \lambda} < \delta, 
	\end{flalign*}
	
	\noindent which implies by the upper bound in \Cref{pexponential2} that $d_{B; \delta} (e^x) \le \big( \mu r + o (1) \big) \log (\delta^{-1})$. Hence, $d_{B; \delta} (e^x) = \big( \mu r + o (1) \big) \log (\delta^{-1})$.
	
	 Now let us consider the final case $B = o \big( \log (\delta^{-1}) \big)$. Suppose that 
	 \begin{flalign*}
	 d = \displaystyle\frac{\gamma \log (\delta^{-1}) }{\log (B^{-1} \log (\delta^{-1}))},
	 \end{flalign*}
	 
	 \noindent for some $\gamma \in (0, \infty)$ (bounded above and below). Then, $\frac{d}{\lambda} = \omega (1)$, and so the second part of \Cref{psivlambdakappa} implies that 
	 \begin{flalign*}
	 \Psi_{d, \lambda} = -d \log \Big( \displaystyle\frac{d}{\lambda} \Big) \big( 1 + o(1) \big).
	 \end{flalign*} 
	 
	 \noindent Hence, if $\gamma < 1$, then 
	 \begin{flalign*}
	(d + \lambda)^{-3/2} & \exp \big( \Psi_{d, \lambda} + \lambda \big) \\
	& = (d + \lambda)^{-3/2} \exp \bigg( -d \log \Big( \displaystyle\frac{d}{\lambda} \Big) \big( 1 + o(1) \big) \bigg) \\
	& = (d + \lambda)^{-3/2} \exp \bigg( -\displaystyle\frac{\gamma \log (\delta^{-1}) }{\log (B^{-1} \log (\delta^{-1}))} \log \Big( \displaystyle\frac{2 \gamma B^{-1} \log (\delta^{-1}) }{\log (B^{-1} \log (\delta^{-1}))} \Big) \big( 1 + o(1) \big) \bigg) \\
	& = (d + \lambda)^{-3/2} \exp \Big( \big(\gamma + o (1) \big) \log (\delta^{-1}) \Big) \ge  \delta^{\gamma + o(1)} \big( \log (\delta^{-1}) \big)^{-3/2} > \delta.
	 \end{flalign*}
    
    \noindent Hence, the lower bound in \eqref{pexponential2} implies that 
    \begin{flalign*}
	 d_{B; \delta} (e^x) \ge \displaystyle\frac{\big( 1 + o(1) \big) \log (\delta^{-1}) }{\log (B^{-1} \log (\delta^{-1}))},
	 \end{flalign*}
	 
	\noindent The proof of the matching upper bound is entirely analogous and is therefore omitted.
	\end{proof} 
	
	Next we establish \Cref{estimatedegree2}. To that end, we begin with the following lemma that addresses the last part of that theorem, when $B \ge \delta^{-\Omega (1)}$.

	\begin{lemma} 
	
	\label{degreeestimateb} 
	
	For every real $\delta \in (0,1/4)$ and $B \geq 1$ with $B > \omega(\log(\delta^{-1}))$ we have $d_{B;\delta}(e^{-x}) \geq (1/2 + o(1))\sqrt{B \log((2\delta)^{-1})}$.
	\end{lemma}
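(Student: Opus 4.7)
The plan is to abandon the coefficient-based lower bound from \Cref{ajcj}, which is too weak when $B$ is much larger than $\log(\delta^{-1})$, and instead exploit the extremality of Chebyshev polynomials outside $[-1,1]$ (\Cref{extremalcheby}) together with the asymptotic in \Cref{chebyasympt}, as foreshadowed at the end of the proof overview. Suppose $p \in \mathcal{P}_d$ satisfies $\sup_{x \in [0,B]} |p(x) - e^{-x}| < \delta$, and set $L = \log(\delta^{-1})$. The crucial observation is that on the long tail $[L, B]$ the polynomial is forced to be tiny, $|p(x)| \le e^{-x} + \delta \le 2\delta$, while at $x = 0$ we have $p(0) \ge 1 - \delta$; any polynomial meeting both constraints must have high degree.

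To quantify this, I would affinely rescale $[L, B]$ onto $[-1, 1]$ via $x = \frac{B+L}{2} + \frac{B-L}{2} y$, so that $x = 0$ corresponds to $y_0 = -(1+\eps)$ with $\eps = \frac{2L}{B - L}$. The rescaled polynomial $q(y) = p(x(y))$ lies in $\mathcal{P}_d$, satisfies $|q(y)| \le 2\delta$ on $[-1,1]$, and $|q(y_0)| \ge 1 - \delta$. Setting $r(y) := 2^{1-d}\, q(y)/(2\delta)$ gives $|r(y)| \le 2^{1-d}$ on $[-1,1]$, so \Cref{extremalcheby} yields $|r(y_0)| \le |Q_d(y_0)|$, and hence
\begin{flalign*}
\frac{1-\delta}{\delta} \;\le\; 2^{d}\, Q_d(1+\eps),
\end{flalign*}
after using the parity identity $|Q_d(-(1+\eps))| = Q_d(1+\eps)$. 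Then \Cref{chebyasympt} gives $Q_d(1+\eps) = 2^{-d}\, e^{d\sqrt{2\eps}(1 + O(\sqrt{\eps}))}$, and taking logarithms produces
\begin{flalign*}
d \;\ge\; \frac{\log\bigl((1-\delta)/\delta\bigr)}{\sqrt{2\eps}}\bigl(1 + o(1)\bigr) \;=\; \frac{\sqrt{B}\,\log\bigl((1-\delta)/\delta\bigr)}{2\sqrt{L}}\bigl(1 + o(1)\bigr),
\end{flalign*}
where the second equality uses $\eps = (2L/B)(1 + o(1))$, itself a consequence of the hypothesis $B = \omega(L)$.

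The remaining step is to absorb the prefactor into the target leading constant $1/2$, which amounts to checking $\bigl(\log((1-\delta)/\delta)\bigr)^2 \ge L\, \log((2\delta)^{-1})$ for every $\delta \in (0, 1/4)$. Writing $\log((1-\delta)/\delta) = L + \log(1-\delta)$ and $\log((2\delta)^{-1}) = L - \log 2$, this reduces to the elementary inequality $L\, \log\bigl(2(1-\delta)^2\bigr) + \bigl(\log(1-\delta)\bigr)^2 \ge 0$, which holds because $\delta < 1/4 < 1 - 2^{-1/2}$ forces $2(1-\delta)^2 > 1$. Combining then yields $d \ge (1/2 + o(1))\sqrt{B\, \log((2\delta)^{-1})}$, as claimed. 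The main obstacle I anticipate is not any individual step but rather the careful propagation of the $(1 + O(\sqrt{\eps}))$ error from \Cref{chebyasympt}: the $o(1)$ must be uniform over the allowed range, covering both the regime of fixed $\delta$ with $B \to \infty$ and the regime $\delta \to 0$ with $B/L \to \infty$. Fortunately both are controlled by the single parameter $\eps \to 0$, so this should go through without difficulty.
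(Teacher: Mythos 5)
Your proof is correct and follows essentially the same path as the paper's: observe $p$ is small on $[\log(\delta^{-1}),B]$ and near $1$ at $0$, rescale that interval to $[-1,1]$, apply \Cref{extremalcheby} and \Cref{chebyasympt}, and take logarithms. The only differences are cosmetic (you map $0$ to the left of $[-1,1]$ and invoke parity, while the paper maps it to the right) and that you explicitly carry out the final ``rearranging'' step, including the verification that $\delta<1/4$ makes the $\log((1-\delta)/\delta)$ term dominate $\sqrt{L\log((2\delta)^{-1})}$, which the paper leaves implicit.
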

	
	\begin{proof}
	Let $p(x)$ be any polynomial satisfying $\sup_{x \in [0, B]} \big| p(x) - e^{-x} \big| < \delta$, and set $d = \deg p$. Let $x_0 = 0$, $x_1 = \log(\delta^{-1})$, and $x_2 = B$. It follows that $p(x_0) \geq 1-\delta$, and that $p(x) \in [-\delta,2\delta]$ for all $x \in [x_1,x_2]$. Let $a : \mathbb{R} \to \R$ be the linear function satisfying $a(1) = x_1$ and $a(-1) = x_2$, and let $$x_0' := a^{-1}(x_0) = 1 + \frac{2(x_1 - x_0)}{x_2 - x_1} = 1 + \frac{2\log(\delta^{-1})}{B-\log(\delta^{-1})}.$$ Finally, define the polynomial $q(x) = \frac{p(a(x))}{2\delta}$, which also has degree $d$.
    It follows that $q(x) \in [-1,1]$ for all $x \in [-1,1]$, and that $q(x_0') \geq \frac{1-\delta}{2\delta}$.
    
    Applying \Cref{extremalcheby} to $q$, we see that $|Q_d(x_0')| \geq 2^{1-d} q(x_0') \geq \frac{1- \delta}{2^d\delta}$. Furthermore, by \Cref{chebyasympt} we have that $|Q_d(x_0')| \leq 2^{-d}e^{(\sqrt{2} + o(1)) d \sqrt{x_0' - 1}}$. Combining the two bounds yields:
    $$\frac{1-\delta}{2^d\delta} \leq 2^{-d} e^{(\sqrt{2} + o(1)) d \sqrt{x_0' - 1}} = 2^{-d}e^{(2 + o(1)) d \sqrt{\log(\delta^{-1}) / (B-\log(\delta^{-1}))}}.$$
    Taking logs of both sides and rearranging gives the desired result.
	\end{proof}

    Now we can establish \Cref{estimatedegree2}.

	\begin{proof}[Proof of \Cref{estimatedegree2}]
	
	The proofs of the estimates on $d_{B; \delta} (e^{-x})$ in the first and second cases, when either $B = o \big( \log (\delta^{-1}) \big)$ and $B = \Theta \Big( \log (\delta^{-1}) \Big)$ are entirely analogous to those for $d_{B; \delta} (e^x)$ shown in \Cref{estimatedegree1} above. Therefore, they are omitted.
	
	So, let us assume that $B = \omega \big( \log (\delta^{-1}) \big)$, and let $d > 0$ be some integer with 
	\begin{flalign*}
	d = \sqrt{\gamma B \log (\delta^{-1})} = \sqrt{2 \gamma \lambda \log (\delta^{-1})},
	\end{flalign*}
	
	\noindent where $\gamma$ is uniformly bounded above and below. Observe since $B = \omega \big( \log (\delta^{-1}) \big)$ that $d = o (B)$, and so \Cref{psivlambdakappa} implies that 
	\begin{flalign*}
	\Psi_{d, \lambda} - \lambda = -\displaystyle\frac{d^2}{2 \lambda} \big( 1 + o(1) \big).
	\end{flalign*}
	
	Now, let us first approximate $d_{B; \delta} (e^{-x})$ by $( 1 + o(1) \big) \sqrt{B \log (\delta^{-1})}$ in the regime where $B \le \delta^{-o(1)}$. To lower bound it, suppose that $\gamma < 1$ (and is uniformly bounded away from $1$). Then, 
	\begin{flalign*}
	(\lambda + d)^{-3/2} \exp (\Psi_{d, \lambda} - \lambda) & \ge (\lambda + d)^{-3/2} \exp \Big( -\displaystyle\frac{d^2}{2\lambda} \big( 1 + o (1) \big) \Big) \\
	& \ge (\lambda + d)^{-3/2} \exp \Big( -\gamma \log (\delta^{-1}) \Big) \ge \delta^{\gamma + o(1)},
	\end{flalign*}
	
	\noindent where in the last bound we used the fact that $d = o (\lambda)$ and that $\lambda = \frac{B}{2} \le \delta^{-o(1)}$. Hence, by the lower bound in the second part of \Cref{pexponential2}, we find that $d_{B; \delta} (e^{-x}) \ge \big( 1 + o (1) \big) \sqrt{ B \log (\delta^{-1})}$.
	
	To upper bound $d_{B; \delta} (e^{-x})$ for $B \le \delta^{-o(1)}$, assume that $\gamma > 1$ (and is uniformly bounded away from $1$). Then,
	\begin{flalign*}
	\exp (\Psi_{d, \lambda} - \lambda) & \le (\lambda + d)^2 \exp \Big( -\displaystyle\frac{d^2}{2\lambda} \big( 1 + o (1) \big) \Big) \\
	& \le (\lambda + d)^2 \exp \Big( - \log (\delta^{-1}) \big( \gamma - o(1) \big) \Big) \le (\lambda + d)^2 \delta^{\gamma - o(1)} < \delta,
	\end{flalign*}
	
	\noindent and so again by \Cref{pexponential2} we deduce that $d_{B; \delta} (e^{-x}) \le \big( 1 + o(1) \big) \sqrt{B \log (\delta^{-1})}$. Together, these upper and lower bounds imply that $d_{B; \delta} (e^{-x}) = \big( 1 + o(1) \big) \sqrt{B \log (\delta^{-1})}$ when $B = \omega \big( \log (\delta^{-1}) \big)$ and $B \le \delta^{-o(1)}$.
	
	It remains to show that $d_{B; \delta} (e^{-x}) = \Theta \big( \sqrt{B \log (\delta^{-1})} \big)$ when $B \ge \delta^{-\Omega(1)}$. The lower bound (with implicit constant $\frac{1}{2} + o(1)$) was shown by \Cref{degreeestimateb}, so we must verify the upper bound. To that end, we assume $\gamma > 1$ (uniformly bounded away from 1) and observe that $B \ge 2d$ for $B \ge \delta^{-\Omega (1)}$. Then, the upper bound from \eqref{exponential3p} applies; since the first part of \Cref{psivlambdakappa}
	\begin{flalign*}
	\exp ( \Psi_{d, \lambda} - \lambda) \le \exp \Big( - \displaystyle\frac{d^2}{2 \lambda} \big( 1 + o(1) \big) \Big) \le \exp \Big( - \log (\delta^{-1}) \big( \gamma - o(1) \big) \Big) \le \delta^{\gamma - o(1)} \le \delta,
    \end{flalign*}
	
	\noindent we deduce that $d_{B; \delta} (e^{-x}) \le \big( 1 + o(1) \big) \sqrt{B \log (\delta^{-1}))}$, which establishes the theorem.
	\end{proof}

\section{Applications to Batch Gaussian KDE} \label{sec:apps}

In this section we prove the statements given in Section~\ref{section:gKDE} above about the Batch Gaussian KDE problem.

\begin{proof}[Proof of \Cref{kdecorr}]
Let $B \ge 1$ and $\delta \in (0, 1)$ denote real numbers, and suppose $p(z)$ is a univariate polynomial of degree $d = d_{B;\delta}(e^{-x})$ such that
\begin{flalign*}
\displaystyle\sup_{z \in [0, B]} \big| p(z) - e^{-z} \big| \le \delta.
\end{flalign*}
As discussed in the preamble to \Cref{kdecorr}, it suffices to output the vector $\tilde{K} \cdot w$, where $\tilde{K} \in \R^{n \times n}$ is the matrix given by $\tilde{K}[i,j] = p(\sum_{\ell=1}^m (x_\ell^{(i)} - y_\ell^{(j)})^2)$.

For two points $x, y \in \R^m$, we have that $p(\sum_{\ell=1}^m (x_\ell - y_\ell)^2)$ is a polynomial of degree at most $2d$ in the $2m$ variables in $V := \{x_1,\ldots,x_m,y_1,\ldots,y_m\}$. Thus, letting $M_d := \{ a : V \to \Z^{\geq 0} \mid \sum_{v \in V} a(v) \leq 2d \}$, which has $|M_d| = \binom{2m+2d}{2d} = M$, we can write \begin{align}\label{expandpoly2}p(\|x-y\|_2^2) = \sum_{a \in M_d} b_a \cdot \prod_{v \in V} v^{a(v)}\end{align}
for appropriate coefficients $b_a \in \R$ which can all be computed in $O\left( m \cdot \binom{2m+2d}{2d} \right)$ time by expanding $p$.

It follows that the matrix $\tilde{K}$, whose entries are given by the expression (\ref{expandpoly2}), has rank at most $|M_d| = M$, and furthermore that the matrices $X,Y \in \R^{n \times M}$ such that $\tilde{K} = X \times Y^T$ can be computed in $O\left( n \cdot m \cdot M \right)$ time by evaluating all the monomials in $M_G$ on the partial assignments of setting $x \leftarrow x^{(i)}$ for each $i \in [n]$ (to compute $X$) and setting $y \leftarrow y^{(j)}$ for each $j \in [n]$ (to compute $Y$). We can then, as desired, compute $\tilde{K}w = X(Y^T w)$ in time $O\left( n \cdot m \cdot M \right)$.
\end{proof}

Before proving \Cref{sethlb}, we give the necessary background about approximate nearest neighbor search. 

\begin{problem}[$(1+\eps)$-Approximate Hamming Nearest Neighbor]
For $\eps > 0$, and positive integers $n,m$, given as input vectors $a_1, \ldots, a_n, b_1, \ldots, b_n \in \{0,1\}^m$, as well as an integer $t \in [0,m]$, one must:
\begin{itemize}
    \item return `true' if there are $i,j \in [n]$ such that $|a_i - b_j| \leq t$,
    \item return `false' if, for every $i,j \in [n]$, we have $|a_i - b_j| > (1+\eps)\cdot t$,
\end{itemize}
and one may return either `true' or `false' otherwise. (Here, $|a_i - b_j|$ denotes the Hamming distance between $a_i$ and $b_j$.)
\end{problem}

\begin{theorem}[{\cite{rubinstein2018hardness}}] \label{aviadhardness}
Assuming SETH, for every $q>0$, there are $\eps>0$ and $C>0$ such that $(1+\eps)$-Approximate Hamming Nearest Neighbor in dimension $m = C \log n$ requires time $\Omega(n^{2-q})$.
\end{theorem}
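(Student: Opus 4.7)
Since Theorem \ref{aviadhardness} is attributed to Rubinstein \cite{rubinstein2018hardness}, my ``proof proposal'' is really a sketch of how I would reconstruct the argument of that paper. The plan is to reduce from the Orthogonal Vectors problem (OV), which is SETH-hard: for every $q > 0$ there exists $C'$ such that OV on $n$ pairs of vectors in dimension $C' \log n$ requires time $\Omega(n^{2-q'})$ for $q' > q$ under SETH. The task is to convert such an OV instance into a $(1+\eps)$-Approximate Hamming Nearest Neighbor instance in dimension $C \log n$ with the property that a yes-instance of OV (some pair $(a_i,b_j)$ with $\langle a_i,b_j\rangle = 0$) gives some pair at Hamming distance $\leq t$, whereas a no-instance (every $\langle a_i,b_j\rangle \geq 1$) forces every pair to have Hamming distance $> (1+\eps) t$.

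The key step I would carry out is the construction of a \emph{distributed PCP} encoding. Each OV vector $a_i$ (resp.\ $b_j$) would be replaced by a string in $\{0,1\}^M$, with $M = n^{o(1)} \cdot \polylog(n)$, whose coordinates are indexed by transcripts of a two-party PCP verifier for the inner-product-equals-zero predicate. Coordinate $q$ of Alice's encoding of $a_i$ would record Alice's response in the protocol on randomness $q$, and symmetrically for Bob's encoding of $b_j$. Completeness of the PCP ensures that if $\langle a_i, b_j\rangle = 0$ the two encodings agree on a large fraction of coordinates, giving Hamming distance at most $t$, while soundness ensures that if $\langle a_i, b_j \rangle \geq 1$ the two strings disagree on a constant fraction more coordinates, giving Hamming distance at least $(1+\eps) t$. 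An algorithm of time $O(n^{2-q})$ for $(1+\eps)$-Approximate Hamming NN on the resulting $n \cdot n^{o(1)}$-entry instance then yields an $O(n^{2 - q + o(1)})$ OV algorithm, contradicting SETH.

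The main obstacle, and the reason Rubinstein's proof is nontrivial, lies in realizing the distributed PCP with the right quantitative parameters. One needs (i) a two-prover protocol for the OV predicate with constant soundness gap, (ii) polylogarithmic randomness so that the encoding dimension is $M = n^{o(1)}$ (and in particular $C \log n$ after taking logs), and (iii) small alphabet, so that after binary encoding the Hamming gap is preserved up to constants. These requirements are met via an algebraic PCP built on Reed--Muller codes over a carefully chosen field, together with alphabet-reduction gadgets. Once these pieces are in place, the rest of the argument (composing the reduction, checking that parameters scale correctly, and translating the SETH-hardness of OV) is routine bookkeeping.
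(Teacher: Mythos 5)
The paper does not reprove this theorem; it is imported verbatim from Rubinstein's work, so there is no paper-internal argument to compare against. Your reconstruction correctly identifies the overall shape of Rubinstein's argument: reduce from OV (which is SETH-hard), encode the vectors via a distributed-PCP / MA communication protocol so that orthogonal and non-orthogonal pairs are separated by a constant multiplicative gap in Hamming distance, and then invoke the SETH lower bound for OV.

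The genuine gap is in the dimension accounting. You assert an encoding dimension $M = n^{o(1)} \cdot \polylog(n)$, but the theorem demands dimension exactly $m = C \log n$, and your parenthetical ``(and in particular $C \log n$ after taking logs)'' does not bridge this: taking the logarithm of the dimension is not an operation you can perform on an instance of Hamming nearest neighbor, and in any case $\log\bigl(n^{o(1)}\bigr) = o(\log n)$, not $\Theta(\log n)$. Pinning the encoding dimension down to a genuine $\Theta(\log n)$ while preserving a constant-factor Hamming gap is precisely the central quantitative difficulty of Rubinstein's construction (it requires simultaneously tuning the PCP's randomness, alphabet size, and query complexity, together with a gap-preserving binarization), and cannot be waved off as ``routine bookkeeping.'' As written, your sketch would at best yield hardness in dimension $n^{o(1)}$; to recover the stated theorem you would need either to push the PCP parameters further, or to supply and verify a gap-preserving dimension reduction on the Hamming cube that lands on $O(\log n)$ coordinates.
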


\begin{proof}[Proof of \Cref{sethlb}]
For any $q>0$, let $\eps,C>0$ be the corresponding constants from \Cref{aviadhardness}. We will prove that Batch Gaussian KDE requires $\Omega(n^{2-q})$ time when $m = C \log n$, $\delta = n^{-2/\eps-1}/4$, and $B=2C \cdot c^{-1} \cdot \eps^{-1} \log n$ for a constant $c$ depending only on $q$ that we will determine later. We will prove this by showing that $(1+\eps)$-Approximate Hamming Nearest Neighbor in dimension $m$ can be solved using $o(n^{2-q})$ time and one call to Batch Gaussian KDE with these parameters, which implies the desired result when combined with \Cref{aviadhardness}.

Let $m = C \log n$, let $a_1, \ldots, a_n, b_1, \ldots, b_n \in \{0,1\}^m$ be the input vectors to $(1+\eps)$-Approximate Hamming Nearest Neighbor, and let $t \in [0,m]$ be the target distance. First, if $t < c \log n$, we will simply brute-force for the answer in the following way: we store the vectors $b_1, \ldots, b_n$ in a lookup table, then for each $i \in [n]$, we iterate over every vector $b' \in \{0,1\}^m$ which has Hamming distance at most $t$ from $a_i$ and check whether it is in the lookup table. The running time will be only $O\left( n \cdot \binom{m}{t} \right)$ when $c$ is small enough. In particular,
$$\binom{m}{t} \leq \binom{C \log n}{c \log n} \leq n^{f(C,c)}$$
for some function $f : \R_{>0} \times \R_{>0} \to \R_{>0}$ with the property that, for any fixed $C>0$, we have $\lim_{c \to 0}f(C,c) = 0$. We can thus pick a sufficiently small constant $c>0$, depending only on $q$ and $C$ (which itself depends only on $q$) such that this entire brute-force takes $o(n^{2-q})$ time.

Henceforth, we assume that $t = c' \log n$ for some $C \geq c' > c$. Let $k = \sqrt{2(c' \eps)^{-1}}$.
Using our algorithm for Batch Gaussian KDE with the given parameters applied to the input points $ka_1, \ldots, ka_n, kb_1, \ldots, kb_n$  (i.e., the input points rescaled so they lie in $\{0,k\}^m$), and the weight vector $w = \vec{1}\in \R^n$, the all-1s vector, we get as output a vector $v \in \R^n$ such that, for all $i \in [n]$, we have the guarantee that $$\bigg| v_i - \sum_{j=1}^n e^{-k^2 \cdot \| x_i - y_j \|_2^2} \bigg| < n^{-1/\eps}/4.$$
Notice that, for $x_i, y_j \in \{0,1\}^m$, the quantity $\| x_i - y_j \|_2^2$ is equal to the Hamming distance $|x_i - y_j|$ between $x_i$ and $y_j$. In particular, we have $\max_{i,j \in [n]} \|kx_i - ky_j\|_2^2 \leq k^2 m = (2C c'^{-1} \eps^{-1} ) \log n \leq (2C c^{-1} \eps^{-1}) \log n = B$, so this was a valid application of our given algorithm.

Suppose first that there are an $i,j \in [n]$ such that $|x_i - y_j| \leq t$. It follows that $v_i \geq e^{-k^2t} - n^{-2/\eps}/4 = \frac34 n^{-2/\eps}$.

Suppose second that, for all $i,j \in [n]$, we have $|x_i - y_j| > t(1+\eps)$. It follows that, for all $i \in [n]$, we have $v_i \leq n \cdot e^{-k^2t(1+\eps)} + n^{-1/\eps}/4 = n^{-2/\eps-1} + n^{-2/\eps}/4 < \frac34 n^{-2/\eps}$ for all $n>2$.

Hence, by checking whether any entry of $v$ is at least $\frac34 n^{-1/\eps}$, we can distinguish the two cases, as desired.
\end{proof}

We conclude with the proof of \Cref{fkappa}. 

\begin{proof}[Proof of \Cref{fkappa}]

By \Cref{kdecorr}, it suffices to show that there exists a constant $c = c(\alpha, \beta) > 0$ such that 
\begin{flalign} 
\label{dm}
\binom{2d + 2m}{2d} \le n^{c \log \log \kappa^{-1} / \log \kappa^{-1}},
\end{flalign} 

\noindent for $m = \alpha \log n$ and $d = d_{B; \delta} (e^{-x})$, where $\delta = n^{-\beta}$ and $B = \kappa \log n$. In particular, $B = 2 r \log (\delta^{-1})$, where $r = \frac{\kappa}{2 \beta}$. Thus, \Cref{estimatedegree2} gives
\begin{flalign}
\label{dkappa} 
2 d = \big(\kappa \nu + o(1) \big) \log n,
\end{flalign}

\noindent where $\nu = \nu \big( \frac{\kappa}{2 \beta} \big) > 0$ is the unique positive solution to $G (\nu) = 1 - r^{-1} = 1 - \frac{2 \beta}{\kappa}$ (where $G$ is given by \eqref{gx}). 

To establish \eqref{dm}, we must analyze the behavior of $d$ and thus of $\nu$. We claim that
\begin{flalign}
\label{nuestimate}
\nu = \displaystyle\frac{2 \beta}{\kappa \log \kappa^{-1}} + O \bigg( \displaystyle\frac{\log \log \kappa^{-1}}{\kappa (\log \kappa^{-1})^2} \bigg).
\end{flalign}

Let us quickly establish the corollary assuming \eqref{nuestimate}. To that end, denote
 \begin{flalign*} 
 x = \kappa \nu = \displaystyle\frac{2 \beta}{\log \kappa^{-1}} + O \bigg( \displaystyle\frac{\log \log \kappa^{-1}}{(\log \kappa^{-1})^2}\bigg).
 \end{flalign*} 
 
 \noindent Then, \eqref{dkappa} and a Taylor expansion together give
 \begin{flalign*}
 \log \binom{2m + 2d}{2d} & = \big( (2\alpha + x) \log (2 \alpha + x) - 2 \alpha \log (2 \alpha) - x \log x + o (1) \big) \log n \\
 & = \big( x + x \log (2 \alpha + x) - x \log x + O (x^2) \big) \log n = x |\log x| \Bigg( 1 + O \bigg( \displaystyle\frac{1}{|\log x|} \bigg) \Bigg) \log n,
 \end{flalign*} 
 
 \noindent where the implicit constant in the error depends on $\alpha$ and $\beta$. Thus,
 \begin{flalign*} 
 \binom{2 m + 2 d}{d} \le n^{x |\log x| + O (x)} = n^{2 \beta \log \log \kappa^{-1} / \log \kappa^{-1} + O(1/ \log \kappa^{-1})},
 \end{flalign*}
 
 \noindent which verifies \eqref{dm} and thus the corollary.

It thus remains to verify \eqref{nuestimate}. To that end, observe that 
\begin{flalign}
\label{kappa1} 
\displaystyle\frac{2 \beta}{\kappa} = 1 - G (\nu) = 1 - \sqrt{\nu^2 + 1} - \nu \log \big( \sqrt{\nu^2 + 1} - \nu \big) = \nu \log \big( \sqrt{\nu^2 + 1} + \nu \big) - \sqrt{\nu^2 + 1} + 1.
\end{flalign}

\noindent In particular, this implies that
\begin{flalign*}
\nu \log \big( \sqrt{\nu^2 + 1} + \nu \big) \ge \displaystyle\frac{2 \beta}{\kappa}, \qquad \text{so $\nu = \Omega \bigg( \displaystyle\frac{1}{\kappa \log \kappa^{-1}} \bigg)$},
\end{flalign*}

\noindent where the implicit constant depends on $\beta$. The above lower bound enables us to Taylor expand the right side of \eqref{kappa1}. This gives
\begin{flalign*}
\nu \log \nu + \nu (\log 2 - 1) + O(\nu^{-1}) = \displaystyle\frac{2 \beta}{\kappa},
\end{flalign*} 

\noindent from which \eqref{nuestimate} quickly follows. As mentioned above, this verifies the corollary.
\end{proof}

\section*{Acknowledgements} 
 We would like to thank Alexei Borodin, Lijie Chen, Andrei Martinez-Finkelshtein, Sushant Sachdeva, Paris Siminelakis, Roald Trigub, Ryan Williams, and anonymous reviewers for helpful discussions and advice throughout this project.

\bibliographystyle{alpha}
\bibliography{papers}

\newcommand{\etalchar}[1]{$^{#1}$}
\begin{thebibliography}{AHPV{\etalchar{+}}05}

\bibitem[ACSS20]{alman2020algorithms}
Josh Alman, Timothy Chu, Aaron Schild, and Zhao Song.
\newblock Algorithms and hardness for linear algebra on geometric graphs.
\newblock In {\em 2020 IEEE 61st Annual Symposium on Foundations of Computer
  Science (FOCS)}, pages 541--552, 2020.

\bibitem[ACW16]{alman2016polynomial}
Josh Alman, Timothy~M Chan, and Ryan Williams.
\newblock Polynomial representations of threshold functions and algorithmic
  applications.
\newblock In {\em 2016 IEEE 57th Annual Symposium on Foundations of Computer
  Science (FOCS)}, pages 467--476. IEEE, 2016.

\bibitem[AHPV{\etalchar{+}}05]{agarwal2005geometric}
Pankaj~K Agarwal, Sariel Har-Peled, Kasturi~R Varadarajan, et~al.
\newblock Geometric approximation via coresets.
\newblock {\em Combinatorial and computational geometry}, 52(1), 2005.

\bibitem[AKK{\etalchar{+}}20]{ahle2020oblivious}
Thomas~D Ahle, Michael Kapralov, Jakob~BT Knudsen, Rasmus Pagh, Ameya
  Velingker, David~P Woodruff, and Amir Zandieh.
\newblock Oblivious sketching of high-degree polynomial kernels.
\newblock In {\em Proceedings of the Fourteenth Annual ACM-SIAM Symposium on
  Discrete Algorithms}, pages 141--160. SIAM, 2020.

\bibitem[Amb05]{ambainis2005polynomial}
Andris Ambainis.
\newblock Polynomial degree and lower bounds in quantum complexity: Collision
  and element distinctness with small range.
\newblock {\em Theory of Computing}, 1(1):37--46, 2005.

\bibitem[AS04]{aaronson2004quantum}
Scott Aaronson and Yaoyun Shi.
\newblock Quantum lower bounds for the collision and the element distinctness
  problems.
\newblock {\em Journal of the ACM (JACM)}, 51(4):595--605, 2004.

\bibitem[AW15]{alman2015probabilistic}
Josh Alman and Ryan Williams.
\newblock Probabilistic polynomials and hamming nearest neighbors.
\newblock In {\em 2015 IEEE 56th Annual Symposium on Foundations of Computer
  Science}, pages 136--150. IEEE, 2015.

\bibitem[AWY14]{abboud2014more}
Amir Abboud, Ryan Williams, and Huacheng Yu.
\newblock More applications of the polynomial method to algorithm design.
\newblock In {\em Proceedings of the twenty-sixth annual ACM-SIAM symposium on
  Discrete algorithms}, pages 218--230. SIAM, 2014.

\bibitem[BIS17]{backurs2017fine}
Arturs Backurs, Piotr Indyk, and Ludwig Schmidt.
\newblock On the fine-grained complexity of empirical risk minimization: Kernel
  methods and neural networks.
\newblock In {\em Advances in Neural Information Processing Systems}, pages
  4308--4318, 2017.

\bibitem[BT15]{bun2015dual}
Mark Bun and Justin Thaler.
\newblock Dual lower bounds for approximate degree and markov--bernstein
  inequalities.
\newblock {\em Information and Computation}, 243:2--25, 2015.

\bibitem[BT21]{bun2021guest}
Mark Bun and Justin Thaler.
\newblock Guest column: Approximate degree in classical and quantum computing.
\newblock {\em ACM SIGACT News}, 51(4):48--72, 2021.

\bibitem[CGA15]{chen2015strategies}
Welin Chen, David Grangier, and Michael Auli.
\newblock Strategies for training large vocabulary neural language models.
\newblock {\em arXiv preprint arXiv:1512.04906}, 2015.

\bibitem[CS17]{charikar2017hashing}
Moses Charikar and Paris Siminelakis.
\newblock Hashing-based-estimators for kernel density in high dimensions.
\newblock In {\em 2017 IEEE 58th Annual Symposium on Foundations of Computer
  Science (FOCS)}, pages 1032--1043. IEEE, 2017.

\bibitem[CW16]{chan2016deterministic}
Timothy~M Chan and Ryan Williams.
\newblock Deterministic apsp, orthogonal vectors, and more: Quickly
  derandomizing razborov-smolensky.
\newblock In {\em Proceedings of the twenty-seventh annual ACM-SIAM symposium
  on Discrete algorithms}, pages 1246--1255. SIAM, 2016.

\bibitem[GR87]{greengard1987fast}
Leslie Greengard and Vladimir Rokhlin.
\newblock A fast algorithm for particle simulations.
\newblock {\em Journal of computational physics}, 73(2):325--348, 1987.

\bibitem[HL97]{hochbruck1997krylov}
Marlis Hochbruck and Christian Lubich.
\newblock On krylov subspace approximations to the matrix exponential operator.
\newblock {\em SIAM Journal on Numerical Analysis}, 34(5):1911--1925, 1997.

\bibitem[JCG{\etalchar{+}}17]{joulin2017efficient}
Armand Joulin, Moustapha Ciss{\'e}, David Grangier, Herv{\'e} J{\'e}gou, et~al.
\newblock Efficient softmax approximation for gpus.
\newblock In {\em International Conference on Machine Learning}, pages
  1302--1310. PMLR, 2017.

\bibitem[Lan50]{lanczos1950iteration}
Cornelius Lanczos.
\newblock {\em An iteration method for the solution of the eigenvalue problem
  of linear differential and integral operators}.
\newblock United States Governm. Press Office Los Angeles, CA, 1950.

\bibitem[LLM{\etalchar{+}}19]{lee2019finding}
Jasper~CH Lee, Jerry Li, Christopher Musco, Jeff~M Phillips, and Wai~Ming Tai.
\newblock Finding the mode of a kernel density estimate.
\newblock {\em arXiv preprint arXiv:1912.07673}, 2019.

\bibitem[MH03]{P}
J.~C. Mason and D.~C. Handscomb.
\newblock {\em Chebyshev polynomials}.
\newblock Chapman \& Hall/CRC, Boca Raton, FL, 2003.

\bibitem[MMS18]{musco2018stability}
Cameron Musco, Christopher Musco, and Aaron Sidford.
\newblock Stability of the lanczos method for matrix function approximation.
\newblock In {\em Proceedings of the Twenty-Ninth Annual ACM-SIAM Symposium on
  Discrete Algorithms}, pages 1605--1624. SIAM, 2018.

\bibitem[NS94]{nisan1994degree}
Noam Nisan and Mario Szegedy.
\newblock On the degree of boolean functions as real polynomials.
\newblock {\em Computational complexity}, 4(4):301--313, 1994.

\bibitem[NSGH14]{nilsson2014hardware}
Peter Nilsson, Ateeq Ur~Rahman Shaik, Rakesh Gangarajaiah, and Erik Hertz.
\newblock Hardware implementation of the exponential function using taylor
  series.
\newblock In {\em 2014 NORCHIP}, pages 1--4. IEEE, 2014.

\bibitem[OSV12]{orecchia2012approximating}
Lorenzo Orecchia, Sushant Sachdeva, and Nisheeth~K Vishnoi.
\newblock Approximating the exponential, the lanczos method and an o (m)-time
  spectral algorithm for balanced separator.
\newblock In {\em Proceedings of the forty-fourth annual ACM symposium on
  Theory of computing}, pages 1141--1160, 2012.

\bibitem[Phi13]{phillips2013varepsilon}
Jeff~M Phillips.
\newblock $\varepsilon$-samples for kernels.
\newblock In {\em Proceedings of the twenty-fourth annual ACM-SIAM symposium on
  Discrete algorithms}, pages 1622--1632. SIAM, 2013.

\bibitem[Pow67]{Pow}
Michael~JD Powell.
\newblock On the maximum errors of polynomial approximations defined by
  interpolation and by least squares criteria.
\newblock {\em The Computer Journal}, 9(4):404--407, 1967.

\bibitem[Rub18]{rubinstein2018hardness}
Aviad Rubinstein.
\newblock Hardness of approximate nearest neighbor search.
\newblock In {\em Proceedings of the 50th Annual ACM SIGACT Symposium on Theory
  of Computing}, pages 1260--1268, 2018.

\bibitem[Shi02]{shi2002approximating}
Yaoyun Shi.
\newblock Approximating linear restrictions of boolean functions.
\newblock In {\em Manuscript}. Citeseer, 2002.

\bibitem[SV14]{sachdeva2014faster}
Sushant Sachdeva and Nisheeth~K Vishnoi.
\newblock Faster algorithms via approximation theory.
\newblock {\em Foundations and Trends{\textregistered} in Theoretical Computer
  Science}, 9(2):125--210, 2014.

\bibitem[Sym19]{syminelakis2019fast}
Paraskevas Syminelakis.
\newblock {\em Fast Kernel Evaluation in High Dimensions: Importance Sampling
  and near Neighbor Search}.
\newblock Stanford University, 2019.

\bibitem[Tim94]{TAFRV}
A.~F. Timan.
\newblock {\em Theory of approximation of functions of a real variable}.
\newblock Dover Publications, Inc., New York, 1994.
\newblock Translated from the Russian by J. Berry, Translation edited and with
  a preface by J. Cossar, Reprint of the 1963 English translation.

\bibitem[Tre13]{AT}
Lloyd~N. Trefethen.
\newblock {\em Approximation theory and approximation practice}.
\newblock Society for Industrial and Applied Mathematics (SIAM), Philadelphia,
  PA, 2013.

\bibitem[YDGD03]{yang2003improved}
Changjiang Yang, Ramani Duraiswami, Nail~A Gumerov, and Larry Davis.
\newblock Improved fast gauss transform and efficient kernel density
  estimation.
\newblock In {\em Computer Vision, IEEE International Conference on}, volume~2,
  pages 464--464. IEEE Computer Society, 2003.

\end{thebibliography}

\end{document}